   \definecolor{azulf}{HTML}{0092D2}
 \newtheorem{thm}{Theorem}[section]
 \newtheorem{lem}[thm]{Lemma}
 \theoremstyle{definition}
 \newtheorem{defn}[thm]{Definition}
 \theoremstyle{remark}
 \newtheorem{rem}[thm]{Remark}
 \newtheorem*{ex}{Example}
 \numberwithin{equation}{section}
  \newtheorem*{preparation*}{\textbf{Notation of the theorem}}
    \newcommand{\etalchar}[1]{$^{#1}$}
    \newcommand{\cV}{\mathscr{V}}
    \newcommand{\Aut}{\mathrm{Aut}}
    \newcommand{\Autc}{\Aut_{\mathrm{c}}}
    \renewcommand{\and}{\mbox{and}}
    \newcommand{\Tr}{\mathrm{Tr}}
    \newcommand{\mtr}[1]{\mathrm{#1}}
    \newcommand{\mtf}[1]{\mathfrak{#1}}
    \newcommand{\A}{\mathcal{A}}
    \newcommand{\dif}[1]{\mathrm{d}#1}
    \newcommand{\re}{\mathbb{R}}
    \newcommand{\dervpar}[2]{\frac{\partial #1}{\partial #2}}
    \newcommand{\dervfunc}[2]{\frac{\delta #1}{\delta #2}}
    \newcommand{\si}{\sigma}
    \newcommand{\G}{\mathcal{G}}
    \newcommand{\B}{\mathcal{B}}
    \newcommand{\C}{\mathbb{C}}
    \newcommand{\ii}{\mathrm{i}}
    \newcommand{\ee}{\mathrm{e}}
    \newcommand{\inv}{^{-1}} 
    \newcommand{\mtc}[1]{\mathcal{#1}} 
    \newcommand{\Z}{\mathbb{Z}}
    \newcommand{\N}{\mathbb{N}}
    \newcommand{\where}{\mbox{where}\,\,}
    \newcommand{\mtb}[1]{\mathbb{#1}}
    \newcommand{\F}{\mathcal{F}}
    \newcommand{\R}{\mathcal R}
    \newcommand{\Df}{\mathbb{\mathcal{D}}}
    \newcommand{\Sym}{\mathfrak{S}}
    \newcommand{\hp}[1]{^{(#1)}}
    \renewcommand{\phi}{\varphi}
    \newcommand{\Grph}[1]{\textsf{Grph}_{#1}}
     \newcommand{\dvGrph}[1]{\Grph{#1}^{\amalg,\mathrm{cl}}}
    \newcommand{\suml}{\sum\limits}
    \newcommand{\fder}[2]{\frac{\delta #1}{\delta #2}}
    \newcommand{\bJ}{\bar J}
    \newcommand{\kthree}{\raisebox{-.2\height}{\includegraphics[height=2ex]{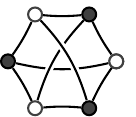}}}
    \newcommand{\kthreecik}{\raisebox{-.2\height}{\includegraphics[height=1.8ex]{graphs/3/Item6_K33.pdf}}}
    \newcommand{\GDmelon}{G\hp{2}_{\raisebox{-.33\height}{\includegraphics[height=2.2ex]{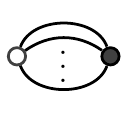}}}}
    \newcommand{\Gmelon}{G\hp{2}_{\raisebox{-.33\height}{\includegraphics[height=1.8ex]{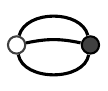}}}}
    \newcommand{\Gcc}{G\hp{4}_{\! \raisebox{-.2\height}{\includegraphics[height=1.8ex]{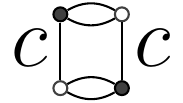}}}}
    \newcommand{\vi}{\raisebox{-.22\height}{\includegraphics[height=1.8ex]{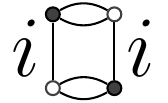}}}
     \newcommand{\vcv}{\raisebox{-.22\height}{\includegraphics[height=1.8ex]{graphs/3/Item4_Vcv.pdf}}}
    \newcommand{\vuno}{\raisebox{-.322\height}{\includegraphics[height=2.3ex]{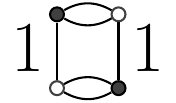}}}
    \newcommand{\vdos}{\raisebox{-.322\height}{\includegraphics[height=2.3ex]{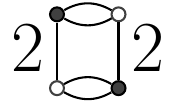}}}
    \newcommand{\vtres}{\raisebox{-.322\height}{\includegraphics[height=2.3ex]{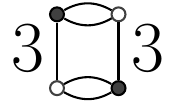}}}
    \newcommand{\vunoup}{\raisebox{-.25\height}{\includegraphics[height=2.3ex]{graphs/3/Item4_V1v.pdf}}}
    \newcommand{\vunito}{\raisebox{-.322\height}{\includegraphics[height=1.66ex]{graphs/3/Item4_V1v.pdf}}}
    \newcommand{\Gcmm}{G\hp{4}_{\raisebox{-.33\height}{\includegraphics[height=1.8ex]{graphs/3/Item2_Melon}}
    |\raisebox{-.33\height}{\includegraphics[height=1.8ex]{graphs/3/Item2_Melon}}}}
    \newcommand{\Gsmmm}{G\hp{6}_{\raisebox{-.33\height}{\includegraphics[height=1.8ex]{graphs/3/Item2_Melon}}
    |\raisebox{-.33\height}{\includegraphics[height=1.8ex]{graphs/3/Item2_Melon}}
    |\raisebox{-.33\height}{\includegraphics[height=1.8ex]{graphs/3/Item2_Melon}}}}
    \newcommand{\Gsmc}{G\hp{6}_{\raisebox{-.33\height}{\includegraphics[height=1.8ex]{graphs/3/Item2_Melon}}
    |\raisebox{-.34\height}{\includegraphics[height=1.8ex]{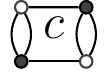}}|}}
     \newcommand{\Cubo}{ \raisebox{-.32\height}{\includegraphics[height=6ex]{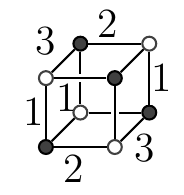}}}
    \newcommand{\Sint}{V}
    \newcommand{\fderJ}[1]{\fder{}{J{\phantom{a}}}_{\!\!\!\!\!{#1}}}
    \newcommand{\fderbJ}[1]{\fder{}{\bJ{\phantom{a}}}_{\!\!\!\!\!{#1}}}
    \newcommand{\jj}{[J,\bJ]}
    \newcommand{\mm}{\mathrm{m}}
    \newcommand{\xb}{\mathbf{x}}
    \newcommand{\yb}{\mathbf{y}}
    \newcommand{\zb}{\mathbf{z}}
    \newcommand{\Xb}{\mathbf{X}}
    \renewcommand{\sb}{\mathbf{s}}
    \newcommand{\uni}{\mtr{U}}
    \renewcommand{\sec}[1]{Sect. \ref{#1}}
    \def\moverlay{\mathpalette\mov@rlay}
    \def\mov@rlay#1#2{\leavevmode\vtop{%
       \baselineskip\z@skip \lineskiplimit-\maxdimen
       \ialign{\hfil$\m@th#1##$\hfil\cr#2\crcr}}}
    \newcommand{\charfusion}[3][\mathord]{
        #1{\ifx#1\mathop\vphantom{#2}\fi
            \mathpalette\mov@rlay{#2\cr#3}
          }
        \ifx#1\mathop\expandafter\displaylimits\fi}
    \newcommand{\FM}{\mathrm{FM}}
    \newcommand{\hf}{\mathfrak{h}}
    \newcommand{\Calc}{\mathscr{C}(\hf)}
    \newcommand{\FMh}{\FM(\hf)}
    \newcommand{\cequal}{ \stackrel{\mbox{\scalebox{.6} {c}}}{\sim}}
    \newcommand{\bcequal}{ \stackrel[{\,\!}^{\mbox{\scalebox{.6}{c}}}]{}{\sim}  }
    \newcommand{\logo}[4]{\raisebox{-.#4\height}{\includegraphics[height=#3 ex]{graphs/3/Logo#1_#2.pdf}}}
    \newcommand{\meloncik}{\raisebox{-.23\height}{\includegraphics[height=1.8ex]{graphs/3/Item2_Melon}}}
    \newcommand{\balita}{ \raisebox{.53\height}{\hspace{2pt}\tikz{ \fill[black] (0,0) circle (1.3pt);}}\,}
    \newcommand{\meloncito}{\logo{2}{Melon}{2}{13}}
    \newcommand{\J}{\mathbb{J}}
\newcommand{\itemB}{\item}
\newcommand{\itemW}{\item}
\begin{document}

%


\title[Graph calculus and disconnected-$\partial$ SDE in quartic TFT]
 {Graph calculus and the  
 \\ disconnected-boundary Schwinger-Dyson
 \\ equations of quartic tensor field theories      
}

\author[C. I. P\'erez-S\'anchez]{Carlos I. P\'{e}rez-S\'{a}nchez}

\address{%
Mathematisches Institut der Westf\"alischen  
 Wilhelms-Universit\"at \\
 Einsteinstra\ss e 62, 
 48149 M\"unster, Germany \\
 \& \\
Faculty of Physics, University of Warsaw$^{\star}$\\ 
 ul. Pasteura 5, 
 02-093 Warsaw, Poland\footnote{${}^\star$ Current affiliation.}
}

\email{cperez@fuw.edu.pl}

\subjclass{Primary 81Txx; Secondary 20Nxx, 05Exx}

\keywords{Tensor Models, Quantum Gravity, Quantum Field Theory, Schwinger-Dyson equations, Matrix Models, Random Maps, Algebras and Rings, Tutte Equations}

\date{June 2019}
 

\begin{abstract}
 
    Tensor field theory (TFT) focuses on quantum field theory aspects of random tensor models, a quantum-gravity-motivated generalisation of random matrix models.
        The TFT correlation functions have been shown to
        be classified by graphs that describe the geometry of the boundary states, the so-called boundary graphs.
        These graphs can be disconnected, although the correlation functions are themselves connected. 
        In a recent work, the Schwinger-Dyson equations for an arbitrary albeit connected boundary were obtained. 
        Here, we introduce the multivariable graph calculus in order to derive the missing equations for all 
        correlation functions with disconnected boundary, thus completing the Schwinger-Dyson pyramid for quartic melonic (`pillow'-vertices) models in arbitrary rank. 
        We first study finite group actions that are parametrised by 
        graphs and build the graph calculus on a suitable quotient of the  
        monoid algebra $\A[G]$ corresponding to a certain function space $\A$ and  
        to the free monoid $G$ in finitely many graph variables;
        a derivative of an element of $\A[G]$ with respect to a graph yields its corresponding group action on $\A$.
        The present result and the graph calculus have three potential applications: the non-perturbative 
        large-$N$ limit of tensor field theories, the solvability of the theory
        by using methods that generalise the topological recursion to the TFT setting
        and the study of 
        `higher dimensional maps' via Tutte-like equations. 
        In fact, we also offer a term-by-term comparison between
        Tutte equations and the present Schwinger-Dyson equations.
    
\end{abstract}

\maketitle
\newpage
  \fontsize{11.2}{19}\selectfont  
     \tableofcontents
    \fontsize{11.1}{14.5}\selectfont 
\thispagestyle{plain}
  \section[sec]{Introduction and motivation}
    
  The quest for laws of physics near the Planck scale 
  leads some quantum gravitologist and quantum cosmologists to 
    replace the smooth space-time paradigm 
with new geometrical structures that are suitable for said energy scale. 
Those new structures include discretisation of space-time (e.g. causal dynamical triangulations \cite{CDTall}), 
the algebraisation of space-time (e.g. noncommutative geometry \cite{CC,MarcolliNCGcosmology}),
just to name some\footnote{See for instance \cite{MapQG} for a thorough classification.}.
Already the sole description of a space-time by a single mathematical object is expected to require
therefore novel geometrical ideas.
\par 
If one adopts a path-integral approach, the 
exploration of the quantum theory of space-time
requires additionally a multi-geometry description, to which `off-shell' geometries 
also contribute. Each of these geometries $\xi$ is weighted
via $ \exp(\ii S(\xi)/\hbar)\dif\xi$ by a `classical' action $S(\xi)$, 
bounded to 
resemble the Einstein-Hilbert action in the classical limit,
 in which $\xi$ starts looking like a Riemannian or Lorentzian manifold. 
\par 
Random tensors \cite{Ambjorn,GurauRyan,RTM_QG}  and related theories construct 
these kinds of measures $\dif \xi$, offer precisely a built-in 
description of both \textit{random} and \textit{discrete} geometry in arbitrary dimensions, and 
therefore constitute a tool to test models of background independent quantum gravity  (see e.g. \cite{Eichhorn:2018ylk}). 
Interest in the study of Euclidean quantum field theory (QFT) aspects of random tensors 
leads to \textit{tensor field theory} (TFT) \cite{4renorm,3Dbeta,wgauge,RivasseauVignes2019}, the matter of this article. 
Usually TFT fits in a `QFT + $ \epsilon$' framework, that is to say a
conservative modification of QFT, which one can 
pursue in the perturbative or non-perturbative approaches.
\par 

Non-perturbative TFT deals with the geometry of boundary states.
Single geometries in TFT are represented by certain decorated
graphs called \textit{coloured graphs}; this decoration is precisely the information that 
allows the construction of PL-manifolds from graphs.
Bulk-geometry graphs ---Feynman diagrams of TFT--- have a 
colour more than the graphs that triangulate the 
boundary geometries, which are called therefore 
\textit{boundary graphs} ($\partial$-graphs, for short). The two-fold purpose of this article is to define
in abstract way a calculus with coloured boundary-graph variables,
and shortly afterwards, to apply this construct to a 
particular problem in non-perturbative TFT.
\par 

The single-variable graph calculus has been used there as a toolkit for non-perturbative field tensor theory, leading to the full 
Ward-Takahashi identity \cite{fullward}.  
The single-variable graph calculus
allows to define each correlation function as a graph derivative of the 
free energy. 
Boundary graphs turn out to classify the correlation functions
of tensor field theories; these obey 
    analytic\footnote{We write `analytic' as opposed to algebraic SDE
      for expectation values. We conceive tensor field theory as a
      discretisation (therefore, 0-dimensional) of a $D$-dimensional
      quantum field theory. The $2k$-point correlation functions are
      thus functions of $\Z^{k\times D} \to \C$ which in the continuum
      limit pass to functions $\re^{k\times D}\to \C$, and render the SDE
       integro-differential equations.} Schwinger-Dyson Equation
    (SDE). \par 

Each and every analytic SDE for a \textit{connected }correlation function corresponding to a
\textit{connected}, but otherwise arbitrary, 
boundary graph was presented in \cite{SDE} in terms of 
a general formula that relates a 
given correlation function with its neighbourings (relative to the number of points)
 in terms of simple graph operations. In order to obtain these results,
    the single-variable graph calculus was a useful tool, which however 
   does not assist any longer in the 
  the derivation of SDE for connected correlation functions with disconnected boundary. This derivation
  requires a multivariable graph calculus, the variables being the
    different boundary components.
    
    We introduce graph-group actions as the
    basis of the multivariable calculus, and study
    their generating functionals. 
    Concretely, we obtain formulae for
    the graph derivative of products of functionals, i.e.
        the corresponding Leibniz rule that generalises
        \begin{equation} \label{eq:Leibniz}
     \partial^{\boldsymbol\alpha} (F\cdot G) =
    \sum_{0\leq \boldsymbol \gamma\leq \boldsymbol \alpha}
     \binom{ \boldsymbol \alpha}{ \boldsymbol \gamma } 
     \partial^{\boldsymbol \gamma}F \cdot 
     \partial^{\boldsymbol {\alpha}-\boldsymbol \gamma}G \quad 
    (\mbox{in multi-index notation})\,,
    \end{equation}
    when one replaces usual partial derivatives $\partial_{\mu}$ (say, with 
    $x^\mu$ a coordinate of $\re^n$, and $F,G$ real-valued smooth 
    functions there) by  derivatives $\partial_g=\partial/\partial g $ 
    with respect to a graph $g$.  The formula for graphs takes a different form,
    but reduces, as it should, to \eqref{eq:Leibniz}  when 
    one replaces functionals with functions and 
    simultaneously considers trivial group actions.\par
    We prove that this abstract structure underlies tensor models functionals and
    use it to find a general formula for the SDE of the quartic
    `pillow'-model, for the \textit{connected} correlation functions
    with arbitrary \textit{disconnected boundary} (abbr. disconnected-$\partial$). This is the
    missing piece that complements the connected-boundary SDE-pyramid obtained in
    \cite{SDE}. To have it complete is important for the analysis of
    the non-perturbative large-$N$ limit of tensor field
    theories. Moreover, although it is not clear which recursion
    should generalise the topological recursion \cite{EynardTopologicalRecursion}, it is clear that the 
    disconnected-$\partial$ correlation functions play an important role\footnote{For instance, higher dimensional analogue of the `pair of pants'
    being represented by a correlation function with three
    melonic boundary components.}.
       
   \begin{figure}\vspace{-1cm}
    \includegraphics[width=8.5cm]{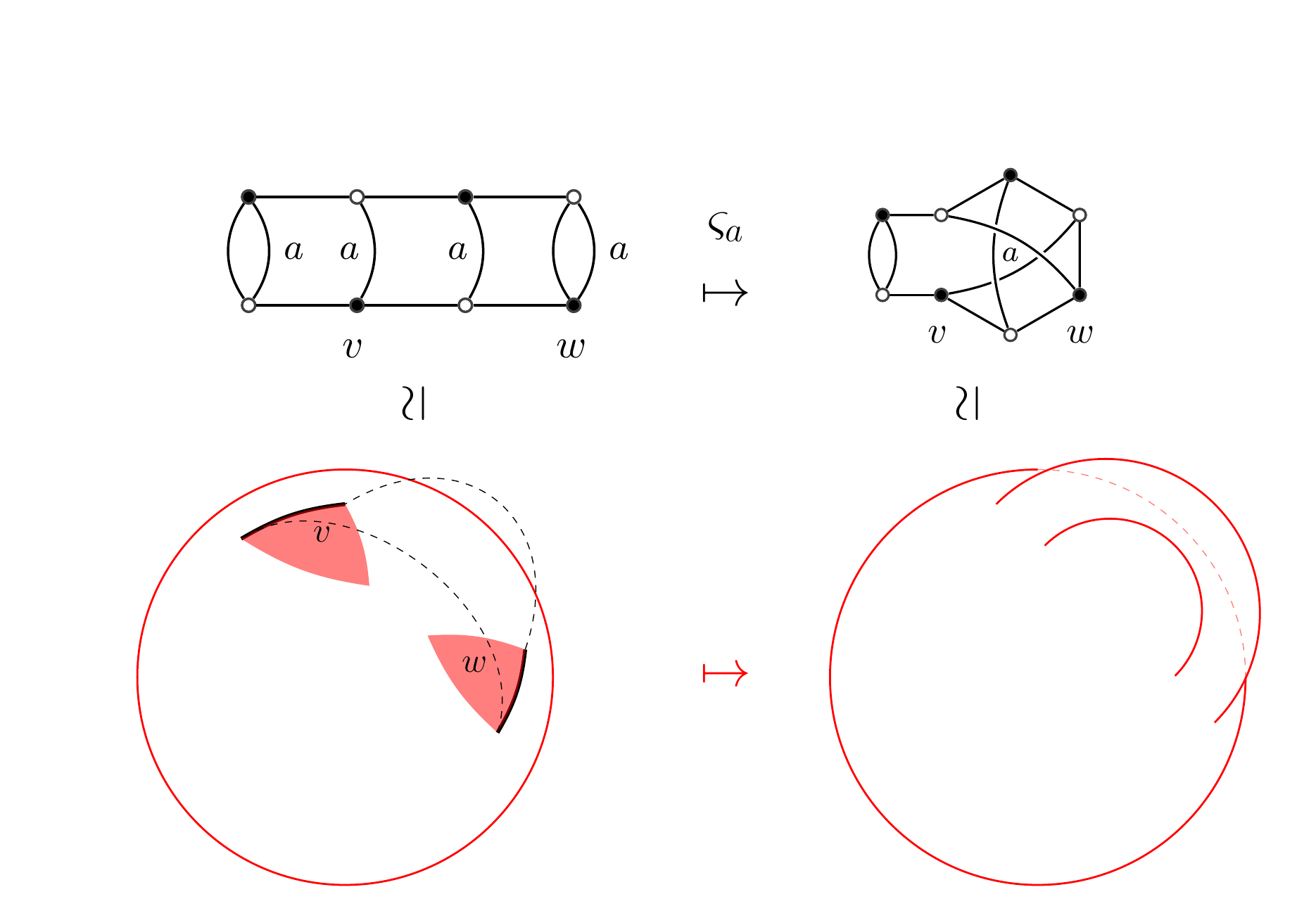}
    \caption{An example of $a$-coloured edge swap increasing the genus of a connected graph}
    \label{fig:swap_genus}
   \end{figure}
   Taking graph-derivatives can be understood 
   as the tensor model counterpart of `taking residues' 
    in matrix models (cf. \cite[Ch. 1-2]{CountingSurfaces})
    \[ \mathcal T_{l_1\ldots l_\kappa}= \displaystyle 
     \mathop{\mathrm{Res}}\displaystyle_{x_1\to \infty } \ldots  \mathrm{Res}_{x_\kappa\to \infty }
     \big [x^{l_1}_1 \cdots x^{l_\kappa}_\kappa W_\kappa(x_1,\ldots,x_\kappa)\big]\]
    in order to project the  free energy $W_\kappa$ (with $\kappa$ boundaries) of a matrix model 
    onto the generating function $ \mathcal T_{l_1\ldots l_\kappa}$ of random maps with 
    $\kappa$ marked faces with fixed perimeter lengths ($l_1,\ldots,l_\kappa$).  
    After the well-known equivalence \cite[Thm. 2.5.1]{CountingSurfaces} between
    Tutte equations \cite{tutte_1962,tutte_1963} for the enumeration of random maps
    and the loop equations for a suitable matrix model \cite{Migdal} (also summarised here in 
    in Sect. \ref{sec:Parallel}), 
    we can state the present result 
    as the basis to obtain Tutte-like equations    
    for the higher-dimensional
    analogue of the generating function of random maps $\mathcal T_{l_1\ldots l_\kappa}$.
    This improvement over \cite{SDE} ---where only the 
    equations analogous to a single-boundary correlator,
    say $\mathcal T_{l_1}$'s cousin, are presented---
    requires new developments: A straightforward generalisation of the proof
    given in \cite{SDE} is impossible due to the absence 
    of the multi-variable graph calculus.  
    Although it would be possible to state the main result 
    solely in a TFT context, the full notation would be a burden in the proof.
    We state some results in lighter notation and offer a shorter proof, at 
    expenses of introducing some new concepts. 
    The output of the main theorem is also a set of new graph operations 
    that extend those used to 
    describe the connected-$\partial$ SDE in \cite{SDE}. Also the  
    \textit{edge swap}\footnote{
    This operation has been studied 
    in the literature of Graph-Encoded Manifolds \cite{linsmulazzani} and by the 
    Crystallisation Theory \cite{PaolaRita}} $\varsigma_a$ (see Fig. \ref{fig:swap_genus}), as unary operation on a connected graph, is extended 
    to a binary operation implying two connected components; $\varsigma_a$ is then interpreted, 
    following \cite{cips}, as their connected sum.     
    \\
    
    This article is divided in an abstract part (\sec{sec:backround},
    whose main results are lemmas \ref{thm:Borel} and \ref{thm:Leibniz}), and
    a TFT-part.
    The next section explains 
     in detail the following 
    mnemonics: for  coloured graphs $g$ and $h$,
    \begin{equation} \label{eq:GroupAction}
    \dervpar{g}{h} = \delta(h,g) \cdot \mbox{group action of } G(g)  \,.
    \end{equation}
    Here $ \delta(g,h)=1$ if the graphs $g$ and $h$ are isomorphic 
    or otherwise $\delta(g,h)$ vanishes, and  $G(g)$ is a group
    determined by $g$.  In Section \ref{sec:back} we make the
    connection between the two first sections and
    model-independent TFT.
    In Section \ref{sec:teorema}, our quartic model is detailed and the results of the
    previous sections are applied to the main problem, namely to find the
    SDE for \textit{connected} correlation functions with arbitrary
    \textit{disconnected boundary} with arbitrary number of connected
    components.  Section \ref{sec:foursixSDE} gives
    explicitly some of the SDE for $4$, $6$-point functions for
    rank-$3$ theories. We highlight in Section \ref{sec:TutteSDE}, 
    preparing an important future task to apply our results, the term-to-term parallel between Tutte equations and those presented here, as well as
    analogies in the derivation of both sets of equations. 
    Concretely, we compare the new operations
    on the boundary 
    graphs of TFT (Table \ref{tab:MultiTM}) with their matrix models counterpart (Table \ref{tab:MultiMM}).
    The operations and terms presented in the SDE of \cite{SDE} 
    are only the counterpart of those matrix models SDE
    presented in Table \ref{tab:SingleMM}.
    The conclusions 
    and outlook are given in Section \ref{sec:Outlook}, discussing a potential application
    related to the higher-dimensional analogue of the topological recursion.
    The useful coefficients that encode
    the insertion of the $4$-point functions into the $2$-point 
    function and the $6$-point functions into the $4$-point functions
   are given in the Appendix \ref{sec:App}.

    \section{Graph calculus}  \label{sec:backround}
  
    In this section we explain what we mean by graph calculus.
    For naturality reasons\footnote{This is in line with other theories (as a matter of fact
    Topological Quantum Field Theories) that need the 
    empty $D$-manifold. Also, 
    it is technically advantageous and not the first time it is considered,
    see for instance \cite{krajewskireiko}.} we consider the empty graph  $\varnothing$ as 
    coloured and add it to the set of (possibly) disconnected, closed,
    regularly edge-$D$-coloured, vertex-bipartite graphs
    (`$D$-\textit{coloured graphs}') $\dvGrph{D}$, to form $\mathsf G_D=\{\varnothing\} \cup \dvGrph{D}$.  Henceforth, all graphs are coloured, but other
    types of graphs could be used for the next constructs.  



    \subsection{Single variable graph calculus} 
      \thispagestyle{plain}
  
    We regard $\mathsf G_D$ with a monoidal structure, the product and the unit 
    being given by
    \[ c_1 c_2=c_1\amalg c_2\,\quad \and \quad c\amalg \varnothing = c
      =\varnothing \amalg c\,,
      \]
      respectively, for all $c ,c_1,c_2\in
      \mathsf G_D$. We choose to remember the order of the factors, so this
    product is generally non-commutative, $c_1c_2\neq c_2c_1$.
     
    \begin{defn}[System of graph-group actions]
      For a finite collection $H\subset \mathsf{G}_D$, consider the
      following structures:
    \begin{itemize}
      \itemB for each connected graph $c\in H$:
     \begin{itemize}
    \itemW a set $\mathscr V(c) $ is associated with $c$; 
    for the empty graph, $\mathscr V(\varnothing)=\{*\}$ is the singleton
    \itemW a finite group $G(c)$ and
    \itemW a group action  $G(c) \curvearrowright \mathscr V(c)$
     of $G(c)$ on $\mathscr V(c)$
      \end{itemize}
      \itemB if $g=c_1c_2\cdots c_n$ is a factorisation in connected
      components $c_i$, then $\mathscr V(g)$ satisfies\footnote{One 
       could relax this condition so that there exist 
       domains $\mathscr U(g)$ of $\mathscr V(g)$ compatible with 
       the $G(g)$-action and such that $\mathscr U(g) \subset \mathscr 
       U(c_1)\amalg \ldots \amalg \mathscr U(c_n)$. 
        However, we keep the natural condition 
        \eqref{eq:natural}
        }
     \begin{equation}
     \label{eq:natural}
\mathscr V(g) = \mathscr V(c_1)\amalg \cdots\amalg \mathscr V(c_n)\,     
     \end{equation}
      \end{itemize}
        The collection $\{\mathscr V(g), G(g) \}_{g\in H}$
    is a \textit{system of graph-group actions}. \par
   If additionally, 
      for each graph 
      in $g\in H$,  one has functions
      \[u_g:\mathscr V(g) \to \C \,\,(\mtr{or}\,\,\re)\,,\] 
    then one says that 
    $\{u_g,\mathscr V(g), G(g) \}_{g\in H}$, or 
    more succinctly, $\{u_g\}_{g\in H}$     
    is a family of functions
    \textit{supported on} 
    $\{\mathscr V(g), G(g) \}_{g\in H}$. 
    \end{defn}
    We are interested in triples 
     $\{u_g,\mathscr V(g), G(g) \}_{g\in H}$ and formal sums 
     of the type $U=\sum_{h\in H} u_h h$, which we refer to as 
     their \textit{generating functionals}. In this case
     we say that (the set of graphs) $H$ \textit{spans} $U$. Why these are
     functionals instead of functions will become apparent 
     while addressing the applications. At this point also the 
     following terminology, inherited from the physical 
     significance, might seem mysterious: we call the elements of
    $\mathscr V(g)$ the \textit{momenta} of the graph $g$.
    Notice that $u_\varnothing$ is a constant.
      
    \par
      As the last reference to
    TFT in this section, we clarify that the nature of these graphs is not important at this point; examples will 
    be presented in later sections. We just clarify the reader that is well-versed 
    with tensor models, that graphs treated here are not Feynman graphs, but boundary graphs of these 
      in a rank-$D$ TFT. In this context, functions
    $u_g$ are unknown\footnote{For instance, $u_g$ can be the
      correlation functions.}, and one derives equations that they
    should satisfy. Only after knowing solutions we would be able to fix a
    function space $u_g$ should belong to, which is for now
    unspecified. We vaguely refer then to them as `functions'. 
    \par
    
    Next, some words on notation. For a factorisation $g=c_1\cdots c_n$ in 
    connected graphs $c_j\neq \varnothing$, we let 
    $g/c_i$ be the graph with the $i$-th connected component
    deleted,
    \[g/c_i:=c_1\ldots \widehat{c_{i}}\ldots c_n=c_1\ldots
      c_{i-1}c_{i+1}\ldots c_n\,.\] 
    Notice that this deletion does not only care about the graph-class,
but also about its spot in the factorisation, which we can keep
track of thanks to the monoidal structure of $\mathsf G_D$. \par
For $g=c_1\cdots c_n$ as before, let $Y\in \mathscr V(c_r)$, $1\leq r\leq n$. Given a function $v_g:\mathscr V(g)\to \C$ 
we define  the insertion of $Y$ in the $r$-th argument of $v_g$
\[\iota_{Y}^r v_g:\mathscr V(c_1)\amalg \cdots \amalg \mathscr V(c_{r-1}) \amalg
 {\mathscr V(c_{r+1})} \amalg\cdots \amalg \mathscr V(c_n) \to \C \,,\] 
 using \eqref{eq:natural} by
    \[
    (\iota_Y ^r v_g) (X_1,\ldots, X_{r-1},X_{r+1},\ldots,X_n) =
    v_g(X_1, \ldots, X_{r-1},Y,X_{r+1},\ldots,X_n)\,,
    \]
    where $X_i\in   \mathscr V(c_i) $ for each $i\neq r$.
    
    \begin{defn}
      Let $H\subset \mathsf{G}_D$ span the functional
      $U=\sum_g u_g \,g$.  Given any connected graph
      $h\in\mathsf G_D$, $h\neq \varnothing$, and an arbitrary graph
      $g =c_1\cdots c_n$ factorised in connected components $c_i$, 
      we define $I(g,h)=\{ i \in |\![1,n]\!| \,|\, h= c_i \}$; 
      that is, $I(g,h)$ is the subset of numbers 
      that indexes the factors of $g$ that coincide with $h$. For
      $r\in I(g,h)$, we label by $h\hp r$ the appearance of $h$ in 
      the  $r$-th factor of $g=c_1\cdots c_n$.  We define the
      \textit{functional graph derivative} with respect to $h$
      (evaluated at $X$) as the functional
     \[
    \dervfunc{U}{h(X)} = \suml_{g \in H}\,  
    \suml_{r\in I(g,h)}
    \suml_{\sigma\in G(h)}\,  
    \iota_{\sigma(X)}^r  u_{g}\, \,(g/h\hp r)\,,
    \qquad   X \in \mathscr U (h) \subset \mathscr V(h)\,.
    \]
    The well-definedness follows from
    condition \eqref{eq:natural}, which 
    implies that in each case  $\iota_{\sigma(X)}^r  u_{g}$ is 
    indeed a function on $\mathscr  V(g/h\hp r)$.
    We stress that this derivative could be 
    defined in a proper domain $\mathscr U (h) $ of $ \mathscr V(h)$.
    Further, if $h$ occurs nowhere as a factor of $g$ the sum is empty, and
    thus $\delta U/\delta h \equiv 0 $.  The derivative with respect
    to $\varnothing$ is the coefficient of that graph,
    $\delta U/\delta \varnothing:=u_\varnothing \in \C$.
    \end{defn}
    
    To clarify this definition, consider a monomial functional, $U= u_{g} \, g$, with $g=h^n$ for some integer $n\geq 1$
    and $h$ a connected graph.
    By definition, one has
    \[
    \dervfunc{U}{h(X)} = \suml_{r=1}^n\suml_{\sigma\in G(h)}
    (\iota_{\sigma(X)}^r  u_{h^n}) h^{n-1}
    \, ,
    \]
    which one can rethink as
    \begin{align}
      \dervfunc{(h^n)}{h} & \nonumber
                            =\bigg(  \dervfunc{h}{h}h^{n-1}+h\dervfunc{h}{h}h^{n-2}+\ldots+ h^{n-1}\dervfunc{h}{h}  \bigg) \\
                          &=\bigg(  G(h)h^{n-1}+hG(h)h^{n-2}+\ldots+ h^{n-1}G(h) \bigg)
                            \,, \label{eq:omitarrow}
    \end{align}
    if
    \begin{equation} \label{eq:GroupActionGeneral}
      \bigg(\dervfunc{h}{h} u_h\bigg) (X)= \sum_{\sigma \in G(h)}
      (\sigma \cdot u_h)( X) = \sum_{\sigma \in G(h)} u_h(\sigma X) \,.
    \end{equation}

    To illustrate the action in slightly more generality, if in $u_{c_1\ldots c_{r-1}hc_{r+1}\ldots}$ none of the $c_j$ is isomorphic to $h$, then
    \begin{align} \label{eq:GroupActionGeneraldos}
     \bigg(\dervfunc{h}{h} u_{c_1\ldots c_{r-1}hc_{r+1}\ldots} \bigg) (X) & =
     \sum_{\sigma \in G(h)} \iota_{\sigma(X)}^r u_{c_1\ldots c_{r-1}hc_{r+1}\ldots}  \, .
    \end{align}
    
   On this account, the useful symbolism to keep in mind is that 
the derivative of a graph with respect to itself is the group action of $G(h) $ on
    $ \mathscr V(h)$,
    \begin{equation} \label{eq:GroupActionGeneraluno}
    \mbox{}
     \dervfunc{h}{h}=   G(h)\curvearrowright \{\mbox{functions }\mathscr V (h) \to \C \}\,.
    \end{equation}
    
    In the sequel, we will often abuse on notation and write this equality without 
    the curved action-arrow, as we already did above in eq. \eqref{eq:omitarrow}. 
    The moral is that each factor $h$ occurring in a term of the type 
    \[u_{c_1 \ldots c_{r-1 }h c_{r+1}\ldots c_n } (c_1 \ldots c_{r-1 }h c_{r+1}\ldots c_n) \] 
    is a potential $G(h)$-orbit of the $r$-th 
    argument $u_{c_1 \ldots c_{r-1 }h c_{r+1}\ldots c_n}$. \par
    For iterated derivatives with respect to $h$,
    one can see by induction that for
    $n\geq 2$, the iteration of $n$ graph derivatives
    applied to $h^n$ yields
    \begin{align}
     \dervfunc{^n \,(h^n)}{h \, \delta h \cdots  \, \delta h} (X_1,\ldots,X_n)
     = \sum_{\mu\in \Sym(n)} 
     \sum_{(\sigma^1,\dots,\sigma^n) 
     \in 
     G(h)^n
     }
     \iota^{1}_{\sigma^{\mu(1)}(X_{1})}
     \cdots 
      \iota^{n}_{\sigma^{\mu(n)}(X_{n})}\,.
    \end{align}
    If $G_i(h)$ is the $i$-th factor of the group $G(h)^n$, a more
    transparent notation of last equation is
    \begin{equation}
          \dervfunc{^n \,(h^n)}{h \, \delta h \cdots  \, \delta h} 
     = \sum_{\mu\in \Sym(n)} 
    G_{\mu(1)}(h)\,
    G_{\mu(2)}(h)\,
    \cdots
    G_{\mu(n)}(h)
    \,, \label{eq:GroupAction_n}
    \end{equation}
    where the group $G_{\mu(i)}(h)$ acts on the $i$-th factor of the
    set $\mathscr V(h)\amalg \cdots \amalg \mathscr V(h)$.  The group
    corresponding to the $n$-th derivative of the $n$-th power of a
    graph $h$ with respect to itself is
    \begin{equation}\label{eq:wr}
     \dervfunc{^n \,(h^n)}{h \, \delta h \cdots  \, \delta h} 
     =
     G(h)\wr \Sym(n) \,.
    \end{equation}
    In this case, the wreath product $G(h)\wr \Sym(n)$ is the
    semi-direct product $G(h)^n \rtimes_\psi \Sym(n)$, with the obvious
    action $\psi$ of $\Sym (n )$ on the $n$ copies of $G(h)$. 
    \par
    To give further detail, given a generating system of graph-group actions $\{\mathscr V(g),G(g)\}_{g\in H}$
    and $h\in H$,  consider a function $F:\mathscr V(h)\amalg \cdots \amalg \mathscr V(h) \to \C $. 
    An element $\Omega=(\boldsymbol \sigma;\mu)=(\sigma^1,\ldots,\sigma^n;\mu)$ of the group 
    in eq. \eqref{eq:GroupAction_n} acts as follows:
    \begin{equation}  \label{eq:encoords}
    (\Omega \cdot F)(X_1,\ldots,X_n)=F\big(\sigma^1(X_{\mu({1})}),\ldots,\sigma^n(X_{\mu(n)})\big)\,.
    \end{equation}
    By departing from eq. \eqref{eq:encoords},
    the composition with another element $\Xi=(\boldsymbol\tau,\nu)$ in the group \eqref{eq:GroupAction_n} 
    is easily proven to yield $\Xi \circ \Omega= 
    (\boldsymbol \tau \psi_{\nu}(\boldsymbol \sigma ); \nu \mu   )$, 
    where 
    \[\psi:\Sym(n)\to \Aut(G(h)^n) \qquad \mu\mapsto [\psi_\mu:\,(\sigma_i)_{i=1}^n \mapsto (\sigma_{\mu(i)})_{i=1}^n\, ]\,,
    \]
    which is the product of $ G(h)^n \rtimes_\psi \Sym(n)$, as claimed.

    \subsection{Examples of graph-group action systems}\label{sec:examplesGGAS}   
    
    Roughly stated, a multivariable graph calculus (of $n$ graph variables)
    consists of generating functionals of functions $u_g$ 
    supported on a system
     of graph-group 
    actions $\{ \mathscr V(g), G(g)\}_{g\in H}$ that are 
 spanned by a finite set $H$. We take $H \subset \FM(\{h_1,\ldots,h_n\} )$,  the free monoid
 generated\footnote{We recall
    that the \textit{free monoid}  generated by $\mathfrak h=\{h_1,\ldots, h_n\}$ 
    is in this case the following set 
    $\FM(\mathfrak h)=\{l_1\cdots l_m : m\in \mathbb Z_{\geq 0} \,\and \, l_i\in \mathfrak h\}$ 
    endowed with the concatenation operation; containing 
    the empty graph, i.e. the empty word.} by 
    $n$ non-isomorphic graphs $\hf=\{h_1\ldots,h_n\}$. For a
    multivariable graph calculus the key property is that 
    the graph-group  actions $G(h_i)$
    are pairwise independent, that is for each $i,j=1,\ldots,n$, 
    \begin{equation} 
    \label{eq:independence}
    \dervfunc{h_i}{h_j}= \delta^{i}_j \, G(h_i)\,,  \qquad\quad h_i,h_j\in\hf\,.  
    \end{equation}
    For the special element $g=h_1^{\alpha_1}\cdots h_n^{\alpha_n}$ the 
    restriction imposed by eq. \eqref{eq:independence} implies
    \begin{equation} \label{eq:groupactionindependence}
      \dervfunc{g}{g} = G(g) = G(h_1)\wr \Sym(\alpha_1)\times
      G(h_2)\wr \Sym(\alpha_2) \times \cdots \times G(h_n)\wr
      \Sym(\alpha_n) \,.
    \end{equation}
    Before formally defining multivariable
    graph calculus, the next examples are just meant to 
    illustrate last action \eqref{eq:groupactionindependence},
    rather than the role of the graphs in 
     graph-generated actions, and therefore 
     can be skipped (to Sect. \ref{sec:mutivgraphcalc}).

           \begin{ex}
    Let $\zeta_n\neq 1$ denote a $n$-th root of unit ($n\geq 2$),
    and consider the system of graph-group actions with 
    a single graph $g$. Let $G(g)$ be the group spanned by $\zeta_n$ 
    $\mathscr V (g)= \C$ by multiplication.  
    Then the functional graph derivative of $g$ with respect to itself
    on the identity $\mathrm{id}_{\C}$ vanishes identically: 
    \[
    \bigg(
    \dervfunc{g}{g} \bigg) \mathrm{id}_{\C} \equiv 0\,. 
    \]
    The $G(g)$-orbit of the function $f_n:z\mapsto z^n$ yields
     \[
    \bigg(
    \dervfunc{g}{g (z)} \bigg) (f_n)= n\cdot z^n\,. 
    \]

     \end{ex}
      
      \begin{ex}
      Consider a finite set $H\subset \FMh $ 
      and the following  graph-group actions system 
      \[ G(g)= \Sym( |g^0 |),\qquad \mathscr V(g)=M_{|g^0|\times |g^0| }(\re)\,.\]
      Here   $|g^0 |$ is the number of vertices of $g$. The action
      of the symmetric group on the matrices permutes 
      columns (or rows). Then the orbit of the determinant $\det: \cV(g) \to \re $
      vanishes identically. This follows from considering, for an arbitrary matrix $X=(X_{ab})\in \cV(g)$,
      \begin{align*}
             \bigg(\dervfunc{g}{g (X)} \bigg) \det (\balita) & = \suml_{\sigma\in\Sym(|g^0|)}\det (X_{\sigma(a)b}) \\
             & =  \sum_{\sigma\in A} \det (X_{\sigma(a)b}) +
              \sum_{\sigma\in \Sym(|g^0|) - A} \det (X_{\sigma(a)\,b}) \\
              &=  \sum_{\sigma\in A} \det (X_{ab}) 
             -\sum_{\sigma\in \Sym(|g^0|)-A } \det (X_{ab})=0\,,
     \end{align*}
      where $A$ is the alternating subgroup; its complement 
      in the symmetric group consists of odd-degree permutations, whence
      the common minus sign in the last line. Both have the same order, which explains
      why the sum vanishes independently of $X$. 
      \end{ex}

      \begin{ex} 
      Let $K$ be a finite group that accepts (cf. \cite{FinIrreps} for a criterion)
      faithful irreducible representations. Consider $n$ of them $\pi^i:K\to \mathrm{End}(W_i)$, and set 
      $G(h_i)=K$ for each $i=1,\ldots,n$.
      Define for each $i$ the momenta of $h_i$ as the matrix space $\mathscr V(h_i)=\pi^i(K)$  (since $K$ is finite, irreps are finite-dimensional).
     The group $K$ acts on  $ \cV(h_i)$  by
       \[
      \cV(h_i)\ni\pi^i(m) \stackrel{u  }{\mapsto} \pi^i(u)\pi^i(m)=\pi^i(um)\,, \qquad (m,u\in K)\,.
       \]
      Consider the following functions $f^i$  defined in terms of
      the characters $\chi^i(u)=\Tr(\pi^i(u))$, $u\in K$,
      \[f^i: \mathscr V(h_i) \to \C\,,
      \quad 
      f^{i}(X_i) = \chi^i(m)^* \chi^i(m)\,, \qquad  (X_i=\pi^i(m))\,.
      \]
      Then for $X_i=\pi^i(m)$, the following holds:
         \begin{align*}
       \bigg(\dervfunc{h_i }{h_i(X_i)}\bigg)f^{i}  & = 
       \sum_{u\in K}  (u\cdot  f^{i})(\pi^i(m)) \qquad \mbox{($i$ is fixed)} \\
       &=\sum_{u\in K}  \chi^i(um)^* \chi^i(um)\,\\
       &=\sum_{u\in K}  \chi^i(u)^* \chi^i(u) = |K|\,.
      \end{align*}
      Fix any $g\in \hf$ and let $\pi$ be the associated
      representation. Define for any $h_i,h_j\in \hf$, and 
      $X=\pi(m)$,
      \[
      F^{ij}: \mathscr V(g) \to \C\,,\quad 
      F^{ij}(X) =  \chi^i(m)^* \chi^j(m)\,.
      \]
      For $X=\pi(m)$,
      \begin{align*}
       \bigg(\dervfunc{g  }{g (X)}\bigg)F^{ij}  = 
       \suml_{u \in K}  \chi^i(u m)^* \chi^j(u m)
        = 
       \suml_{u' \in K}  \chi^i(u')^* \chi^j(u')
       = \delta_{ij} |K|\,.
      \end{align*}
      Last equality is due to Schur orthogonality.
      
      \end{ex}

    \begin{ex} Let $D\in \Z_{\geq 1}$. 
     Let two non-isomorphic graphs $H=\{g,h\}\subset \mathsf G_D$ 
     parametrise the system of graph-group actions
     given by $\{\mathscr V (l),G(l)\}_{l\in H}$, being 
     \begin{align*}
     \mu\in G(g)&=\Sym(D)   &&& Z=(z_i)_{i} \in 
     \mathscr V(g) &= \C ^D && & \mu&:(z_i) \mapsto (z_{\mu(i)}) \\
     \tau\in G(h)&=\Z_2  &&& \varepsilon\in\mathscr V(h)&=\Z_2  &&& \varepsilon&:s\mapsto \tau \varepsilon 
     \end{align*}
    where $ \Z_2$ is written multiplicatively $\{-1,1\}$. 
    Let $F:  \mathscr V(g)\amalg \mathscr V(g) \amalg  \mathscr V (h)\to \C$
    be given by, say,
\[F(Z, W,\varepsilon)=\frac{\varepsilon}{2 (D!)^2}  \ee^{{- |W| +\varepsilon |Z| }}
=: c(D) \cdot \varepsilon \cdot \ee^{{- |W| +\varepsilon |Z|} } \,.
 \]     
Then the functional graph derivative of $g^2f$ with respect to itself
yields the following group-orbit, when applied to $F$:
 \allowdisplaybreaks[1]
 \begin{align*}
& \dervfunc{^3 \, (g^2h)}{g(Z_1)\,\delta g(Z_2)\,\delta h(\varepsilon)} F \\
 & = 
 \bigg(\dervfunc{^3 \, ggh}{g \,\delta g\,\delta h} F \bigg)(Z_1, Z_2,\varepsilon)
 \\
  & = \Big( \big[ G(g) \wr \Sym(2) \times G(h)\big] \cdot F\Big)(Z_1, Z_2,\varepsilon)
  \\
  & =\suml_{\mu\in \Sym(2)}\suml_{\tau\in\Z_2} 
  \suml_{\sigma  \in \Sym(D)} \suml_{\rho\in\Sym(D)}
 F \big(\sigma (Z_{\mu(1)}),\rho(Z_{\mu(2)}),\tau \varepsilon \big) 
 \\
  & = \suml_{\tau\in\Z_2} 
  \suml_{\sigma  \in \Sym(D)} \suml_{\rho\in\Sym(D)}
\Big\{F \big(\sigma(Z_1),\rho(Z_2),\tau \varepsilon \big) + F \big(\sigma (Z_2),\rho(Z_1),
\tau \varepsilon \big)\Big \}
  \\
   &= 2 c(D)\cdot (D!)^2\cdot\varepsilon \cdot \big( \ee^{\varepsilon |Z_1|-|Z_2|} + \ee^{\varepsilon |Z_2|-|Z_1|}  \\
   & \qquad\quad\qquad\quad\qquad\,- \ee^{-\varepsilon |Z_1|-|Z_2|} -\ee^{-\varepsilon |Z_2|-|Z_1|}\big) \\
   &=  \varepsilon \big( \ee^{-|Z_1|} \sinh |Z_2|  + \ee^{-|Z_2|} \sinh |Z_1| \big)\,.
 \end{align*}
  \allowdisplaybreaks[0]
We have used the invariance under the action of 
two copies $\Sym(D)$, which contributed a factor $(D!)^2$.

    \end{ex}
    
    \subsection{Multivariable graph calculus} \label{sec:mutivgraphcalc}
    Let $\hf=\{h_1^{}\,\ldots, h_n^{}\} \subset \mathsf{G}_D$
    be a set of connected, non-isomorphic graphs.
    For the basis of the multivariable calculus the free monoid $\FM(\{h_1,\ldots,h_n\} )$ is too `verbose',
    and not each one of its elements has
    the ordered form $h_1^{\alpha_1}\cdots h_n^{\alpha_n}$. 
    This could in principle be solved by taking
    the free commutative monoid  $\FM^{\mtr{ab}}(\hf)$ instead,
    which, however, tuns out to be overly restrictive (for our aims). 
    A mild compromise between these two alternatives ---the free monoid
    and its abelianisation--- is
    to allow to permute letters in an arbitrary word,
    as to make use of 
    the action \eqref{eq:groupactionindependence}, and then 
    in some sense undo the changes. Next definition  introduces 
    precisely such reordering.

    \begin{defn} 
   Given a finite set of graphs
   $\mathfrak h=\{h_1,\ldots,h_n\} $,
   the \textit{degree} $|g|$ of an element $g$ in 
   $\FM(\mathfrak h) $
   is the number of factors of $g$, i.e. the number 
   of connected components $g$ consists of.
    We let $\Sym(|g|)$ act by permuting the 
   factors of $g$, $g\mapsto \sigma (g)$; notice that 
   $\Sym(|g|)$ left-acts naturally  as
     $\sigma \cdot f=f\circ \sigma\inv$ on functions 
   $f:\mathscr V(g)\to \C$. 
   Given a family of functions $\{u_g\}_{g\in H}$
   supported on a system of graph-group actions 
   $\{\mathscr V(g), G(g)\}_{g\in H\subset \FMh }$,
   and given a $g\in H$,
   we declare the pairs $(u_g,g)\sim (\sigma\cdot u_g, \sigma (g))$ 
   equivalent  for each $\sigma\in\Sym(|g|)$. The notation 
   we choose for this equivalence, called \textit{reordering}, is 
   \begin{equation} \label{eq:relation}
u_g g \sim u_h h \quad  \mbox{ if and only if }\quad u_h=\sigma \cdot u_g\,\,\and\,\,h=\sigma (g)
\mbox{ for certain }  \sigma\in \Sym(|g|)\,.  
   \end{equation}
   
   \end{defn}

   \begin{defn}
   Given a finite set of connected non-isomorphic graphs 
   $\mathfrak h=\{h_1,\ldots,h_n\} \subset \mathsf G_D$, a 
    system of
   graph-group actions
   $\mathcal S=\{\mathscr V (h),G(h)
   \}_{h\in{\mathfrak h}}$
   is said to be \textit{independent}
   if eq. \eqref{eq:independence} holds. When
    the context is clear, we just say that `$\mathfrak h$ is 
    independent', or that $\mathcal S$ is.
   \end{defn}

    \begin{defn}
   A \textit{multivariable graph calculus} 
   $ \mathscr C(\mathfrak{h})$ or a \textit{graph calculus} with variables 
   $\mathfrak h$ consists of two objects: 
   \begin{itemize}
    \itemB 
the choice of an independent system of graph-group actions $\{\mathscr V (h),G(h)
   \}_{h\in{\mathfrak h}}$ for a finite set $\mathfrak h \subset \mathsf G_D$ and
   
   \itemB the set  
   of finite formal sums in elements of $g \in \FM(\mathfrak{h})$ 
   having each of these a function of the form $v_g:\mathscr V(g)\to \C$  as coefficient,
   modulo reordering. 
   That is, 
   \begin{align}
   \mathscr C(\mathfrak{h})= \big\{  \,\textstyle\sum_{g } v_g \,g \,\,\big| \,
   v_g \equiv 0 \mbox{ for almost all } g\in \FM(\mathfrak{h}) 
     \big\} \,\big/\sim
   \end{align}
   where $\sim $ is the linear extension of relation \eqref{eq:relation},
   abusing on the same symbol.

   \end{itemize}
     \end{defn}

    \subsection{Algebraic structure}
    We now explore the structure of a
 graph calculus $\mathscr C(\hf)$ with 
 variables $\hf=\{h_1,\ldots,h_n\}$.
 The elements of $\mathscr C(\hf)$,
 called also \textit{functionals}, have a non-unique
  representation, since 
  $\sum_g v_g \,g=\sum_{\tilde{g} }v_{\tilde g} \,\tilde g$
  where $\tilde g=\tau_g(g) $ and $v_{\tilde g}=\tau_g \cdot (v_g)$ for an arbitrary $\tau_g \in \Sym(|g|)$.
  For sake of computability, it will be helpful to 
  be able to fix representing elements $g$ that span
  a functional, and subordinate the order of 
  the arguments of the coefficient-functions to that choice. 
  \par 
  We write 
  $g \cequal h$ for any  $g,h\in \FMh$ 
  if $g=h $ in the free commutative monoid $\FM^{\mtr{ab}}(\hf)$ spanned by $\hf$.
   In other words,
   $g \bcequal h$ if and only if
   $h$ and $g$ match in $\FMh$ up to a rearranging $\sigma \in \Sym(|g|)$, i.e. if $\sigma(g)=h$.
   
   \begin{defn} Given  a family of functions $\{v_l\}_l$ supported on $\{\mathscr V(l),G(l)\}_{l\in H}$ where $  H\subset \FMh$. Let $g,h \in H$ be such that 
   $h \bcequal g$. We define for the \textit{reordering 
   of a function} $v_h:\mathscr V(h)  \to \C $ \textit{with respect to} $g$  by 
   \[\langle v_{h}\rangle_g= \sigma \cdot  v_g\,, \mbox{if } \sigma (g)=h \mbox{ as elements of }\FMh, \]
   being $\sigma$ the
   rearranging element $\sigma \in \Sym(|g|)=\Sym(|h|)$. 
   \end{defn}
   
    We shall drop the subindex $g$ in $\langle \,\balita\, \rangle_g$ when the context is
    clear.  If one factors $g$ as $g_1g_2$ with respect to an
    `abelianised' product, an element
    $\sigma \in \Sym (|g|) $ serves as correction, so that
    $\sigma (g_1g_2)=g$.  Their rearranging yields $\langle u_{g_1}t_{g_2}\rangle_g \,=\sigma \cdot (u_{g_1}t_{g_2})$ for suitably chosen functions $u_{g_1}$ and $ t_{g_2}$.
    In general, the collection of graphs $g_1,\ldots,g_r$ is not required to be connected.
    If the context is clear, we pick 
    this rearranging element in a smaller group $\sigma \in \Sym (r)$
    that only permutes the arguments of $\mathscr V(g_i)$.

    \begin{defn} 
    Denote by $H_1  H_2$ the subset 
    $\{g_1  g_2 \,|\, g_a\in H_a\}$ in the 
    free commutative monoid $\FM^{\mtr{ab}}(\hf)$ spanned by an independent set of graphs $\hf$.
     Given     two functionals in $\mathscr C(\hf)$, 
    $U=\sum_{h\in H_1} u_{h} \,h \,$ and 
    $T= \sum_{h\in H_2} t_{h} \, h$, we define their product $V=(U\cdot T)$
    as the functional 
    \[V = U\cdot T =\sum_{g\in H_1  H_2 \subset \FM^{\mtr{ab}}(\hf)} v_g \,g \,,\] 
    whose coefficients $v_g$ are given by the `ordered convolution'
    \begin{equation} \label{eq:product}
    v_g=\suml_{ \substack{(g_1,g_2)\in H_1\amalg H_2  \\ g_1g_2 \bcequal g} } \langle  u_{g_1}t_{g_2}  \rangle_g\,.
    \end{equation}
    \end{defn}
    
    \begin{lem}
     This product on $\mathscr C (\hf)$ is commutative.

    \end{lem}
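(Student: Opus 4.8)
The plan is to read both products off the coefficient formula \eqref{eq:product} and match them graph by graph. Since the product graphs live in the \emph{commutative} monoid $\FM^{\mtr{ab}}(\hf)$, one has $H_1H_2=H_2H_1$ as index sets of graphs, so that $U\cdot T=\sum_g v_g\,g$ and $T\cdot U=\sum_g w_g\,g$ are supported on exactly the same $g$. It therefore suffices to prove $v_g=w_g$ as elements of $\mathscr C(\hf)$ for each fixed $g$. I would first set up the term-by-term bijection: because $g_1g_2\bcequal g$ holds if and only if $g_2g_1\bcequal g$, the pairs $(g_1,g_2)$ with $g_1\in H_1$, $g_2\in H_2$ contributing to $v_g$ correspond, via the obvious swap, to the pairs $(g_2,g_1)$ with $g_2\in H_2$, $g_1\in H_1$ contributing to $w_g$. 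This reduces the whole statement to the single-term identity $\langle u_{g_1}t_{g_2}\rangle_g=\langle t_{g_2}u_{g_1}\rangle_g$ in $\mathscr C(\hf)$ for each matching pair.

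For that identity the key input is a symmetry of the pointwise product of functions under the block-swap permutation. Let $\theta\in\Sym(|g|)$ be the permutation that exchanges the $g_1$-block of factors with the $g_2$-block, so that $\theta(g_1g_2)=g_2g_1$ as elements of $\FMh$. Using the convention $\sigma\cdot f=f\circ\sigma\inv$ together with commutativity of $\C$, a one-line evaluation gives $\theta\cdot(u_{g_1}t_{g_2})=t_{g_2}u_{g_1}$. Choosing any rearranging $\tau$ with $\tau(g_2g_1)=g$, I would then compute $\langle t_{g_2}u_{g_1}\rangle_g=\tau\cdot(t_{g_2}u_{g_1})=(\tau\theta)\cdot(u_{g_1}t_{g_2})$, and observe that $(\tau\theta)(g_1g_2)=g$, so that $\tau\theta$ is itself a legitimate rearranging permutation carrying $u_{g_1}t_{g_2}$ to the representative $g$. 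Hence $\langle t_{g_2}u_{g_1}\rangle_g$ and $\langle u_{g_1}t_{g_2}\rangle_g$ arise from the \emph{same} function $u_{g_1}t_{g_2}$ reordered to $g$ by two (possibly different) rearranging permutations.

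The hard part, and the one place real care is needed, is exactly that last point: the bracket $\langle\,\balita\,\rangle_g$ is \emph{not} well defined as a bare function. Two rearranging permutations $\sigma,\sigma'$ with $\sigma(g_1g_2)=\sigma'(g_1g_2)=g$ differ by an element $\rho=\sigma\inv\sigma'$ that stabilises $g_1g_2$, and $u_{g_1}t_{g_2}$ need not be $\rho$-invariant when $g_1$ and $g_2$ contain isomorphic connected components. The way around this is to run the comparison in the quotient $\mathscr C(\hf)$ rather than at the level of functions. Concretely, I would show $(\sigma\cdot f)\,g\sim(\sigma\rho\cdot f)\,g$ under the reordering relation \eqref{eq:relation} by setting $\pi=\sigma\rho\sigma\inv$: since $\sigma\inv(g)=g_1g_2$ and $\rho$ stabilises $g_1g_2$, one gets $\pi(g)=g$, while $\pi\cdot(\sigma\cdot f)=\sigma\rho\cdot f$, so the two pairs are equivalent. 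Thus the stabiliser ambiguity of the bracket is conjugate to the stabiliser of $g$ and is annihilated upon passing to $\mathscr C(\hf)$.

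With this well-definedness established, the two expressions $(\tau\theta)\cdot(u_{g_1}t_{g_2})$ and $\sigma\cdot(u_{g_1}t_{g_2})$ determine the same term attached to $g$, so the single-term identity $\langle u_{g_1}t_{g_2}\rangle_g=\langle t_{g_2}u_{g_1}\rangle_g$ holds in $\mathscr C(\hf)$. Summing over the bijection of contributing pairs from the second paragraph then yields $v_g=w_g$ for every $g$, and hence $U\cdot T=T\cdot U$, as claimed. I expect the bijection of index sets and the symmetry $\theta\cdot(u_{g_1}t_{g_2})=t_{g_2}u_{g_1}$ to be routine, and the only genuinely delicate step to be the verification that the reordering bracket descends to a well-defined operation on the quotient $\mathscr C(\hf)$.
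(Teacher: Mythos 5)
Your proof is correct, and it reaches the conclusion by a genuinely different route than the paper. The paper's proof is constructive: after normalising $g$ to the sorted form $h_1^{\alpha_1}\cdots h_n^{\alpha_n}$ and writing $j\bcequal h_1^{\theta_1}\cdots h_n^{\theta_n}$, $l\bcequal h_1^{\lambda_1}\cdots h_n^{\lambda_n}$, it exhibits the correcting permutation explicitly as $\nu=\rho\inv\circ(\tau_1\circ\cdots\circ\tau_n)\circ\sigma$ --- a $(|j|,|l|)$-shuffle $\sigma$, within-block swaps $\tau_i$ exchanging the $j$- and $l$-parts of each $h_i$-block, and the inverse of an $(|l|,|j|)$-shuffle $\rho$ --- together with a final correction $(\gamma,\delta)$ when $j$ and $l$ are only $\bcequal$ (rather than equal) to their sorted forms. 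You avoid all of this: the block swap $\theta$ and the group structure of the left action give $\langle t_{g_2}u_{g_1}\rangle_g=(\tau\theta)\cdot(u_{g_1}t_{g_2})$ with $(\tau\theta)(g_1g_2)=g$ in one line, and the whole burden shifts to the observation that the bracket in eq. \eqref{eq:product} is well defined in the quotient: two rearranging elements differ by a stabiliser of $g_1g_2$, which conjugates into the stabiliser of $g$ and is killed by relation \eqref{eq:relation}. What each approach buys: the paper's argument produces $\nu$ in closed form (in your argument the analogous permutation is the implicit composite $(\tau\theta)\sigma\inv$), which can be useful for explicit computations; yours is shorter, needs no sorted normal form or case distinction, and --- a genuine bonus --- makes explicit the independence of $\langle\,\cdot\,\rangle_g$ from the choice of rearranging permutation, a well-definedness point that the paper's definition of the product uses tacitly but never verifies.
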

\begin{proof}
 Let $U= \sum_{j\in J} u_j \, j$ and $ T=\sum_{l\in L} t_l\,l$ be in $\Calc$.
 Given $g\in \FMh$ with $g\cequal jl$ for some $j\in J$ and some $l\in L$ we show
 that the components $v_g$ of $V=U\cdot T$ and $\tilde v_g$ of $\tilde V=T\cdot U$
satisfy $v_g\,g= \tilde v_g\,g$ in $\Calc$. It suffices to exhibit an element 
$\nu\in \Sym (|g|)$ that satisfies $\langle u_jt_l \rangle_g\,g= \langle t_lu_j \rangle_g\, \nu (g)$. This $\nu$ will be next constructed. \par We have the freedom to assume that 
$g =h_1^{\alpha_1}\cdots h_n^{\alpha_n}$. Since $j,l\in \FMh$,
\begin{equation}
 j\bcequal h_1^{\theta_1}h_2^{\theta_2}\cdots h^{\theta_n}_n
\qquad \and \qquad
l\bcequal h_1^{\lambda_1}h_2^{\lambda_2}\cdots h^{\lambda_n}_n
\label{eq:thetaslambdas}  
 \end{equation}
 for some $0\leq \lambda_i,\theta_i\leq \alpha_i$ 
 that satisfy $\alpha_i=\theta_i + \lambda_i$, $i=1,\ldots,n$. 
 In the notation introduced above, $\#\,I(j,h_i)=\theta_i$ and 
 $\#\,I(l,h_i)= \lambda_i $.
We begin by assuming that the relations above are equalities,
\begin{equation}
 j= h_1^{\theta_1}h_2^{\theta_2}\cdots h^{\theta_n}_n
\qquad \and \qquad
l= h_1^{\lambda_1}h_2^{\lambda_2}\cdots h^{\lambda_n}_n
  \label{eq:thetaslambdasp} 
\,, 
 \end{equation} 
and restore towards the end the more general form \eqref{eq:thetaslambdas}.
 Let $|j|=\theta_1+\ldots+\theta_n$
and $|l|=\lambda_1+\ldots+ \lambda_n$
be the orders of $j$ and $l$.
We define first 
 $\sigma \in \Sym(|g|)$
as the $(|j|,|l|)$-shuffle determined by  
\[
\sigma= \raisebox{-.45\height}{
\includegraphics[width=8cm]{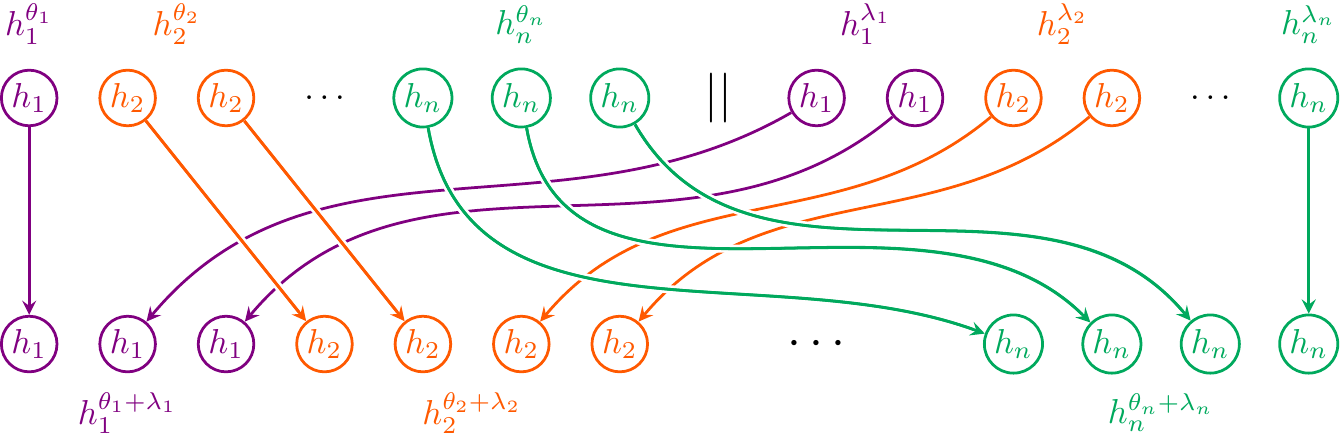}
} \in \Sym(|g|)
\]
Each $h_i$ before double bar in the first row is a factor of $j$;
after the double bar, the $h_i$'s represent the factors of
$l$. The lower is a factorisation of $g$. 
Thus the diagram states that $\sigma(jl)= g$. Analogously, 
we can define a $(|l|,|j|)$-shuffle $\rho$ 
that satisfies $\rho(lj)=g$. This is depicted in 
the following diagram, in which we represent 
$l$ to the left of the double bar and $j$ to the right.

\[
\rho= \raisebox{-.45\height}{
\includegraphics[width=8cm]{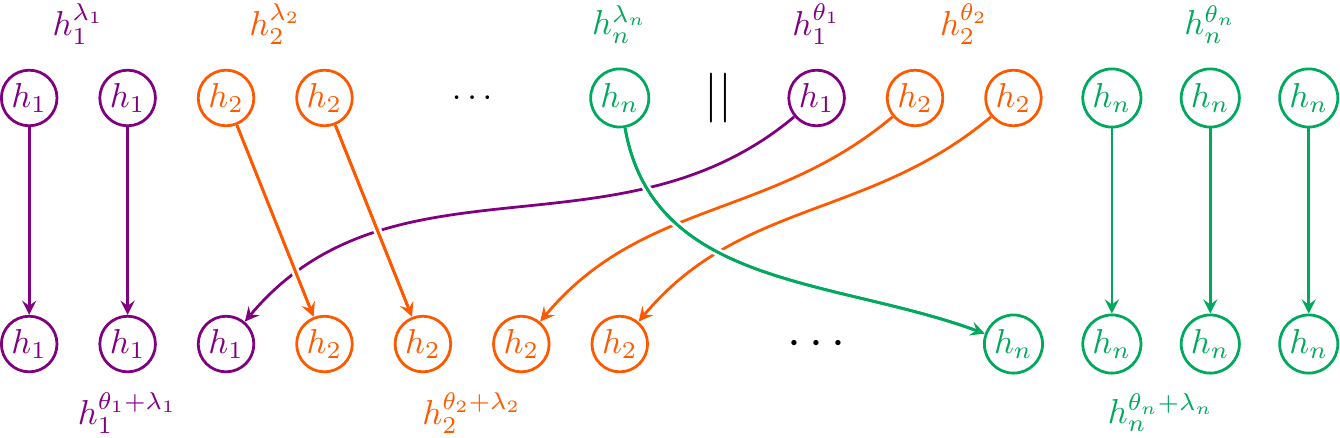}
} \in \Sym(|g|)
\]
One has $\rho(lj)=g=\sigma(jl)$ but this still 
does not guarantee that $\rho\inv\sigma(v_{jl})=\tilde v_{lg}$. In order 
to correct this, we define certain permutations $\tau_i$ that are constant
everywhere except in the elements pertaining a particular $h_i$ for fixed $i$.
This embeds $\Sym(\#\,I(g;h_i))\subset \Sym(|g|)$.  
Notice first that for each such element $\mu\in\Sym(\#\,I(g;h_i))$ 
\[
w_g \,g =(\mu\cdot w_g) \,\mu(g)=(\mu \cdot w_g) \,g\,
\]
holds in $\mathscr C(\hf)$
for any $w_g:\mathscr V(g)\to \C$. 
For each $i=1,\ldots,n$, define $\tau_i$ as certain
permutation $\beta_i$ (given below) in the range $[\alpha_{i-1}+1,\alpha_{i}]$
and constant outside it:
\[\tau_i(x)=\begin{cases} 
   \beta_i(x) &   0<  x-s_i < \alpha_i      \\ 
x & \mbox{otherwise} \end{cases}\,,\]
where $s_{i}:=(\alpha_1+\ldots+\alpha_{i-1})$
\[\beta_i(x)=\begin{cases}
          x+ \lambda_i& s_i<x \leq s_i+ \theta_i \,,\\
           x-\theta_i   & \theta_i <x \leq \theta_i + \lambda_i =\alpha_i\,.
             \end{cases} 
\]
Then the sought-after $\nu$ is 
\[
\nu= \rho\inv\circ(\tau_1\circ \cdots\circ \tau_n)\circ \sigma\,,
\]
which by construction satisfies 
\begin{equation}
(
\nu\cdot v_{jl})\,\nu(jl)= \tilde v_{lj} \,lj \,.  \label{eq:fin}
\end{equation}
The map $\nu$ is thus given by
\[
\scalebox{1.5}{$\nu$}\, =\quad \raisebox{-.48\height}{\includegraphics[width=.75\textwidth]{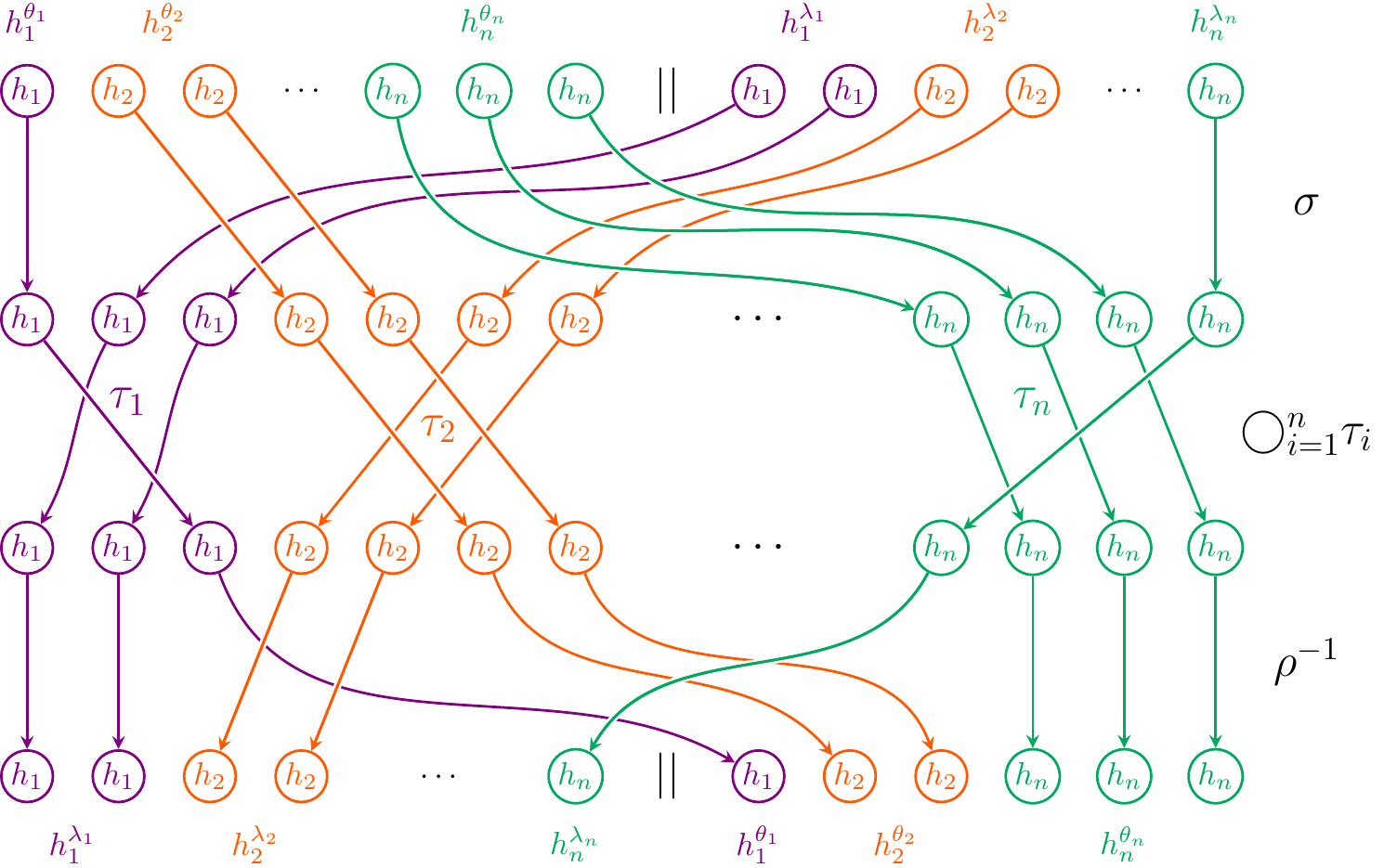}}
\]

We now come back to the strongest, original statement, in which
$l$ and $j$ have the  form \eqref{eq:thetaslambdas},
instead of \eqref{eq:thetaslambdasp}. This means 
that there are permutations $\gamma\in\Sym (|j|) \subset \Sym(|j|+|l|)$ and $\delta\in \Sym(|l|)\subset \Sym(|j|+|l|)$ with 
\begin{equation}
 \gamma(j)= h_1^{\theta_1}h_2^{\theta_2}\cdots h^{\theta_n}_n
\qquad \and \qquad
\delta(l)= h_1^{\lambda_1}h_2^{\lambda_2}\cdots h^{\lambda_n}_n\,.
\end{equation}
Then we correct $\nu$ by these two elements:
\[
\nu= \rho\inv\circ(\tau_1\circ \cdots\circ \tau_n)\circ \sigma \circ (\gamma,\delta)
\]
which satisfies, in the most general case, eq. \eqref{eq:fin}. 
The statement follows by linear extension of it.\end{proof}

    \begin{ex}
     To illustrate this notation, consider the sets $H=\{fg,f^2\}$
      and  $K=\{fg,g^2\} $  of coloured graphs and let
      $U=\sum_{e\in K} u_e e$ and $T=\sum_{e\in H} t_e e$. If $X_A$ are momenta of $f$
      and $Z_A$ of $g$ (for $A=1,2$), we pick a particular graph
      $l=f^2g^2$, and define the following permutations in $\Sym(4)$
          \begin{equation}
 \raisebox{-4ex}{\includegraphics[width=.63\textwidth]{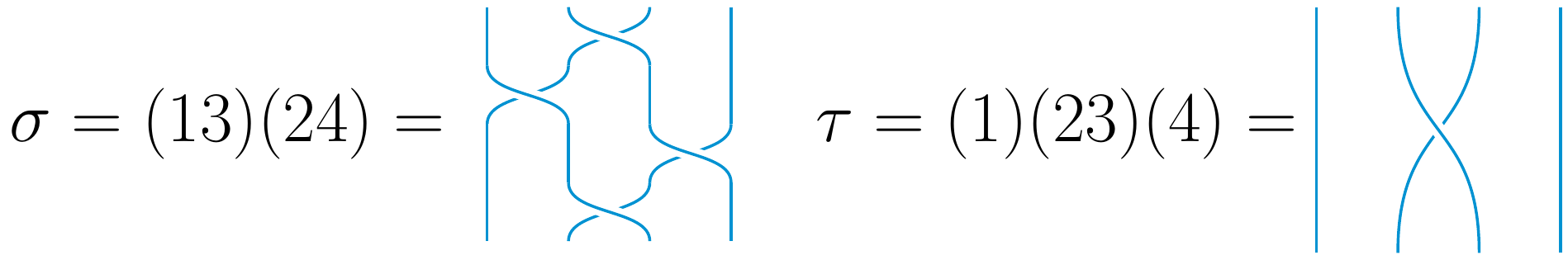}}
\end{equation}
      The coefficient $v_l=v_{f^2g^2}$ is given by
    \begin{align*}
    v_{f^2g^2}(X_1,X_2,Z_1,Z_2) &=
    \big(\, \langle u_{fg} t_{fg}\rangle_l
    \,+\, 
    \langle u_{g^2}t_{f^2} \rangle_l \,\big)
     (X_1,X_2,Z_1,Z_2) \\
     &=
    \big(\tau\cdot (u_{fg} t_{fg}) \big) (X_1,X_2,Z_1,Z_2) \\
    & \quad + 
    \big(\sigma\cdot(u_{g^2}t_{f^2})\big)
     (X_1,X_2,Z_1,Z_2) \\
    &=  u_{fg} t_{fg} (X_1,Z_1,X_2,Z_2) +
    u_{g^2}t_{f^2}(Z_1,Z_2,X_1,X_2)\,.
    \end{align*}

    \end{ex}

    Structures appearing in the graph 
    calculus resemble the monoid ring. Given a commutative unit ring 
    $R$ and a monoid $M$, the monoid ring \cite[Ch. II]{langalgebra}
    is built by formal finite sums
    in $M$ with coefficients in $R$,
    \[
    R[M]=
    \{ \textstyle\sum_m r_m \, m\,|\, r_m\in R, m\in M \,\&\, r_n \neq 0 \mbox{ for finitely many $n\in M$} \}\,,
    \]
    and endowed with the convolution product. 
    The structure of the graph calculus generated by 
    $n$ variables $\mathfrak h= \{h_1,\ldots,h_n\}$ 
    requires to define, instead of the ring $R$, 
    the collection $\A$ of algebras of functions  
    \[
    \A = \prod_{g\in \FM(\mathfrak h)} A_g,\qquad\where 
     A_g=    \{ \mbox{functions }
 \mathscr V(g)\to \C  \}\,,
    \]
    and then to consider the following restricted version of the monoid algebra
    \[
    \A\big[\FM(\mathfrak h)\big]:= \Big\{ \textstyle\sum_{g\in \FM(\mathfrak h)}
    u_g \, g \,\Big |\,  u_g \in  A_g\,,\, u_h=0 \mbox{ for almost all } h\in \FM(\mathfrak h)
    \Big\}    \,.
    \]
    Then we see that 
    \[
    \mathscr C(\hf)= 
    \A\big[\FM(\mathfrak h)\big] /\!\sim\,.
    \]
    
    Given functionals
    $U=\sum_{h \in H} u_h \, h$ , $ T=\sum_{l \in L} t_l\,l$ in $\mathscr C({\hf})$
    one defines their sum by
     \begin{align*}
     U+T=\sum_{m\in H\cup L} \Big( \sum_{h\cequal m, h\in H} \langle u_h\rangle_m +\sum_{l\cequal m, l\in L} \langle t_n\rangle_m \Big) m \,;
      \end{align*}
 for a scalar $x\in \C$,  the functional $xU$  is defined by componentwise multiplication by $x$, 
      \begin{align*}
    xU=\sum_{h\in H} (x\cdot u_h) \,h \,.
     \end{align*}
The in-depth study of the structure of $\mathscr C(\hf)$ is beyond the scope of this paper. Here, we will prove the properties 
that are useful to in later sections.
  

    \begin{defn}[Coloured Borel transformation]

    Let  $V=\sum_{h\in H} v_g \,g$ be a generating functional
    of graph-group actions. Then define its \textit{coloured Borel transformation} by
    \[ \mtr{B_{c}}(V):=\sum_{g\in H\subset   \mathsf G_D } \frac{1}{|G(g)|} v_g\, g\,.\]
    \end{defn}
         If $H\subset \mathsf G_D \ni h$, the following set  
    \[H_h:=\{ g\in \mathsf G_D \,\,|\,\, hg \cequal j \mbox{ for some element $j$ of }H \}\,,\]
    will be relevant for the 
     next lemma.
    
    \begin{lem} \label{thm:Borel} Consider the 
    generating functional of a system of graph-group actions
      $V=\sum_{g\in H} v_g\, g$ belonging to a graph calculus 
      in (connected graph) variables $\mathfrak h = \{h_1,\ldots, h_n\} \subset \mathsf G_D$. 
      Suppose that the coefficients $ v_{c_1\cdots c_a\cdots c_b\cdots c_p }$ 
      obey the following rule under the transposition $(ab)\in \Sym(p)$ of graphs:
   \begin{align}   v_{c_1\ldots c_a\ldots c_b\ldots c_p }=(ab)\cdot v_{c_1\ldots c_b\ldots
        c_a\ldots c_p} \label{eq:invar} \end{align}   
        for each $c_a,c_b\in \mathfrak h$,
        where $a$ and $b$ denote the number of factor (connected component) 
        where $c_a$ and $c_b$ are located, respectively.  Moreover, assume that
      $ v_{c_1\cdots c_r\cdots c_p}$ is invariant under $G(c_r)$ for each
      factor $c_r$, $1\leq r \leq p$. Then, one has for $h=c_r$,
    \[
    \dervfunc{\mtr{B_c}(V)}{h}=\sum_{g\in H_h}\frac{1}{|G(g)|} v_{hg}\,g
    \,,
    \]
    which means that  for each $X\in \mathscr V(h)$,
    \[
    \dervfunc{\mtr{B_c}(V)}{h(X)}=\sum_{g\in H_h}\frac{1}{|G(g)|} (\iota^{1}_{X}v_{hg}) \,g =
    \sum_{g\in H_h}\frac{1}{|G(g)|}  v_{hg}(X, \balita ) \,g
    \,,
    \]
    being $\iota_X^1$ the insertion of momentum $X$ of $h$ at the first
    argument of the function $v_{hg}$.
    \end{lem}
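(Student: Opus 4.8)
The plan is to apply the definition of the functional graph derivative directly to $\mtr{B_c}(V)$ and then exploit the two hypotheses---the transposition rule \eqref{eq:invar} and the $G(c_r)$-invariance of the coefficients---to collapse the two inner sums (over positions $r$ and over group elements $\sigma$) into explicit multiplicities. Writing $\mtr{B_c}(V)=\suml_{g\in H}|G(g)|^{-1} v_g\,g$ and unwinding the definition gives
\[
\dervfunc{\mtr{B_c}(V)}{h(X)}
=\suml_{g\in H}\frac{1}{|G(g)|}\suml_{r\in I(g,h)}\suml_{\sigma\in G(h)}\iota^{r}_{\sigma(X)}v_g\,\big(g/h\hp r\big)\,,
\]
and the goal is to recognise the right-hand side as $\suml_{g\in H_h}|G(g)|^{-1}v_{hg}(X,\balita)\,g$.

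First I would dispose of the $\sigma$-sum. By hypothesis $v_g=v_{c_1\cdots c_r\cdots c_p}$ is invariant under the action of $G(c_r)=G(h)$ on its $r$-th argument, so $\iota^{r}_{\sigma(X)}v_g=\iota^{r}_{X}v_g$ for every $\sigma\in G(h)$; hence the innermost sum contributes a bare factor $|G(h)|\,\iota^{r}_{X}v_g$. Next I would dispose of the $r$-sum. Composing the transpositions in \eqref{eq:invar}, the permutation $\pi_r$ carrying the $r$-th factor to the first position satisfies $v_g=\pi_r\cdot v_{h\,g'}$, where $g':=g/h\hp r$ is the same graph class for every $r\in I(g,h)$. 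I would then check that, after this reordering, $\iota^{r}_X v_g\,(g/h\hp r)$ equals $\iota^{1}_X v_{hg'}\,g'$ as elements of $\Calc$, so that all $\#I(g,h)$ positions contribute the identical term and $\suml_{r\in I(g,h)}$ produces a factor $\#I(g,h)$.

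It remains to assemble the multiplicities. Writing $g=h_1^{\alpha_1}\cdots h_n^{\alpha_n}$ and $h=h_k$, so that $\#I(g,h)=\alpha_k$ and $g'$ has $h_k$-exponent $\alpha_k-1$, the wreath-product description \eqref{eq:groupactionindependence} gives $|G(g)|=\prod_i|G(h_i)|^{\alpha_i}\alpha_i!$ and therefore
\[
\frac{1}{|G(g)|}\,|G(h)|\,\#I(g,h)=\frac{|G(h_k)|\,\alpha_k}{|G(g)|}=\frac{1}{|G(g')|}\,.
\]
Consequently each $g\in H$ in which $h$ occurs contributes exactly $|G(g')|^{-1}\iota^{1}_X v_{hg'}\,g'$. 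Finally, any $g\in H$ containing a factor isomorphic to $h$ can be written $g\cequal h\,g'$ with $g'=g/h\hp r\in\mathsf G_D$, and $g'$ then ranges precisely over $\{g'\in\mathsf G_D\mid hg'\cequal j\text{ for some }j\in H\}=H_h$, each value occurring once because $\cequal$ already identifies reorderings. Re-indexing the outer sum by $g'$ yields $\suml_{g'\in H_h}|G(g')|^{-1}v_{hg'}(X,\balita)\,g'$, i.e. the claimed identity after relabelling $g'$ as $g$.

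The main obstacle I anticipate is the second collapse: verifying rigorously in $\Calc$ that insertion at position $r$ followed by deletion of that factor agrees---modulo the reordering equivalence $\sim$---with insertion at position $1$ in the standard representative $hg'$. This requires tracking how $\iota^{r}_X$ commutes with the permutation action $\pi_r$ on the remaining arguments and confirming that the residual reordering of $g/h\hp r$ into $g'$ is exactly compensated by the $\Sym(|g'|)$-action built into $\sim$, so that the $\#I(g,h)$ terms are genuinely equal and not merely equal up to an unaccounted permutation. By contrast, the $\sigma$-collapse and the counting of $|G(g)|$ are routine once the $G(h)$-invariance and \eqref{eq:groupactionindependence} are in hand.
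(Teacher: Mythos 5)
Your proposal is correct and is essentially the paper's own proof: the paper likewise unwinds the definition, collapses the $\sigma$-sum to a factor $|G(h)|$ by the assumed $G(h)$-invariance, collapses the position sum to a factor $\ell=\#I(g,h)$ by the transposition rule \eqref{eq:invar}, and cancels against $|G(g)|=\ell\,|G(h)|\cdot|G(h^{\ell-1}f)|$ from the wreath-product structure before re-indexing over $H_h$. The only (cosmetic) difference is that the paper first normalises each $g\in H$ to the standard form $h^\ell f$ using the hypothesis, which dissolves the reordering subtlety you flag as the main obstacle, whereas you perform the two collapses in the opposite order and track the reordering via $\pi_r$ explicitly.
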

    
    \begin{proof}
By assumption,  one can bring every element $g\in H$ to the form
      $g=h_1^{\alpha_1}\cdots h_n^{\alpha_n}$, and change the coefficients $v_g$
      accordingly without alteration.   
      Notice that each $g\in H$ can be further factorised as $h^\ell f$, where
      $\delta f /\delta h\equiv 0$, and $\ell$ depends on $g$. Through
      direct computation,
    \begin{align}
    \dervfunc{\mtr{B_c}(V)}{h(X)} & = 
    \sum_{g=h^\ell f \in H} \sum_{r=1,\ldots,\ell} \sum_{\sigma\in G(h)}\label{eq:ele}
    \frac{1}{|G(h^\ell f)|} \iota_{\sigma (X)}^rv_{h^\ell f} h^{\ell-1}f
    \\ 
    &= \nonumber 
    \sum_{g=h^\ell f \in H}   \sum_{\sigma\in G(h)}
    \frac{\ell}{|G(h^\ell f)|} \iota_{\sigma (X)}^1v_{h^\ell f} h^{\ell-1}f 
    \\
    & =  \nonumber
    \sum_{g=h^\ell f \in H}   
    \frac{\ell \cdot |G(h)|}{|G(h^\ell f)|} \iota_{ X}^1v_{h^\ell f} h^{\ell-1}f\,. 
    \end{align}
    First, we used the invariance \eqref{eq:invar}, i.e. 
    $\iota^m_X v_{h^\ell f}=\iota^p _X v_{h^\ell f}$ for each
    $1\leq p,m \leq \ell$; then, the invariance under
    $G(h)$. Throughout, we can assume $\ell\geq 1$, since this is required for a summand 
    in the first equality  of eq. \eqref{eq:ele} not to vanish.  Also, since
    \[
    G(h^\ell f)=G(h^\ell) \wr \Sym(\ell) \times G(f)\,,
    \]
    the orders of the groups should satisfy
    \[
    |G(h^\ell f)|=\ell\,!\,\cdot\,|G(h)|^\ell \cdot |G(f)|=
    \ell |G(h)| \cdot |G(h^{\ell-1}f)|\,.
    \]
    Hence, after cancellation one gets
    \[
    \dervfunc{\,\mtr{B_c}(V)}{\,h(X)}  = 
    \sum_{g\in H_h}
    \frac{1}{|G(g)|} \iota_{X}^1v_{hg} \,  g\,.
     \qedhere \]
    \end{proof}

    \begin{defn}
      The \textit{graph derivative} $\partial U/\partial h$ of a
      generating functional of group actions $U$ is given by the
      coefficient $ v_\varnothing$ of the empty graph $\varnothing$ of
      the functional derivative of $U$ with respect to $h$, 
      $\delta U/\delta h=:\sum_g v_g g$, to wit
    \[
    \dervpar{U}{h}= \bigg(\dervfunc{U}{h}\bigg)_\varnothing\,.
    \]
    Parenthetically, the difference in notations for `partial
    derivative' and `functional derivative' does
    not intend to mirror any difference between
    multivariable ordinary  and functional calculi.
    \end{defn}
    The next result is simple and useful at the same time:
    
    \begin{lem}[Graph calculus Leibniz product rule]\label{thm:Leibniz}
    Consider a graph calculus $\mathscr C(\hf)$ and let 
    $ J,L\subset \FMh $ span functionals  
    $U $ and $T$ in $\mathscr C(\hf)$,
    \begin{align}
    U &=\sum_{j\in J} u_{j} \,j \,, \qquad 
    T = \sum_{l\in L} t_{l} \, l \,.
    \end{align}
    Then  the graph derivative of the product $V=U\cdot T =\sum_g v_g\,g $ is 
    \[
    \dervpar{(U\cdot T)}{g}=
    \suml_{\Omega\in G(g)} \Omega \cdot v_g
    =
    \suml_{\Omega\in G(g)}\suml_{\substack{\, (j,l)\in J\amalg L  \\ j l\bcequal g }}
    \Omega \cdot  \langle u_{j}t_{l}\rangle_g \,.
    \]
    \end{lem}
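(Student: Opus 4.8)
The plan is to split the statement into its two asserted equalities and treat the first as the substantive one. The second equality is a pure substitution: by the definition of the product (eq. \eqref{eq:product}), the $g$-coefficient of $V=U\cdot T$ is $v_g=\sum_{(j,l)\in J\amalg L,\ jl\bcequal g}\langle u_j t_l\rangle_g$, and since each $\Omega\in G(g)$ acts linearly on functions $\cV(g)\to\C$, the operator $\sum_{\Omega\in G(g)}\Omega\cdot$ distributes over this finite sum. Hence it suffices to establish the first equality $\dervpar{V}{g}=\sum_{\Omega\in G(g)}\Omega\cdot v_g$ for an arbitrary functional $V=\sum_{g'}v_{g'}\,g'$ in $\Calc$; the product structure then rides along for free.

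For the first equality I would use linearity of the graph derivative to reduce to a single monomial $V=v_{g'}\,g'$, and bring $g$ to canonical form $g=h_1^{\alpha_1}\cdots h_n^{\alpha_n}$ by reordering, with $|g|=\alpha_1+\cdots+\alpha_n$. The graph derivative $\dervpar{V}{g}$ is read off as the coefficient of $\varnothing$ in the functional derivative obtained by iterating the connected-graph derivative $\dervfunc{}{h_i}$, taken $\alpha_i$ times for each $i$. Because the system $\hf$ is independent (eq. \eqref{eq:independence}), derivatives with respect to distinct $h_i$ decouple and act only on the corresponding factors, and by the single-variable iterated formula (eq. \eqref{eq:wr}) the $\alpha_i$-fold derivative of $h_i^{\alpha_i}$ with respect to $h_i$ produces the wreath group $G(h_i)\wr\Sym(\alpha_i)$. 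Assembling these over $i=1,\dots,n$ yields precisely $G(g)=G(h_1)\wr\Sym(\alpha_1)\times\cdots\times G(h_n)\wr\Sym(\alpha_n)$ of eq. \eqref{eq:groupactionindependence}, acting on the coefficient as the full orbit sum $\sum_{\Omega\in G(g)}\Omega\cdot$.

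Extraction of the $\varnothing$-coefficient is what selects the surviving monomial. After the full iterated differentiation, a term $g'$ contributes a graph with $\alpha_i$ copies of each $h_i$ deleted; this is empty exactly when $g'$ has precisely $\alpha_i$ copies of each $h_i$ and no further connected component, i.e. when $g'\bcequal g$. Monomials with too few copies of some $h_i$ give an empty derivative, while those carrying extra components leave a nonempty residual graph and so contribute nothing to the $\varnothing$-coefficient. Therefore only the $g$-coefficient $v_g$ survives, producing $\dervpar{V}{g}=\sum_{\Omega\in G(g)}\Omega\cdot v_g$, and substituting the product formula for $v_g$ gives the stated second equality.

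I expect the main obstacle to be the combinatorial bookkeeping inside the iterated derivative: at each step one sums over the positions $r\in I(\,\cdot\,,h_i)$ of the factor $h_i$ and over $\sigma\in G(h_i)$, and one must verify that these position sums assemble into the $\Sym(\alpha_i)$ permutation factors and the group sums into $G(h_i)^{\alpha_i}$, so that the total is exactly $G(h_i)\wr\Sym(\alpha_i)$ with neither over- nor under-counting. This is precisely the content already secured by eqs. \eqref{eq:GroupAction_n} and \eqref{eq:wr} in the single-variable setting, so in the multivariable case the work reduces to invoking independence to decouple the variables and then citing those formulas. A secondary point to check is well-definedness modulo reordering: the stabiliser of $g$ in $\Sym(|g|)$ is the Young subgroup $\Sym(\alpha_1)\times\cdots\times\Sym(\alpha_n)$, which is already summed over inside $G(g)$, so the orbit sum $\sum_{\Omega\in G(g)}\Omega\cdot v_g$ is independent of the chosen representative and the derivative descends to $\Calc=\A[\FMh]/\!\sim$.
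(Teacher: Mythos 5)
Your proof is correct and takes essentially the same route as the paper's: both read $\dervpar{(U\cdot T)}{g}$ off as the $\varnothing$-coefficient of $\dervfunc{(U\cdot T)}{g}$, insert the product coefficients \eqref{eq:product}, and invoke independence \eqref{eq:independence} via the wreath-product decomposition \eqref{eq:groupactionindependence} to obtain the $G(g)$-orbit sum over the pairs with $jl\bcequal g$. Your more detailed steps (decoupling the variables, citing \eqref{eq:GroupAction_n}--\eqref{eq:wr}, the selection of $g'\bcequal g$ by $\varnothing$-extraction, and the well-definedness check modulo reordering) simply spell out what the paper compresses into its single equality justified by ``graph independence''.
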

    
    \begin{proof} 
      We compute directly the derivative of the product $U\cdot T$
      with respect to $g$ from the $\varnothing$-coefficient of 
     $\delta{(U\cdot T)}/{\delta g}$:
             \allowdisplaybreaks[1]
     \begin{align*}
       \dervpar{(U\cdot T)}{g} & = \bigg(\dervfunc{(U\cdot T)}{g}\bigg)_{\varnothing} \\ 
       & = \bigg(\dervfunc{}{g} \suml_{f\in J L\subset \FM^{\mtr{ab}}(\hf)}\suml_{\substack{\, (j,l)\in J\amalg L  \\ jl\sim f }}
    \langle u_{j}t_{l}\rangle_f\, f \,  \bigg)_{\varnothing}\\ 
     & =    \suml_{f\in J L\subset \FM^{\mtr{ab}}(\hf)}\dervfunc{f}{g}\suml_{\substack{\, (j,l)\in J\amalg L \\ jl\sim f }}
     \langle u_{j}t_{l}\rangle_f   \\
       & = G(g) \curvearrowright  \bigg[\suml_{\substack{\, (j,l)\in J\amalg L \\ jl\bcequal g }}
    \langle u_{j}t_{l}\rangle_g   \bigg] \\
      & =  \suml_{\Omega \in G(g)}\suml_{\substack{\, (j,l)\in J\amalg L  \\ jl\bcequal g }}
    \Omega \cdot  \langle u_{j}t_{l}\rangle_g   \,.
     \end{align*}
 For the second equality we inserted the coefficients explicitly, according to the 
     definition of the product. The fourth equality holds by graph independence,
     eq. \eqref{eq:independence}, guaranteed for a graph calculus.
     \end{proof}
    
    From now on let $2k(g)=|g^0|$ denote the number of vertices of $g\in \mathsf{G}_D$. 
    Our canonical example of system of graph-group actions have the form
    \begin{equation} \label{eq:canonicalSGGA}
    \{u_g, \mathscr V(g)= M_{D\times k(g)}(\mathbb Z),  G(g)=\Autc (g) \}_{g\in H }\,.
    \end{equation}
    Due to the rigidity of a coloured graph,
    each automorphism\footnote{There are 
    more than one definition of `automorphism of a coloured 
    graph'. The one used here is introduced in \cite{fullward}. In this setting,
    an automorphism of a coloured graph is a graph-morphism that
    preserves the colouring of the edges and bipartiteness of the vertex-set 
    in a strict way (not up to a permutation of colours as the factor 
    $1/3$ in the action of the quartic rank-$3$ model in \cite{On} suggests). 
    That is, edges of colour $a$ have to be mapped to 
    edges of colour $a$; black (resp. white) vertices to black (resp. white) vertices.} 
    $\pi$ of a connected graph $g$
    is determined by a permutation $\sigma$ 
    of the black (or white) vertices of $g$; we write 
    $\pi=\hat \sigma$ for such $\sigma \in \Sym(k(g))$.
    The action of the coloured automorphisms $\hat \sigma \in \Autc (g)\subset \Sym(k(g))$  on $M_{D\times k(g)}(\mathbb Z)$ is by permutation of the 
    matrix columns, $
    \yb^i \mapsto \yb^{\sigma(i)}$, $i=1,\ldots ,k$.
    As a notational remark, we will often write $k$ instead 
    of $k(g)$, as we just did, if the context is clear.
    
    \subsection{Three limit cases and examples}
    The previous lemma implies the Leibniz multivariable rule.  Before 
    elaborating on it, for the case of graph-group actions by
    automorphisms, $G(g)=\Autc(g)$,
    it will be helpful to exhibit this group action
    on a function $v_g$ in three limit cases, according to 
    the graph type of $g$. \par 
    Consider $\{\mathscr V(g),\Autc(g)\}_{g\in \hf}$, 
     an independent system
    graph-group actions,
    $\hf=\{h_1,\ldots,h_n\}\subset\mathsf{ G}_D$. 
    Let $g=h_1^{\alpha_1} \cdots h_n^{\alpha_n}$
    and let $U $ and $T$ graph-generated functionals by $J$ and 
    $L$, respectively,\footnote{
    Of course, one could just take the union of the 
    both spanning sets of graphs, if they do would not a priori coincide.} 
    being these subsets of the 
    the monoid generated by $\{h_1,\ldots,h_n\}$. 
    Then according to Lemma \ref{thm:Leibniz},
    \begin{align} \label{eq:wreath}
    \dervpar{(U\cdot T)}{g}=  \suml_{\Omega\in\Autc(g)}\suml_{\substack{(j,l)\in JL \subset \FM ^{\mtr{ab}}(\hf)\\ 
    j \cdot l \bcequal g}}\Omega \cdot  \langle u_{j} t_{l} \rangle_g\,.
    \end{align}
      This relation holds in a subdomain $\mathscr U(g)=\F_{D,k(g)} \subset \mathscr V(g)= M_{D\times k(g)(\Z)}$ to be justified later (see Sect. \ref{sec:GCtoTFT}):
          \begin{align*}
    \mathcal{F}_{D,k}:=   
    \{(\yb^1,\ldots,\yb^k) \in M_{D\times k} (\Z) \,| &\, y^\alpha_{c}\neq y^\nu_c \mbox{ for all } 
    c=1,\ldots,D 
    \\  & \, \, 
      \,\mbox{ and } 
    \alpha,\nu=1,\ldots, k, \alpha\neq \nu\}\,.
    \end{align*}
    The canonical action of $\Autc (g)$ obviously 
    restricts to this set $\mathscr U(g)=\F_{D,k(g)}$. 
    Recalling that
    $\Autc(g)=
    \Autc(h_1^{  \alpha_1}\amalg h_2^{  \alpha_2}\amalg \ldots\amalg h_n^{  \alpha_n})= \prod_{i=1}^n \Autc (h_i) \wr \Sym(\alpha_i)$, first we elaborate on three simple cases: 
    \begin{itemize}
     \itemB \textit{\textbf{Case I}: if $n=1$}. Then $g=h^\alpha$,  any
      $\Omega\in\Autc (g)=\Autc (h) \wr \Sym(\alpha)$ is given by
      $\boldsymbol \sigma=(\sigma^1,\ldots,\sigma^\alpha)\in
      \Autc(h)^\alpha$ and $\mu\in\Sym (\alpha)$,
      yielding for eq. \eqref{eq:wreath}  
     \[
     \Omega\inv \cdot (v_g)(X_1,\ldots,X_\alpha)= v_g(\sigma^{1}X_{\mu(1)},\ldots,\sigma^{\alpha}X_{\mu(\alpha)}) \,.
     \]
     Here each $X_A=(\xb^1_A,\ldots,\xb^k_A)\in \F_{D,k(h)}$ and the action of the automorphism 
     group $\tau \in \Autc (h)$
     is given by  $\tau(X_A)=(\xb^{\tau(1)}_A,\ldots,\xb^{\tau (k)}_A)$.
    \itemB \textit{\textbf{Case II}: if $n\neq 1 $ but $\alpha_A=1$ for all
     $A=1,\ldots,n$}. In this case, $g=h_1h_2\cdots h_n$.  Then
     $\Autc(g)=\prod_i\Autc(h_i)\ni
     \Omega=(\sigma_1,\ldots,\sigma_n)$, which acts like
     \[
     \Omega\inv \cdot (v_g)(X^1,\ldots, X^n)=v_g(\sigma_1 (X^1),\ldots,\sigma_n  (X^n))\,.
     \]
      \itemB \textit{\textbf{Case III}: If $g=h^\alpha_1\cdots h_n^{\alpha_n}$, but 
     all automorphisms $\Autc(h_i)$ are trivial}. Then 
     \[
    \qquad \Autc(g)=\prod_i \Autc(h_i)\wr \Sym(\alpha_i)= 
     \Sym(\alpha_1)\times \cdots\times \Sym(\alpha_n)\ni (\mu_1,\ldots,\mu_n)\,.
     \]
     We use now multi-index notation
     $\boldsymbol \alpha=(\alpha_1,\ldots,\alpha_n)$  and
     $\boldsymbol \gamma=(\gamma_1,\ldots,\gamma_n)$ and abbreviate
     $u_{\gamma_1,\ldots,\gamma_n}= u_{h_1^{\gamma_1}\cdots
       h_n^{\gamma_n}}$, and similarly for
     $t_{\gamma_1,\ldots,\gamma_n}$.  One can rewrite then
 
     \begin{align*}
     \qquad\dervpar{\,(U\cdot T)}{\,g} &=   \suml_{\mu_1 \in \Sym(\alpha_1)} \cdots  \suml_{\mu_n \in \Sym(\alpha_n)} 
     \suml_{\substack{(\gamma_i,\nu_i)\\ \gamma_i,\nu_i\geq 0 \\ \gamma_i+\nu_i=\alpha_i \\ i=1,\ldots,n}} (\mu_1,\ldots, \mu_n)^*\langle u_{\gamma_1,\ldots,\gamma_n} t_{\nu_1,\ldots,\nu_n}\rangle_{g}
       \\
       &= \suml_{\mu_1 \in \Sym(\alpha_1)} \cdots  \suml_{\mu_n \in \Sym(\alpha_n)}
      \\ 
      & \hspace{2.6cm}\suml_{\substack{ 0\leq \gamma_i \leq \alpha_i \\ 
     i=1,\ldots,n }} (\mu_1,\ldots , \mu_n)^*\langle u_{\gamma_1,\ldots,\gamma_n} 
     t_{\alpha_1-\gamma_1,\ldots,\alpha_n-\gamma_n}\rangle_{g}  
     \end{align*}
    in multi-index notation as
      \begin{align*}
     \dervpar{\,(U\cdot T)}{\,g} &= \suml_{\boldsymbol \mu \in \prod_{i_1}^n\Sym(\alpha_i)} \,\,
     \suml_{\boldsymbol \gamma \leq \boldsymbol \alpha } (\boldsymbol \mu )^*\langle u_{\boldsymbol\gamma} t_{\boldsymbol \alpha -\boldsymbol\gamma}\rangle_{g} \,.
     \end{align*}
      For constant functions $u_{\gamma_1,\ldots,\gamma_n}, t_{\gamma_1,\ldots,\gamma_n}$, this should reduce to the  multivariable product formula. Indeed,
      \begin{align*}
       \dervpar{\,(U\cdot T)}{\,g}
      &=\suml_{\boldsymbol \gamma \leq \boldsymbol \alpha }  \binom{\alpha_1}{\gamma_1}
      \cdots \binom{\alpha_n}{\gamma_n}  u_{\boldsymbol\gamma} t_{\boldsymbol \alpha -\boldsymbol\gamma} 
     \end{align*}
     which is just the Leibniz rule eq. \eqref{eq:Leibniz}. 
    
    \end{itemize}
    Conveniently, lower-case (super)indices ($i=1,\ldots,n$) of momenta label the
    graph type, whereas upper-case (sub)indices indicate the number of copy
    ($A=1,\ldots,\alpha_i$) of the $i$-th graph type.  \par We describe now
    the action of $\Autc(g)$ on a general function
    $v_g:M_{D\times k(g)} (\C)$.   $v_g:M_{D\times k(g)} (\Z)\to \C$. \par Let
    $\Xb=((X^1_1,\ldots,X^1_{\alpha_1})\ldots(X^n_1,\ldots,X^n_{\alpha_n}))\in M_{D\times k(g)} (\Z)$,
    being $X_A^i\in M_{D\times k(h_i)} (\Z)$ the momentum of the $A$-th copy of
    $h_i$, for $A=1,\ldots,\alpha_i$. Picking an element
    $\Omega=( \boldsymbol\sigma,\boldsymbol\mu)\in\Autc(g)$, with
    \begin{align}
                 (\boldsymbol\sigma_1,\ldots,\boldsymbol \sigma_n) &\in \prod_{i=1}^n\Aut(h_i)^{\alpha_{i}} \,,  \quad                                                                                                     \boldsymbol\sigma_i=(\sigma_i^1,\ldots,\sigma_i^{\alpha_i})\,,
    \\
    \boldsymbol \mu=(\mu_1,\ldots,\mu_n)&\in \prod_{i=1}^n\Sym(\alpha_i) \,,
    \end{align}
    the following holds:
    \begin{equation}
     (\Omega\inv \cdot v_g)(X_A^i) 
     =v_g \big((\sigma_{i}^{A}  X_{\mu_i(A)}^i)_{A}^i\big)
     \,,
    \label{eq:mainformula}
    \end{equation}
    which is short-hand notation for
    \begin{align*}
      &\Omega \inv \cdot v_g(\Xb) \\ & =  v_g \big((\sigma^{1}_{1}
      X^1_{\mu_1(1)},\sigma^{2}_{1}X^1_{\mu_1(2)}
      \ldots,\sigma^{\alpha_1}_{1}
      X^1_{\mu_1(\alpha_1)}),  
      \ldots,
      (\sigma_n^{1}X^n_{\mu_n(1)},\ldots,\sigma_n^{\alpha_n}
      X^n_{\mu_n(\alpha_n)})\big) \, .\end{align*} 
      Usually, it is summed all over
    $\Omega\in\Autc(g)$, which being a group, allows us to choose
   whether we put the inverse in \eqref{eq:mainformula}.

    \begin{ex}
    For   $e=\meloncik$, $f=\vuno$ and $g=\kthree$,
     let the subsets
    \[
    H=\{\varnothing, e^2, f^2 
    \} \qquad \and\qquad L=\{  e^3f,e^2 g ,e^2 f^2 g \}
    \]
    span the functionals $U = \sum_{h\in H} u_h h$ and
    $ T=\sum_{l\in L} t_l l$, and consider their product $V=U\cdot T$.
    According to the action \eqref{eq:mainformula}, one has the
    following formulae:
    \begin{itemize}
      \itemB For $b=e^5f$, \vspace{-.5cm}
     \begin{align}
     \dervpar{V}{b} & = \sum_{\Omega\in\Autc(b)} \Omega \cdot  \langle u_h t_l \, \rangle_b  \nonumber\\
      & =\sum_{\Omega\in\Autc(e^5f)} \Omega \cdot \langle u_{e^2} t_{e^3f} \, \rangle_b  \nonumber\\
      & = \sum_{(\sigma,\epsilon)\in{\Sym(5)\times \Z_2}} (\sigma,\epsilon)\cdot (u_{e^2} t_{e^3f})\,,
     \end{align}
     since
     $\Autc(e^5f)=\Autc (\meloncik) \wr \Sym( 5) \times \Autc (\vuno)
     \wr \Sym (1)=\{1\}\wr \Sym( 5) \times \Z_2 \wr \{1\}$. Also the
     ordering $\langle \balita \rangle_b$ is trivial. This in turn means that for the
     five momenta $X_A \in\F_{1,D=3}=\Z^3$ of $e^5$ and the momentum
     $Z=(\zb^1,\zb^2)\in \F_{2,3}$,
    \[
    \dervpar{V}{e^5|f}(X_A,Z)= \sum_{(\sigma,\epsilon)\in{\Sym(5)\times \Z_2}} (u_{e^2} t_{e^3| f})({ X_{\sigma (A)}},\zb^{\epsilon(1)},\zb^{\epsilon(2)})
    \]
    in abstract notation, or displaying the graphs:
    \begin{align*}\qquad\quad
    \dervpar{V}{\meloncik^5|\vuno}(X_A,Z)&= \suml_{(\sigma,\epsilon)\in{\Sym(5)\times \Z_2}} 
    u_{\meloncik|\meloncik} (X_{\sigma(1)},X_{\sigma(2)}) \\
    & \qquad \times t_{\meloncik|\meloncik|\meloncik| \vunito}(X_{\sigma(3)},X_{\sigma(4)},X_{\sigma(5)},\zb^{\epsilon(1)},\zb^{\epsilon(2)})\,.
    \end{align*}
     \itemB For $b'=e^2f^2g$, 
     \begin{align}
     \dervpar{V}{b'} & = \sum_{\Omega\in\Autc(e^2f^2g)} \Omega \cdot \langle u_\varnothing t_{e^2f^2g} \, \rangle + \Omega \cdot \langle  u_{f^2}t_{e^2 g} \,\rangle \nonumber \,.
     \end{align}
     For $X_A$ momenta of $e^2$, $Z_A$ momenta of $f^2$ and total
     momentum $\Xb= (X_1,X_2,Z_1,Z_2,W)$, one sums over elements
    $ \Omega=(\epsilon,(\sigma^1,\sigma^2;\mu),\tau ) \in \Autc(e^2f^2g)=\Sym (2)\times(\Z_2 \wr \Sym(2))\times \Z_3$, which yields for $(\partial V/\partial b')(\Xb)=\big(\partial V/ \partial(\meloncik^2|\vuno^2|\kthree)\big)(\Xb)$
    the expression
     \allowdisplaybreaks[1]
     \begin{align*}
     &  \suml_{\Omega } 
     \big[\Omega \cdot ( u_\varnothing t_{e^2f^2g} ) + 
     \Omega \cdot ( [(13)(24)]^* u_{f^2}t_{e^2 g}  ) \big](\Xb) \nonumber  \\
     & =  u_\varnothing \cdot\bigg\{\suml_{ (\epsilon,(\sigma^1,\sigma^2;\mu),\tau )} 
     [ t_{e^2f^2g}(X_{\epsilon(1)},X_{\epsilon(2)}),({\sigma^{1} Z_{\mu(1)}},\sigma^2 Z_{\mu(2)} ,\tau (W)]\bigg\} \\
    &\qquad\quad + \suml_{ (\epsilon,(\sigma^1,\sigma^2;\mu),\tau )}  u_{f^2} ({\sigma^{1} Z_{\mu(1)}},\sigma^2 Z_{\mu(2)})  t_{e^2 g}
    (X_{\epsilon(1)},X_{\epsilon(2)}) ,\tau (W) ))\nonumber   
     \\
     &= u_\varnothing\cdot \bigg\{\suml_{ (\epsilon,(\sigma^1,\sigma^2;\mu),\tau )} 
     [ t_{\meloncik^2|\vunito^2|\kthreecik}(X_{\epsilon(1)},X_{\epsilon(2)}),({\sigma^{1} Z_{\mu(1)}},\sigma^2 Z_{\mu(2)}) ,\tau (W)] \bigg\}\\
    &\qquad\quad+\suml_{ (\epsilon,(\sigma^1,\sigma^2;\mu),\tau )}  u_{\vunito^2} ({\sigma^{1} Z_{\mu(1)}},\sigma^2 Z_{\mu(2)}) \times  t_{\meloncik^2 | \kthreecik}
    (X_{\epsilon(1)},X_{\epsilon(2)}) ,\tau (W) )\nonumber  \,.
     \end{align*}
      \allowdisplaybreaks[0]
    One should still insert the explicit momenta
    $Z_A=(\zb^{1}_A,\zb^{2}_A)\in\F_{2,3}$, 
    $W=(\mathbf{w}^1,\mathbf{w}^2,\mathbf{w}^{3}) 
    \in \F_{3,3}$,
    $\tau( W)=
     (\mathbf{w}^{\tau(1)},\mathbf{w}^{\tau(2)},\mathbf{w}^{\tau(3)})$, 
     and $\sigma^{A}Z_{\mu(A)}= (\zb^{\sigma^{A}(1)}_{\mu(A)},\zb^{\sigma^{A}(2)}_{\mu(A)})$.
     
    \end{itemize}
    
    \end{ex}

    \section{Tensor models}\label{sec:back}
    
    In this section, we implement the graph calculus for TFT.     

    \subsection{Tensor Field Theory}

The main idea to use a 
the Ward-identity \cite{DGMR} to decouple the Schwinger-Dyson equations (at a planar sector)
and to obtain a master integral equation for the 2-point functions for matrix
models \cite{gw12} can been extended to the TFT  setting.
Some progress along these lines has been made for complex tensor field theory 
and consists in the further study \cite{fullward} of the Ward-Takahashi identity of Ousmane-Samary \cite{DineWard} in order to descend the SDE tower \cite{SDE}
and eventually find closed equations. This led lately to the 
large-$N$ limit \cite{cinco} of the connected-$\partial$ SDE.
    We treat a TFT as inspired by 
    group field theory \cite{asymptotic,3Dbeta,Carrozza:2012uv,wgauge}.

    Unlike matrix models, where there is a canonical way of forming a scalar, for 
    tensor models a specific trace $\Tr_\B$, indicating how to contract the indices,
    should be specified. These traces $\Tr_\B$ are indexed by $D$-coloured graphs $\B$,
    where $D$ is the rank of the tensors $\phi_{x_1\ldots x_D}, \bar\phi_{x_1\ldots x_D}$. 

    The graphical representation of these traces derives from to the independence
    of the imposed transformation rules under the action of $\uni(N)$ 
    on the spaces corresponding to each index $x_a$ of 
    $\phi_{x_1\ldots x_D}, \bar\phi_{x_1\ldots x_D}$, for $a=1,
    \ldots,D$, deemed \textit{colouring.} That is to say, to form 
invariants only indices of identical colour can be contracted.
    
    Therefore, a trace corresponding to a quartic interaction would be, say, 
    $\vuno$ formed by colour-wise contracting
    the indices  with deltas, as follows:\[\Tr_{\vunito}(\phi,\bar\phi)= \sum_{\mathbf x,\mathbf y}
    \phi_{x_1y_2 x_3}  \bar\phi_{x_1y_2y_3}
    \phi_{y_1y_2 y_3}  \bar\phi_{y_1x_2x_3}
    \,.\] 
    The actual index-set of the tensors is $\{1,\ldots, N\}$,
    but thinking of $N$ as a large integer, we typically write
    these sums over $\mathbb N$ or $\Z$, or omit the domains in the sums. 
    Although orthogonal groups \cite{On}, compact symplectic groups \cite{CarrozzaSp} and mixed symmetries
    \cite{reviewTanasa} define other classes of tensor models,
    we restrict our discussion to the $\uni(N)$-tensor models we just introduced.
    
    A tensor model is thus determined by a dimension $D$ (the rank of the tensors)
    and an action $S[\phi,\bar\phi]$ given by a finite sum of traces 
    indexed by connected $D$-coloured graphs. 
    The partition function is given by 
    \begin{align} \label{C_measure}
 Z[J,\bar J]&=Z[0,0]\frac{\int\Df[\phi, \bar\phi] \,\ee^{\Tr{(\bar J\phi)}+
\Tr{(\bar\phi J)}-N^{D-1}S[\phi,\bar\phi]}}{\int\Df[\phi, \bar\phi]\,\ee^{-N^{D-1} S[\phi,\bar\phi]}}\,, \\
 \Df[\phi, \bar\phi]&:= \prod\limits_{\mathbf {x} } N^{D-1}\frac{\dif\varphi_{\mathbf x} \dif\bar\phi_{\mathbf{x}}}{2\pi \ii} \,. \nonumber
\end{align} 
Its logarithm, $W\jj=\log Z\jj$ is the \textit{free energy} and
generates the connected correlation functions, which as pointed out
before, are classified by boundary 
(possibly disconnected $D$-coloured) graphs. 
    
    \subsection{From graph calculus to tensor field theory} \label{sec:GCtoTFT}
    For a deeper exposition and motivation of the terminology and proofs of the results  
    exposed in this section, we refer to 
    \cite{fullward}.
     
    Some objects of interest in tensor models are functionals
    generated by graphs (e.g. the free energy). 
    By this we mean 
    expansions in graphs with functions (or distributions) as coefficients.
    For a graph $g\in H \subset \mathsf G_D$ we recall that
    $2k(g)$ denotes its number of vertices. One is interested in collections
    of functions 
    \[\{u_g: \Z^{D\times k(g)}\to \C\}_{g\in H}\] and in
    their generating functionals
    \[U[J,\bJ] = \sum_{g\in H \subset \mathsf G_D} u_{g} \star
      \J(g)\,, \qquad \where u_{g} \star \J(g) := \sum_{\Xb\in
        \Z^{D\cdot k}}(u_g(\Xb)) \J(g) (\Xb)\,. \] Here
    $\J(g)(\Xb)=\prod_{\alpha=1}^{k(g)} J_{\xb^\alpha}
    \bJ_{\yb^\alpha}$, where $\{\yb^\alpha\}_{\alpha}$ are determined
    by $g$ and $\Xb$ through  $g_*(\Xb)=(\yb^1,\ldots,\yb^{k(g)})$.  The
    induced map $g_*$ is defined as follows.  The $D$-tuple $\xb^\alpha$
    (resp. the $\yb^\alpha$) indexes white (resp. black) vertices in a
    graph. Then $g_*: M_{D\times k(g)} (\Z) \to M_{D\times k(g)} (\Z)$
    is given by
    $\Xb=(\xb^1,\ldots,\xb^{k(g)})\mapsto
    g_*(\Xb)=(\yb^1,\ldots,\yb^{k})$, where $y^\alpha_c =x^\nu_c$ (for
    $\alpha=1,\ldots,k$) if and only in the graph $g$ there exists a
    $c$-coloured edge starting at $\xb^\alpha$ and ending at
    $\yb^\nu$. As before, $\Xb$ 
    is called momentum,  but also each one of these arguments $\xb^\alpha$
    is referred to as (\textit{entering})
    \textit{momentum} of the white vertex $J_{\xb^\alpha}$. Similarly
    $\yb^\nu$ is the (\textit{outgoing}) \textit{momentum} at the
    black vertex $\bar J_{\yb^\nu}$; the terminology relies on Figure \ref{fig:combinatorics}.
    Although an ordering of the vertices is assumed,
    notice that $\J(g)$ is independent of it. 
    
    \begin{figure}\centering
\begin{minipage}{.5\textwidth}\centering
\includegraphics[width=.634\textwidth]{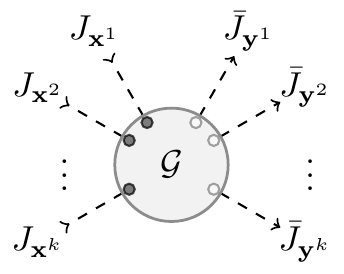}
\end{minipage} 
\centering
\begin{minipage}{.46\textwidth}
{
\includegraphics[width=.936\linewidth]{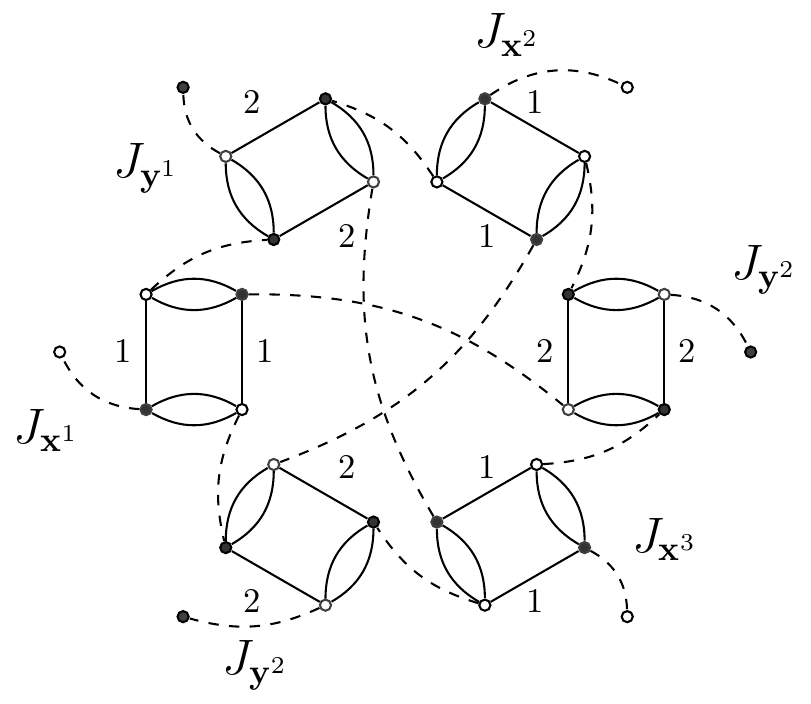}}\end{minipage}
\caption[Induced map $g_*$ and momenta transmission]{\textit{Left.} On the interpretation of the induced map $g_*$.
One takes any representing graph $\G$ such that $\partial \G=g$. 
For the quartic melodic model (pillow-interactions) we can do so
because it has been proven in \cite{fullward} that the 
spectrum of boundary states is full, which is to say, the boundary graphs
are all of $\mathsf G_D$. Assuming 
an ordering on the white and on the black vertices, the 
components $\yb^\alpha$ of 
$g_*(\xb^1,\ldots,\xb^k)=(\yb^1,\ldots,\yb^k)$ 
are determined by this picture, seeing the $\xb^\mu$'s
as independent momenta entering the graph, 
with output $g_*(\xb^1,\ldots,\xb^k)$. \textit{Right}. 
If $g= \kthree$, and we choose the numeration
 $ x^\alpha_1=y^\alpha_1$ for $\alpha=1,2,3$;
 there  that $\G$ satisfying $\partial \G=\kthree$ is shown.
 The map 
 $(\kthree)_*(\xb^1,\xb^2,\xb^3)=(\yb^1,\yb^2,\yb^3)= ( (x^1_1, x_2^2 ,x_3^3)^t,
  (x^2_1 , x_2^3 , x_3^2)^t, ( x^3_1 , x_2^1 , x_3^2 )^t) $
  is determined by following the $0a$ momenta lines
 for each colour $a$
 \label{fig:combinatorics}}
\end{figure} 
 Consider the canonical system of graph-group actions
  introduced in eq. \eqref{eq:canonicalSGGA} \[ 
    \{u_g,   M_{D\times k(g)}(\mathbb Z),  \Autc (g) \}_{g\in H }
\,.\] 
 A very important domain where the graph derivatives shall be defined 
 is the subspace 
 $  \mathcal F_{D,k(g) } $ of $ \Z^{D\cdot k(g)}\simeq  M_{D\times k(g)}(\mathbb Z)$
  consisting of points outside all the
    coloured diagonals, i.e.
    \begin{align*}
    \mathcal{F}_{D,k}:=   
    \{(\yb^1,\ldots,\yb^k) \in M_{D\times k} (\Z) \,| &\, y^\alpha_{c}\neq y^\nu_c \mbox{ for all } 
    c=1,\ldots,D 
    \\  & \, \, 
      \,\mbox{ and } 
    \alpha,\nu=1,\ldots, k, \alpha\neq \nu\}\,.
    \end{align*}
For a \textit{connected} graph $g$, 
the elements of $\Autc(g)$ are a lift $\hat \sigma$ of an
element $\sigma$ of the symmetric group $\Sym{(k(g))}$ (see 
discussion below eq. \eqref{eq:canonicalSGGA}, or \cite{fullward} for details).  Defining
    \begin{equation}
     \label{eq:PartialGraphDerivative}
    \dervpar{U\jj}{\,g(\Xb)}= \prod_{\alpha=1}^{k(g)}\dervfunc{}{J_{\xb^\alpha}}\dervfunc{}{\bJ_{\yb^\alpha}} {U\jj}\bigg|_{J=\bar J=0}\,,\quad  \Xb=(\xb^1,\ldots,\xb^{k(g)}) \in \mathcal F_{D,k(g)}\,,
    \end{equation}
    one can give this derivative the 
    the meaning of eq. \eqref{eq:GroupAction} as a permutation of the
    arguments of $u_g$, that is  
    \begin{align}
    \dervpar{U\jj}{\, g (\mathbf X)}&= 
    \suml_{\hat\sigma \in \Autc( g)} (\sigma \cdot u_g )(\mathbf{X}),\label{eq:sumovergroup}
    ~\where~ \\
    (\sigma \cdot u_g  )(\xb^1,\ldots,\xb^{k(g)}):\!&= 
    u_g(\xb^{\sigma\inv(1)},\ldots,\xb^{\si\inv(k(g))})\,, \quad \label{eq:dropinverse}
    \end{align}
    for all $\Xb=(\xb^1,\ldots,\xb^{k(g)}) \in \mathcal F_{D,k(g)}  $. 
    This statement is \cite[Lem. 4.1]{fullward}.
    
    \begin{rem}
      Notice that in eq. \eqref{eq:sumovergroup} it is summed over the
      group $\Autc(g)$. We are therefore entitled to drop the inverse in the RHS in
      \eqref{eq:dropinverse}. However, if the sum is not over all the
      group, we will keep the right `orientation' of the action, for
      the convention in single terms $(\sigma \cdot u_g )$ is
      important 
      in that case.
    \end{rem}

    For graphs $h,g$, functions $u_g$ and functionals $U\jj$ we
    abbreviate the usual notation as follows:  \begin{equation} \label{eq:abreviar} U=U\jj\,,\quad u_g
      g:=u_{g} \star \J(g)\, \quad\and \quad g h:=\J(g)\J(h)\,,
    \end{equation}
    and treat the latter as a product of graphs. This product 
    is not considered commutative, since the star $\star$ 
    implies an ordering in the arguments of a function $u_{gh}$,
    which need not satisfy $u_{gh}=u_{hg}$.
    Now we exhibit the relation to the generating functional of
    group actions.  With the product defined above, consider the
    functional $V^{\leq n}$ that generates the \textit{connected} correlation 
    functions of TFT 
    \[
     V^{\leq n}=\sum_{\B\in \mathsf G_D^n} G_\B \,\B\,.
    \]
     (with $G_\varnothing=0$.) Here $n$ is a large integer, and $\mathsf G_{D}^n$ is a finite set
    of coloured graphs whose elements $\mtc D$ satisfy
    \begin{equation}
    \label{eq:trunca}
    \#(\mbox{vertices of } 
    \mtc D
    )   \leq 2n\,.
    \end{equation}
    In tensor models, the subindices of the functions corresponding to
    the graph $\A\amalg \B$ (also written as juxtaposition
    $\A\B$) are rather denoted by $G_{\A | \B}$.  These particular functions 
    $G_{\mathcal D}$ satisfy invariance under $\Autc(\mathcal D)$;
    this means invariance under the 
    $\Autc$-groups of the connected components of $\mathcal D$ and, for any isomorphic connected components $\A$ and $\B$  of $\mathcal D$,
    $G_{\ldots|\A|\ldots |\B|\ldots } =G_{\ldots|\B|\ldots |\A|\ldots
    }$.
   
    The truncation \eqref{eq:trunca} would declare vanishing 
    all the floors above the $n$-th floor of the SDE-tower, 
    but we can increase $n$ at desired accuracy. It bounds
    any graph $\mathcal D$ appearing in an non-identically 
    vanishing correlator $G_\mtc{D}$ to
    have $n$ components at most.  In rank $3$,
    since the canonical (optimal in number of vertices) 3-coloured 
    graph   of genus $g$ has $4g+2$ vertices, this truncation bounds
    the genus through $ 2g+1 \leq n$, making higher-genera boundary states
    vanish.  We write the infinite sums keeping in mind that we mean
    their $n\to \infty$ limit.
    The coloured Borel transform of $V^\infty=\lim_{n\to \infty} V^{\leq n} $ 
    is called the \textit{free energy}:
    \begin{equation}
    \label{eq:freeenergy} 
     W=\mtr{B_c}(V^\infty)=\sum_{\mtc D\in \mathsf G_D} 
     \frac{1}{|\Autc(\mtc D)|} G_{\mtc{D}}\,\mtc D\,.
    \end{equation}
    This equation holds in `$N=1$-units' and this assumption is innocuous within
    the scope of this article. 
    However, if one plans to proceed perturbatively in $1/N$,
    the realistic case that drops this simplification ought to be addressed.
    Adding the power counting conjectured in \cite{cinco} 
    that scales $G_{\mathcal{D}} \to N^{\gamma(\mathcal D)}G_{\mathcal{D}}$,
    where $\gamma(\mathcal D)$ is certain  factor already determined for 
    the 2-pt and 4-pt functions, 
    would help analysing the convergence of $W$ (see Sect. \ref{sec:Outlook}).\par 
    This free energy functional (but not only this) corresponds to a  
    system of graph-group actions that has the following constituents:
    \begin{itemize}
      \itemB For each graph $\B$, $\mathscr V(\B)=M_{D\times k(\B)}(\Z)$.
      \itemB For a disconnected graph $\mtc D=
      \amalg_p \B_p$ one has
     \[\mathscr V(\mathcal D)=M_{D\times k(\mathcal D)}(\Z) =
     \prod_p \mathscr V(\B_p)\,.\]
         \itemB There is an action of each $\Autc(\mathcal{ \B}_p)$ on
     $M_{D\times k(\B_p)}(\Z)$. Since
     $\sigma \in \Autc(\mathcal{ \B}_p)\subset \Sym(k(\B_p))$,
     precomposition by a function by $\sigma$  permuting the columns
     of $M_{D\times k(\B_p)}(\Z) $  gives this action.
     \end{itemize}
      For connected graphs $\B_i$ and $\mathcal B_j$, 
     the following holds \cite[Lemma 4]{fullward}:
     \[
     \dervfunc{\B_i}{\B_j}=\delta_{ij} \Autc(\B_i)   \,,
     \]
     on the domain $\F_{D,k(\B_j)} \subset M_{D,k(\B_j)}$. For general disconnected graphs $\mathcal D=\B_1^{\alpha_1}\amalg \cdots\amalg \B_m^{\alpha_m}$
       \begin{equation} \label{eq:groupactionindependenceAut}
      \dervfunc{\mathcal D}{\mathcal D} = \Autc(\mathcal D) = \Autc(\B_1)\wr \Sym(\alpha_1)
      \times \cdots \times \Autc(\B_n)\wr
      \Sym(\alpha_m) \,
    \end{equation}
    on $\mathcal F_{D,k(\mtc D)}\subset M_{D\times k(\mathcal D)}(\Z)  $. 
    One can then operate with functionals, now without needing to evaluate the
    sources at 0. That is,
    \[
    \dervfunc{U\jj}{g(\Xb)} = \prod_{\alpha=1}^{k(g)}\dervfunc{}{J_{\xb^\alpha}}\dervfunc{}{\bJ_{\yb^\alpha}}  U\,.
    \]
        This object is, unlike the function $\partial U/\partial g$, a
    functional that can be graph-derived again (without getting something
    a trivial result) and get again a functional generated by graphs. Contrast this with eq. \eqref{eq:PartialGraphDerivative}, whose result is a function. 
    \par
    
    Another important functional in the next derivation is the so-called $Y$-term
    that emerged in the derivation of the Ward-Takahashi identity
    \cite{fullward} and which encodes all the pertinent insertions 
    of $2p$ point functions into $2p-2$ point functions for all $p$.
    
    The  expression to order six is given in
    \cite[Lemma 4.1]{SDE}, but this article only will evoke 
    the $Y$-term up to order four, located in Appendix \ref{sec:App}.  For this paper, it is sufficient 
    to additionally know the expression 
    \[
    Y\hp c_{x}= \sum_{\B \in \mathsf{G}_D} \mathfrak{f}\hp c_{\B,x} \,\B
    \,,
    \]
    where $c\in \{1,\ldots, D\}$ is a colour and $x\in\Z$.  It is
    important to notice that unlike $W$ (for which we set
    $G_\varnothing=0$), there is a non-vanishing constant term
    $\mathfrak{f}\hp c_{\varnothing,x}$ in $Y\hp c_x$. Each 
    function coefficient $\mathfrak{f}\hp c_{\B,x}$ of a 
    graph $\B$ denotes a triple propagator contraction of vertices from $\vcv$ 
    with vertices of the graph $\mtc C$ having $2k(\mtc C)=2+2 k(\B)$ vertices and such that the whole contraction's boundary is $\B$.
    Although $G_{\mtc C}$ is symmetric with respect to 
    action of $\Autc (\mtc C)$, the resulting insertion 
    need not to be $\Autc (\mtc B)$-symmetric. 
     \par 
     In order to derive any of the $2p$-point SDE\footnote{Here we mean the melonic quartic model. The
     bound on the vertices is model dependent and justified
     in the statement of Theorem \ref{thm:SDEDisconnected}.}  
     we shall employ the graph calculus  $\mathscr C(\hf)$  with variables
     $\hf = \{$connected, closed, $D$-coloured graphs with $\leq 2(p+1)$ vertices$\}$, 
     being the system of graph-group actions 
     the canonical one given by 
     automorphism groups (see \eqref{eq:canonicalSGGA} above for details).

    \section{Disconnected-boundary Schwinger-Dyson equations} \label{sec:teorema}
    
    The next section introduces the model whose SDE are found in Section \ref{sec:main}.
    
    \subsection{The quartic melonic tensor field theory }
  The $\phi^4_{D,\mtr{m}}$-theory is the model with quartic interaction vertices
    $\Sint[\phi,\bar\phi]=\lambda \sum_{a=1}^D{\Tr_{ 
        V_a}}(\phi,\bar\phi)$. These vertices are sometimes
    called pillows, since the graphs they correspond to 
    have that appearance:
    \begin{equation} \label{eq:DefVa}
    V_a=\logo{4}{VaD}{7.9}{4} \qquad a=1,\ldots,D\,.
     \end{equation}
     We analyse this theory with an abstract Laplacian
 $E:\Z^D\to \re_{\geq 0}$ as propagator, $S_0[\phi,\bar\phi]=\Tr_{\meloncik}(\bar\phi,E \phi)=\sum_\xb \bar\phi_\xb E_\xb \phi_\xb$,
 assumed here to satisfy the following technical assumption:
 for each colour $c$, the difference 
    \begin{equation}
\label{eq:kinetic}
    E(t_c,s_c):=E_{p_1\ldots t_c\ldots p_D} - E_{p_1\ldots s_c\ldots p_D}\, 
    \end{equation}
    is independent of the fixed momenta in colours different from $c$.
    Such kind of technical conditions permit to exploit the 
    Ward-Identity and are common. In matrix field theory 
    \cite[Thm. 2.3]{gw12}, this is analogous to the assumed injectivity of $n\mapsto E_n$
    for the generalised matrix Laplacian $E_{\mtr{matrix}}=\mtr{diag}{(E_n)_{n\in \N}}$ there. 
    \subsection{Main result}\label{sec:main}
    We prepare\footnote{We come back to the usual notation for graphs or `bubbles'.}
    now some notions and notations needed for 
    the formulation of the main result. 
    Let $\R$ be a connected graph, $2r$ its number of vertices,
    and $X\in\F_{r,D}$. 
    Given a colour $c$ and a numeration $w^1,\ldots, w^r$ of
    the black vertices of $\R$ we set, for any $1\leq \rho \leq r$,
    \[\mathrm{Br}\hp{2}(\R,\rho,c):=  
      \{\tau\in\{1,\ldots,r\} \,|\, \varsigma_c(\mtc R,w^\rho,w^\tau)
      \mbox{ is disconnected}\}\,. \] Thus, $\tau$ belonging to this
    set indexes the vertices $w^\tau$ at which a swap of the colour
    $c$ edge at $w^\tau$ and $w^\rho$ disconnects $\mtc R$. 
    In other words, $\mathrm{Br}\hp{2}(\R,\rho,c)$ indexes $c$-coloured 
    edges of $\mathcal R$ that, paired with the 
    $c$-coloured edge at $w^\rho$, form a $2$-bridge. 
    If $\mathrm{Br}\hp{2}(\R,\rho,c)\neq \emptyset $, then 
    $\R$ must be, in particular, 3-edge connected.  
    
   \begin{ex}
   The coloured utility  graph $\kthreecik$ satisfies $\mathrm{Br}\hp{2}(\kthreecik,\rho,c)=\emptyset$
for any colour $c$ and vertex $\rho$ (`no edge-swap separates it'). On the other hand, if  $\{\rho,\mu\}=\{1,2\}$ label 
the two black vertices of the pillow $\vuno$, then \[\mathrm{Br}\hp{2}(\vuno,\rho,c=1)=\{\mu\} \mbox{ and }\mathrm{Br}\hp{2}(\vuno,\rho,c)=\emptyset \mbox{ for } c=2,3\,. \] 
    \end{ex}

 \begin{preparation*}
     Let $\mathcal D$ be a $D$-coloured 
      graph with $2d$ vertices. By \cite[Thm. 1]{fullward} $\mathcal D$ is 
      a boundary graph of the 
      $\phi^4_{D,\mtr{m}}$-theory and $G_{\mathcal D} \equiv \hspace{-2.3ex}/\hspace{1ex} 0 $.  Given any
      $\Xb\in\mathcal F_{d,D}$, we select an outgoing momentum $\sb=\yb^\beta$
      listed in
     \begin{equation}\label{eq:listmomemta}
      (\yb^1,\ldots,\yb^\beta,\ldots,\yb^d)=\mtc D_* (\Xb)\in \mathcal F_{d,D}\,.
     \end{equation}
     This $\sb$ determines both a connected graph component $\R$ of
     $\mtc D$, and an $r$-tuple $X_0\in \mathcal F_{r,D}$ of momenta,
     being $2r$ the number of vertices in $\R$, by asking that $\sb$
     appears listed in the $r$-tuple $\mtc R_*(X_0)$, particularly.  
     Different choices of the distinguished momentum variable $\sb$ ---say $\sb=\yb^{\beta_1}$
     and $\sb=\yb^{\beta_2}$---
     lead to a different SDE 
     whenever the respective $\beta_1$-th and $\beta_2$-th black vertices lie on different 
     connected components; or, less obviously, when such vertices do lie in the same 
     connected graph, $\mathcal R$, but they are not related
     by a non-trivial element of $\Autc(\R)$. In particular,
     if the distinguished 
     connected component $\R$ has no symmetries, 
     then there are exactly half as many SDE's as vertices of $\mathcal R$, to wit $k(\R)$ equations (for that connected component).
     
     For the rest of components of $\mtc D$ we write $\mathcal Q$
     $(\mtc D=\mtc R\amalg \mathcal Q)$. We factorise $\mathcal Q$ in
     $\alpha_u$ copies of pairwise non-isomorphic connected graphs
     $\{Q_u\}_{u=1,\ldots,n}$:
   \[\mathcal Q= Q_{1}^{\alpha_{1}}\amalg\ldots \amalg Q_{n}^{\alpha_{n}}\,, \qquad\qquad (Q_i^{\alpha_i}=\amalg_{A=1}^{\alpha_i} Q_i ).\]
      We split the $d$-tuple $\mathbf X$ into momenta $X_0$ of $\mtc R$ and
      momenta $\mtb X$ of $\mtc Q$, so that  $(X_0,\mathbb X)=\Xb$, up
      to reordering.  For $\tau \in \mathrm{Br}\hp{2}(\R,\beta,c)$ we can
      therefore write \[\varsigma_c(\R;\beta,\tau)= \mtc R'\amalg \mtc R'',\]
      and accordingly split the momentum $X_0$ in momenta $X_0'$ of
      $\mtc R'$ and $X_0''$ of $\mtc R''$,
      $X_0=(X_0',X_0'')$. Furthermore, for any factorising pair of
      graphs
    \[ \mtc C, \B  \in \mathsf G_D \mbox{ with } \mtc C \amalg \B \cequal \mtc Q\,,
    \]
    we define the two functions $H\hp{c,\tau}_{\mtc C,\mtc B}$ and
    $ I\hp c_{\mathcal C,\mathcal B}$ by the following products:
    \begin{align*}
    H\hp {c,\tau}_{\,\mathcal C,\mathcal B}&= \,\,\, \big\langle\, G_{\mtc R' | \mtc C }(X_0'; \balita ) 
    \times  G_{\mtc R''| \mtc B }(X_0''; \balita ) \, \big\rangle_{\mtc Q}\,, \qquad (\tau \in \mathrm{Br}\hp{2}(\R,\beta,c))\\
     I\hp c_{\mathcal C,\mathcal B}&= \frac{1}{|\Autc (\B)|}\,
    \big\langle \,\mathfrak f\hp c_{\mathcal C,s_c} \, \times 
     \,G_{\mathcal R | \B}(X_0; \,\balita) \big\rangle_{\mtc Q}\,,
    \end{align*} 
    where the reorderings refer only to the graph components of the
    graphs in the pair $(\mtc C, \B)$.  The pivotal term, appearing 
    in each equation, is
    \[\mathfrak f\hp c_{\varnothing,s_c}=Y\hp c_{s_c}[0,0]=
    \suml_{\mathbf q_{\hat c}}
    \GDmelon(s_c ,\mathbf q_{\hat c}) \,.\] 
    \end{preparation*}
    \begin{thm} \label{thm:SDEDisconnected}
    For the $\phi^4_{D,\mathrm{m}}$-theory with kinetic term \eqref{eq:kinetic}, 
    the Schwinger-Dyson equations for the disconnected graph $\mtc D$
    read, for the particular vertex choice $\sb=\yb^\beta$, as follows:
            \allowdisplaybreaks[1]
    \begin{align}
    \nonumber
     &G_{\mtc D}\hp{2k}(\Xb)  \\
    & = \nonumber
    \frac{(-2\lambda)}{E_{\sb}} \suml_{c=1}^D \Bigg\{ 
    \suml_{\hat \sigma \in \Autc({\mtc D})} \sigma \cdot  \mathfrak{f}\hp{c}_{{\mtc D},s_c}(\Xb) 
     \\ \nonumber
     &    +
     \suml_{\rho \neq \beta }  \frac{1 }{  E(y^\rho_c,s_c)} \bigg[\dervpar{\,W[J,\bJ]}{\varsigma_c({\mtc D} ; \beta,\rho) }  (\Xb)
    - \dervpar{\,W[J,\bJ]}{\varsigma_c({\mtc D} ; \beta,\rho) } (\Xb|_{s_c \to y_c^{\rho}})\bigg] \nonumber
     \\
    &  \label{eq:SDEs} 
    - \suml_{b_c}\frac{1}{E(s_c,b_c)} \big[ G_{\mtc D}\hp{2k}(\Xb) - 
    G_{\mtc D}\hp{2k}(\Xb|_{s_c \to b_c}) \big]    
    \\ \nonumber 
    &    + \suml_{\tau\in \mathrm{Br}\hp{2}(\R;\beta,c)}\bigg[\suml_{\substack{
    (\mathcal C,\B)\in \mathsf G_D \amalg  \mathsf G_D
    \\ (\mathcal C\amalg \B) \bcequal \mathcal Q}} 
    \suml_{\Omega\in \Autc( \mathcal Q)} 
    \frac{(\Omega \cdot  H\hp{c,\tau}_{\mathcal B,\mathcal C}) (\mtb X)-(\Omega \cdot  H\hp{c,\tau}_{\mathcal B,\mathcal C}) (\mtb X|_{s_c\to y^\tau_c})}{E(y^\tau_c,s_c)}\Bigg]
    \\
    \nonumber 
    &      + \suml_{\substack{
    (\mathcal C,\B)\in \mathsf G_D \amalg \mathsf G_D
    \\ (\mathcal C\amalg \B) \bcequal \mathcal Q}}  \suml_{\Omega \in \Autc( \mathcal Q)} 
    (\Omega \cdot I\hp{c}_{\mathcal B,\mathcal C}) (\mathbb X)
    \Bigg\}   \,.
    \end{align}
    
    \end{thm}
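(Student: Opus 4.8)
The plan is to obtain the equation from the invariance of the functional-integral measure and then to project it onto the boundary graph $\mtc D$ using the graph calculus of Section~\ref{sec:backround}. First I would start, as in \cite{SDE}, from the elementary identity
\[
0=\int\Df[\phi,\bar\phi]\;
\dervfunc{}{\bar\phi_{\sb}}\Big(\ee^{\Tr(\bar J\phi)+\Tr(\bar\phi J)-N^{D-1}S[\phi,\bar\phi]}\Big)\,,
\]
expressing that the integral of a total derivative over the measure vanishes. Carrying out the derivative produces three contributions: the source $J_{\sb}$, the kinetic term $E_{\sb}\phi_{\sb}$ coming from $S_0$ (whose coefficient $E_{\sb}$ ends up as the overall prefactor once one isolates the correlator), and the vertex piece $N^{D-1}\,\partial\Sint/\partial\bar\phi_{\sb}$. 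Since $\Sint[\phi,\bar\phi]=\lambda\suml_{a=1}^{D}\Tr_{V_a}(\phi,\bar\phi)$ is a sum of pillows, this last piece is \emph{cubic} in the fields and carries the factor $-2\lambda$ of \eqref{eq:SDEs}, the $2$ counting the two $\bar\phi$-legs of a pillow at which the derivative can act. Replacing the surviving fields by source-derivatives then turns the identity into a functional equation for $Z\jj$.

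The second step is to pass to $W=\log Z$ and to extract the graphical content. Because $Z=\ee^{W}$, every source-derivative distributes over $W$ by the exponential rule, so the cubic vertex term separates into a \emph{connected} part (a single higher derivative of $W$) and \emph{factorised} parts (products of two $W$-derivatives); the latter will be the origin of the quadratic, product terms in the statement. Simultaneously, the Ward--Takahashi identity of \cite{DineWard,fullward} resolves one of the contractions of the cubic vertex into the $Y$-term, whose expansion $Y\hp c_{s_c}=\suml_{\B}\mathfrak f\hp c_{\B,s_c}\,\B$ supplies the coefficients $\mathfrak f\hp c_{\B,s_c}$, the colour sum $\suml_{c=1}^{D}$, and the pivotal tadpole $\mathfrak f\hp c_{\varnothing,s_c}$. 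The hypothesis \eqref{eq:kinetic} on the Laplacian is exactly what makes the residual momentum sums collapse into the difference quotients $1/E(\,\cdot\,,\,\cdot\,)$, as in the connected derivation of \cite{SDE}.

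The third and decisive step is to apply the graph derivative $\partial/\partial\mtc D$ in the sense of \eqref{eq:PartialGraphDerivative}. On the left this returns $G_{\mtc D}\hp{2k}(\Xb)$, the symmetry weight $1/|\Autc(\mtc D)|$ being supplied by the coloured Borel transform of Lemma~\ref{thm:Borel} together with the definition \eqref{eq:freeenergy} of the free energy. On the right the genuinely new feature is that $\mtc D=\mtc R\amalg\mtc Q$ is \emph{disconnected}: the factorised parts must be differentiated through the Leibniz rule of Lemma~\ref{thm:Leibniz}, which is precisely what distributes the spectator components $\mtc Q=Q_1^{\alpha_1}\amalg\cdots\amalg Q_n^{\alpha_n}$ over the two factors as $\mtc C\amalg\B\cequal\mtc Q$ and averages them over the orbit $\Omega\in\Autc(\mtc Q)$ via \eqref{eq:groupactionindependenceAut}. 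The distinguished component $\mtc R$, singled out by $\sb=\yb^\beta$, governs the topology of the colour-$c$ reconnection created by the vertex: a generic swap keeps a single connected correlator and produces the edge-swap terms $\partial W/\partial\varsigma_c(\mtc D;\beta,\rho)$ and the diagonal subtraction $\suml_{b_c}$, whereas a swap with $\tau\in\mathrm{Br}\hp{2}(\R;\beta,c)$ bridge-disconnects $\mtc R$ into $\mtc R'\amalg\mtc R''$ and forces the product $H\hp{c,\tau}_{\mtc C,\B}=\langle G_{\mtc R'|\mtc C}\times G_{\mtc R''|\B}\rangle_{\mtc Q}$; the insertion term $I\hp c_{\mtc C,\B}$ emerges in the same way from pairing the $\mathfrak f$-coefficient with a single correlator.

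The main obstacle, and the reason the connected proof of \cite{SDE} does not transcribe directly, is the combinatorial bookkeeping once the Leibniz rule distributes $\mtc Q$ over the two factors. One must check that each unordered factorisation $\mtc C\amalg\B\cequal\mtc Q$ is generated with the correct multiplicity, that the reorderings $\langle\,\balita\,\rangle_{\mtc Q}$ restore the fixed argument order of $G_{\mtc D}$, and that the wreath-product symmetry $\Autc(\mtc Q)=\prod_{i}\Autc(Q_i)\wr\Sym(\alpha_i)$ assembles the pieces into the single $\Autc(\mtc Q)$-average, with the weight $1/|\Autc(\B)|$ in $I\hp c_{\mtc C,\B}$ and no leftover over-counting. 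This is exactly the content that the commutativity of the product on $\mathscr C(\hf)$ and Lemma~\ref{thm:Leibniz} were designed to guarantee, and verifying that these weights combine correctly is where I expect the real work to lie.
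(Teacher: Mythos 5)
Your proposal follows essentially the same route as the paper's proof: you start from the functional-integral identity (equivalently, the equation for $\delta W/\delta \bJ_{\sb}$ established in \cite{fullward}), apply the colour-$c$ Ward--Takahashi identity to produce the $Y$-term and the term-by-term decomposition, and then project onto $\mtc D$ using the graph calculus, with Lemma~\ref{thm:Leibniz} distributing $\mtc Q$ over factorisations $\mtc C\amalg\B\bcequal\mtc Q$ and Lemma~\ref{thm:Borel} handling the automorphism weights. You also correctly locate the origin of every term in \eqref{eq:SDEs} --- swap terms and the diagonal subtraction from the double-derivative contributions, $H$-terms from the product contributions under the bridge condition $\tau\in\mathrm{Br}\hp{2}(\R;\beta,c)$, and $I$-terms from pairing $\mathfrak f$-coefficients with a single correlator --- which is exactly how the paper's computation of the $\mathfrak a_c,\mathfrak b_c,\mathfrak c_c,\mathfrak d_c,\mathfrak f_c$ contributions proceeds.
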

      \allowdisplaybreaks[1]
    
    \begin{proof}
     By definition, 
    \[G_{\mtc D} = \dervpar{ \,W}{\mathcal
        D} =\bigg[\dervfunc{^{ } }{\mathcal
        Q^{ }} \bigg( \dervfunc{\,W}{ \mtc R}\bigg) \bigg] \Bigg |_{J=\bar J=0} \,. \] 
        Spelled out, this means that
    \begin{equation} \label{eq:redundant}
    G_{\mtc D}(\Xb)=\bigg\{\Big[
    \prod_{i=1}^n \dervfunc{^{\alpha_i}}{\mtc Q_1^i\,\delta \mtc Q_2^i\,\cdots \,\delta\mtc Q_{\alpha_i}^i} (\mathbb X)
    \big(\prod\limits_{\alpha =1}^r \fderJ{\xb_0^\alpha}\fderbJ{\yb_0^\alpha} \big) \Big] W\jj \bigg\}\bigg|_{J=\bar J=0} \,.
    \end{equation}
    
    To compute $G_{\mtc D}$, one needs to start then 
    with a functional derivative of $W$ with respect to 
    a source, which we choose to be $\bar J_{\mathbf s}=\bar J_{\yb^\beta}$. 
    The partition function $Z\jj=\exp(W\jj)$ has been shown \cite{fullward} to
    satisfy
     \begin{align} \label{eq:WnachJ_s}
    \fder{ W  }{\bJ_\sb} =
     \frac{1}{E_{\sb}} 
     \Bigg\{ J_\sb - \ee^{-W} \bigg(\dervpar{\Sint(\phi,\bar\phi)}{\bar\phi_\sb}\bigg)\bigg|_{\substack{\phi \to \delta/\delta \bar J \\ \bar\phi \to \delta/\delta  J}} 
    \,\,\ee^{+W} \Bigg\} \,.
    \end{align}
    The colour-$c$-WTI leads to  
    \begin{align} \label{eq:SintPostWard}
    &\bigg(\dervpar{\Sint(\phi,\bar\phi)}{\bar\phi_\sb}\bigg)\bigg|_{\substack{\phi \to \delta/\delta \bar J \\ \bar\phi \to \delta/\delta  J}} Z[J,\bJ]
     \\ &= 2\lambda \sum_{c=1}^D(A_c(\sb) -  B_c(\sb)  + C_c(\sb) + D_c(\sb) +F_c(\sb) )\,,
     \end{align}
    where each of the summands is given by
    \begin{align*}
    A_c(\sb) & = Y\hp c_{s_c} [J,\bJ] 
     \cdot
     \fder{Z[J,\bJ]}{\bJ_\sb} \,, & &
     \\ \nonumber  B_c(\sb) & = \sum_{\mathbf b}
     \frac{ J_{\mathbf b_{\hat c} s_c}}{E(b_c,s_c)}
     \fder{^2Z[J,\bJ]}{\bJ_{\mathbf s_{\hat c}b_c} \delta J_\mathbf b}\,,   
     &  (\mathbf s_{\hat c} b_c &=(s_1\ldots,s_{a-1},b_c,s_{c+1}\ldots,s_D))
    \\
    C_c(\sb) & =  \sum\limits_{b_c}
     \frac{1}{E(b_c,s_c)}   
     \fder{Z[J,\bJ]}{\bJ_{\sb}} \,, & &
     \\ \nonumber  D_c(\sb) & =  \sum\limits_{\mathbf b}
     \frac{\bJ_{\mathbf b}}{E(b_c,s_c)} 
     \frac{\delta^2Z [J,\bJ]}{\delta \bJ_{\sb_{\hat c}b_c} \delta \bJ_{\mathbf b_{\hat c}s_c}} \,,
     &(\mathbf b_{\hat c} s_c &=(b_1\ldots,b_{c-1},s_c,b_{c+1}\ldots,b_D) )
    \\  F_c(\sb) & =     
     \fder{Y\hp c_{s_c} [J,\bJ]}{\bJ_\sb} 
     \cdot
     Z[J,\bJ]\,, &&
    \end{align*}
    for any $c=1,\ldots,D$. 
    In this new derivation, it is convenient to work with 
    \begin{align}
     \ee^{-W }A_c(\sb) & = Y\hp c_{s_c} [J,\bJ] 
     \cdot
     \fder{W[J,\bJ]}{\bJ_\sb} \,,
    \\      \ee^{-W} B_c(\sb) & = \sum_{\mathbf b}
     \frac{ J_{\mathbf b_{\hat c} s_c}}{E(b_c,s_c)}
     \bigg[
     \fder{^2W[J,\bJ]}{\bJ_{\mathbf s_{\hat c}b_c} \delta J_\mathbf b}
     +
      \fder{W[J,\bJ]}{  J_\mathbf b}  \fder{W[J,\bJ]}{\bJ_{\mathbf s_{\hat c}b_c}} \bigg]
     \,,
    \\
     \ee^{-W}  C_c(\sb) & =  \sum\limits_{b_c}
     \frac{1}{E(b_c,s_c)}   
     \fder{W[J,\bJ]}{\bJ_{\sb}} \,, 
     \\
      \ee^{-W } D_c(\sb) & =  \sum\limits_{\mathbf b}
     \frac{\bJ_{\mathbf b}}{E(b_c,s_c)} \bigg[
     \frac{\delta^2W [J,\bJ]}{\delta \bJ_{\sb_{\hat c}b_c} \delta \bJ_{\mathbf b_{\hat c}s_c}} 
     + \frac{\delta W [J,\bJ]}{\delta \bJ_{\sb_{\hat c}b_c}  }
     \frac{\delta W [J,\bJ]}{  \delta \bJ_{\mathbf b_{\hat c}s_c}}
     \bigg]\,,
    \\
    \ee^{-W} F_c(\sb) & =     
     \fder{Y\hp c_{s_c} [J,\bJ]}{\bJ_\sb} \,.
    \end{align}
    Notice the presence of the product of derivatives in the $B_c$ and
    $D_c$ terms.  These are called $B_c\hp{\mtr{prod}}$ and
    $D_c\hp{\mtr{prod}}$, respectively.  The other two terms (which
    already appeared on the SDE's for connected boundaries) containing
    a double derivative are denoted by $B_c\hp{\mtr{dd}}$ and
    $D_c\hp{\mtr{dd}}$, respectively.
    \par
    
    Next, we use the freedom to numerate momenta starting with the
    component $\mathcal R$, 
    \[
    X_0=(\xb^1_0,\ldots,\xb^r_0)
    \in \mathcal{F}_{D,r},\,\,\mtc R_*(X_0)=(\yb^1_0,\ldots,\yb^r_0), \]
    and $\sb= \yb^\beta$. For each item in the list
    \[
    (\mathfrak{m},M)\in\{ (\mathfrak{a},A), (\mathfrak{b},B), (\mathfrak{c},C), (\mathfrak{d},D),(\mathfrak{f},F)\}\,, 
    \]
    we define 
    the following functions:
    \begin{equation}
     \label{eq:goticas}
    \mathfrak m_c{(\Xb;\sb;\mathcal{D})}:= 
    \dervfunc{^{\alpha}}{\mathcal Q^{\alpha}(\mathbb X)}
    \prod\limits_{\substack{\alpha \neq \beta \\ 0\leq \alpha\leq r \\ \nu=1,\ldots,r} }\fder{}{\bJ_{\yb^\alpha_0}} \fder{}{J_{\xb^\nu_0}}   
    \big[ \ee^{-W\jj}M_c(\sb) \big]\bigg|_{J=\bJ=0} \,.
    \end{equation}
    The splitting of $\mathfrak b$ and $\mathfrak d$ into terms $\mathfrak{b}\hp{\mtr{dd}}, \mathfrak{b}\hp{\mtr{prod}},  
    \mathfrak{d}\hp{\mtr{dd}},\mathfrak{d}\hp{\mtr{prod}}$, respectively, still makes sense.
     We now determine all coefficients, beginning with the easiest. \\    
    The $\mathfrak c_c$ and $\mathfrak f_c$ terms 
    are readily computed:
    \begin{align*}
     \mathfrak c_c(\Xb;\sb;\mathcal{D})& = \suml_{b_c} \frac{1}{E(b_c,s_c)} G_\mathcal{D}\hp{2k}(\Xb)\,, \\
     \mathfrak f_c(\Xb;\sb;\mathcal{D})&=\suml_{\hat\pi \in \Autc(\mathcal{D})}\pi^*\mathfrak f _{\mathcal{D}} \hp{c}(\Xb)
     \,.
    \end{align*}
    The term $ \mathfrak c_c$ itself is not finite, but a term arising
    from one of the three functions will render it finite.  The three
    remaining $\mathfrak m_c$-functions involve derivatives of
    products of functionals.  We first observe that the $2r-1$
    derivatives in the sources complete the graph derivative
    $\delta W / \delta \mtc R$, so the $\mathfrak a_c$ yields
    \begin{align}
      \mathfrak a_c(\Xb;\sb;\mathcal{D}) &=  \nonumber
      \dervfunc{^{\alpha}}{\mathcal Q^{\alpha}(\mathbb X)}
    \prod\limits_{\substack{\alpha \neq \beta \\ 0\leq \alpha\leq r \\ \nu=1,\ldots,r} }\fder{}{\bJ_{\yb^\alpha_0}} \fder{}{J_{\xb^\nu_0}}   
     \bigg(Y\hp c_{s_c} [J,\bJ] 
     \cdot
     \fder{W[J,\bJ]}{\bJ_\sb} \bigg)\bigg|_{J=\bar J=0}
      \\
     & =   
      \dervfunc{^{\alpha}}{\mathcal Q^{\alpha}(\mathbb X)} 
     \bigg(Y\hp c_{s_c} [J,\bJ] 
     \cdot
     \fder{W[J,\bJ]}{\mtc R(X_0)} \bigg) \bigg|_{J=\bar J=0}
     \,.
    \end{align}
    Notice that, after evaluation in the sources at 0,
    
    \[\dervpar{}{\A} \cdots \dervpar{}{\A'} \dervpar{}{\mtc R(X_0)} W[J,\bJ]= G_{\mtc R|\mtc A|\cdots |\mtc A'}(X_0,\balita)\]
    holds
    for any boundary graphs $\A,\ldots,\A'$. Using the formula for the graph derivative of 
    products (Lem. \ref{thm:Leibniz}) and subsequently Lemma \ref{thm:Borel}
    one deduces
    \begin{align*}
       \mathfrak a_c(\Xb;\sb;\mathcal{D}) &=  
       \suml_{\substack{   \B,\mtc C \\ \B\amalg \mtc C\bcequal \mtc Q}} 
       \suml_{\Omega\in\Autc(\mathcal{Q})} 
       \suml_{\sigma \in\Autc(\mathcal{R})} \\
       & \hspace{1.7cm}
       \frac{1}{|\Autc(\mtc R \amalg \mtc B)|}\Omega \cdot \big\langle G_{\mtc R|\B} (\sigma (X_0); 
       \balita )\times \mathfrak f_{\mtc C,s_c}\hp c \,\big\rangle_{\mtc Q} \\
       & =\suml_{\substack{   \B,\mtc C \\ \B\amalg \mtc C\bcequal \mtc Q}} 
       \suml_{\Omega\in\Autc(\mathcal{Q})} 
       \frac{1}{|\Autc(\mtc B)|}\Omega \cdot \big\langle G_{\mtc R | \B} (X_0; 
       \balita )\times \mathfrak f_{\mtc C,s_c}\hp c \,\big\rangle_{\mtc Q}  \\
       &=\suml_{\substack{   \B,\mtc C \\ \B\amalg \mtc C\bcequal \mtc Q}} 
       \suml_{\Omega\in\Autc(\mathcal{Q})} \Omega \cdot  I\hp c_{\mathcal C,\B} (\mathbb X)\,.
    \end{align*}
    
    We compute now\footnote{The part of the calculation of the
      contributions of the double derivatives
      $\delta^2 W/\delta J\delta \bar J$ or
      $\delta^2 W/\delta \bar J\delta\bar J$ to $\mtf{b}_c$ and
      $\mathfrak{d}_c$ is shortened, due to the very similar
      derivation provided already in \cite{SDE}.}  the term
    $\mathfrak{d}_c$. We can split $D_c$ into the double derivative
    contribution 
    $D_c\hp{\mtr{dd}}$ and the product of single derivatives
    $D_c\hp{\mtr{prod}}$.  If we decompose
    $\delta^\alpha / \delta \mathcal Q^\alpha$ and the rest of
    derivatives implied in $\mathcal R$ into single functional
    derivatives,
    \begin{align} \nonumber 
    &
    \mathcal{O}(\bar J)+ 
    \prod\limits_{\substack{\alpha \neq \beta \\ \nu=1,\ldots,d} } 
    \fder{}{\bJ_{\yb^\alpha}} \fder{}{J_{\xb^\nu}} \big(\ee^{-W}D_c\hp{\mtr{dd}}(\sb) \big) 
    \\
    & =
     \prod\limits_{\substack{\alpha \neq \beta; \nu}} 
     \fder{}{\bJ_{\yb^\alpha}}\fder{}{J_{\xb^\nu}}\bigg[
     \sum_{\mathbf b}
     \frac{1}{E(b_c,s_c)}
     \bJ_{\mathbf b}
      \frac{\delta^2W [J,\bJ]}{\delta \bJ_{\sb_{\hat c}b_c} \delta \bJ_{\mathbf b_{\hat c}s_c}} 
     \bigg] 
    \\
    & = 
    \suml_{\substack{ \rho=1 \\ \rho\neq \beta}}^d
    \prod\limits_{\substack{ \beta \neq \alpha\neq \rho \\ \nu=1,\ldots, d}} 
    \fder{}{\bJ_{\yb^\alpha}}\fder{}{J_{\xb^\nu}}\bigg[
     \sum_{\mathbf b}
     \frac{ \delta^{\mathbf b}_{\yb^\rho} }{E(b_c,s_c)}
      \frac{\delta^2W [J,\bJ]}{\delta \bJ_{\sb_{\hat c}b_c}
      \delta \bJ_{\mathbf b_{\hat c}s_c}}
     \bigg]\, \nonumber \\
     & = 
    \suml_{\substack{\rho=1 \\ \rho\neq \beta}}^d \,
    \prod\limits_{\substack{\alpha; ( \beta \neq \alpha\neq \rho) \\ \nu=1,\ldots, k}} \fder{}{\bJ_{\yb^\alpha}}\fder{}{J_{\xb^\nu}}\bigg[
     \frac{ 1 }{E(y^\rho_c,s_c)}
      \frac{\delta^2W [J,\bJ]}{\delta \bJ_{\sb_{\hat c}y_c^\rho} \delta \bJ_{\mathbf y_{\hat c}^\rho s_c}}
     \bigg]\, .
     \nonumber
    \end{align}
    This is, after evaluation at $\bar J=J=0$, all the (coloured) graphs obtained 
     from $\mathcal{D}$ (also implying the other connected components)
     after the colour-$c$ swapping at $\sb=\yb^\beta$ and $\yb^\rho$:
    \begin{figure}
    \includegraphics[height=3.0cm]{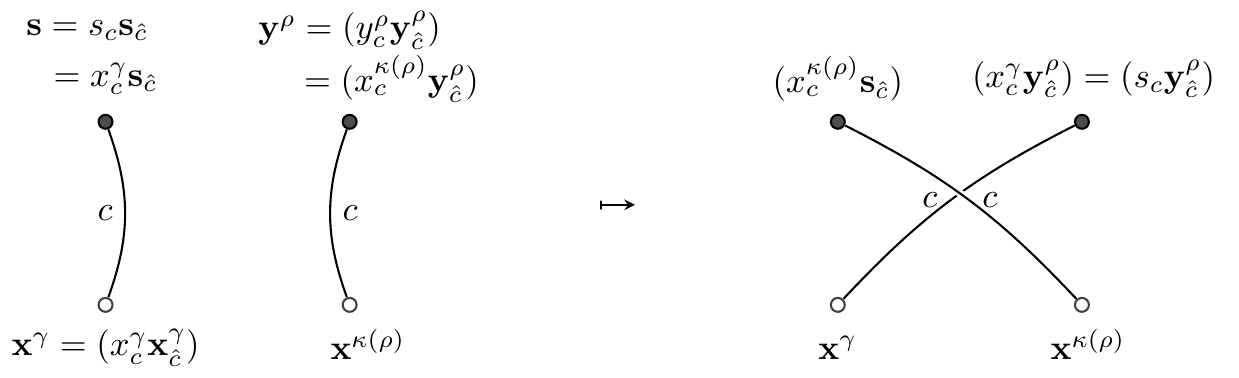}
    \caption{ \label{fig:swap}
    The swapping operation $\varsigma_c$; variables 
    are relevant in the proof}
    \end{figure}
    for $\rho\neq\beta$, i.e. running over black vertices 
    skipping $\bJ_\sb=\bJ_{\yb^\beta}$. Thus the derivatives on $D\hp{\mtr{dd}}_c$
    contribute 
    
    \[\suml_{\substack{\rho\neq \beta \\ \rho, \mtr{any\, black\, vertex}} } 
\frac{1}{  E(y^\rho_c,s_c)}  \dervpar{W[J,J]}{\varsigma_c(\mathcal{D} ;\beta,\rho)(\Xb) } 
    \mbox{ for }\rho\neq \beta.\] 
    \allowdisplaybreaks[1]
  To fully compute $D_c$, we add now the product of derivatives,
  $D_c\hp{\mtr{prod}}$.  Unlike $D_c\hp{\mtr{dd}}$, with
  $D_c\hp{\mtr{prod}}$ one can first see the effect of applying the
  rest of the derivatives of the graph $\mtc R$, that is those with
  momenta $(\xb_0^\alpha)_{\alpha}$ and
  $(\yb^\alpha_0)_{\alpha\neq \beta}$. This is due to the product of
  derivatives of $W$, which forces both factors to be derived with
  respect to momenta in the same graph in order not to vanish. Thus,
     \allowdisplaybreaks[1]
    \begin{align} \nonumber 
     &
    \prod\limits_{\substack{\alpha \neq \beta \\ \nu=1,\ldots,r} } 
    \fder{}{\bJ_{\yb^\alpha_0}} \fder{}{J_{\xb^\nu_0}} \big(\ee^{-W}D_c\hp{\mtr{prod}}(\sb) \big) 
    \,+\mathcal{O}(\bar J)
    \\
    &=
     \prod\limits_{\substack{\alpha \neq \beta ; \nu}} 
     \fder{}{\bJ_{\yb^\alpha_0}}\fder{}{J_{\xb^\nu_0}}\bigg[
     \sum_{\mathbf b}
     \frac{ \bJ_{\mathbf b}}{E(b_c,s_c)}
      \frac{\delta W }{\delta \bJ_{\sb_{\hat c}b_c}}
      \frac{\delta W  }{ \delta \bJ_{\mathbf b_{\hat c}s_c}} 
     \bigg] 
    \\
    & = 
    \suml_{\rho\neq \beta }^r
    \prod\limits_{\substack{\alpha\neq \beta (\alpha\neq \rho)\\ \nu=1,\ldots, r}} \fder{}{\bJ_{\yb_0^\alpha}}\fder{}{J_{\xb^\nu_0}}\bigg[
     \sum_{\mathbf b}
     \frac{ \delta^{\mathbf b}_{\yb^\rho} }{E(b_c,s_c)}
      \frac{\delta W  }{\delta \bJ_{\sb_{\hat c}b_c}}\frac{\delta W}{ \delta \bJ_{\mathbf b_{\hat c}s_c}}
     \bigg]\, \nonumber
     \\
     & = 
    \suml_{\rho\neq \beta }^r
     \frac{1 }{E(y^\rho_c,s_c)}
    \prod\limits_{\substack{\alpha\neq \beta (\alpha\neq \rho)\\ \nu=1,\ldots, r}} \fder{}{\bJ_{\yb_0^\alpha}}\fder{}{J_{\xb^\nu_0}}\bigg[ 
      \frac{\delta W  }{\delta \bJ_{\mathbf y^\beta_{\hat c}y^\rho_c}}
      \frac{\delta W}{ \delta \bJ_{\mathbf y^\rho_{\hat c} y^\beta_c}}
     \bigg]\,, \nonumber  
     \end{align}
        \allowdisplaybreaks[1]
     where in the last step we only used that
     $\sb=\yb^\beta=\yb^\beta_0$.  It is evident that these two
     derivatives in the square bracket form of a colour-$c$ edge swap
     at $\yb_0^\beta$ and $\yb_0^\rho$, but in order not to lead to a
     vanishing term, they also have to lie on a different component of
     a graph (as otherwise a graph derivative would be incomplete).
     It is therefore additionally required that the graph  is disconnected after the
     swapping at $\yb_0^\beta$ and $\yb_0^\rho$, that
     is that there are connected graphs $\mathcal R_1$ and $\mathcal R_2$,
     such that
     \[
     \varsigma_c(\R;\beta,\rho)=\mathcal R_1 \amalg \mathcal R_2
     \qquad\mathrm{if\,and\,only\,if} 
      \qquad \rho \in \mtr{Br}(\R,\beta,c)\,.
     \]
     Therefore
     \begin{align} \nonumber 
     &\dervfunc{^{\alpha}}{\mathcal Q^{\alpha}(\mathbb X)}
    \prod\limits_{\substack{\alpha \neq \beta \\ \nu=1,\ldots,r} } 
    \fder{}{\bJ_{\yb^\alpha_0}} \fder{}{J_{\xb^\nu_0}} \big(\ee^{-W}D_c\hp{\mtr{prod}}(\sb) \big) 
    \,+\mathcal{O}(\bar J)
    \\
    &
    =
    \suml_{\tau \in \mtr{Br}(\R,\beta,c) }
     \frac{1 }{E(y^\tau_c,s_c)}\dervfunc{^{\alpha}}{\mathcal Q^{\alpha}(\mathbb X)}
    \prod\limits_{\substack{\alpha\neq \beta (\alpha\neq \rho)\\ \nu=1,\ldots, r}} \fder{}{\bJ_{\yb_0^\alpha}}\fder{}{J_{\xb^\nu_0}}\bigg[ 
      \frac{\delta W  }{\delta \bJ_{y^\beta_{\hat c}y^\tau_c}}
      \frac{\delta W}{ \delta \bJ_{\mathbf y^\tau_{\hat c}y^\beta_c}}
     \bigg]\,. \nonumber
     \\
     &
     =
    \suml_{\tau \in \mtr{Br}(\R,\beta,c) }
     \frac{1 }{E(y^\tau_c,s_c)}\dervfunc{^{\alpha}}{\mathcal Q^{\alpha}(\mathbb X)}
     \bigg(
     \dervfunc{W}{\mathcal R_1(X_0')} \dervfunc{W}{\mathcal R_2(X_0'')} \bigg)
     \,. \nonumber  
     \end{align}
     At this place we use the multivariable graph calculus Leibniz rule (Lemma \ref{thm:Leibniz}), namely 
      \begin{align} \nonumber 
    & \dervfunc{^{\alpha}}{\mathcal Q^{\alpha}(\mathbb X)}
    \prod\limits_{\substack{\alpha \neq \beta \\ \nu=1,\ldots,r} } 
    \fder{}{\bJ_{\yb^\alpha_0}} \fder{}{J_{\xb^\nu_0}} \big(\ee^{-W}D_c\hp{\mtr{prod}}(\sb) \big) \bigg|_{J=\bar J=0}
    \\
    &=\suml_{\tau \in \mtr{Br}(\R,\beta,c) }
     \frac{1 }{E(y^\tau_c,s_c)}\suml_{\substack{\B,\mathcal C  \\ \B\amalg \mtc C \bcequal \mathcal Q^\alpha }}
    \suml_{ \Omega\in \Autc(\mathcal Q^\alpha)} \nonumber 
    \\ & \hspace{4cm} \nonumber 
    \Omega \cdot  \big[ 
    \langle\, G_{\mtc R_1|\B}(X_0',) \times G_{\mtc R_1|\mathcal C} (X_0'', )\,\rangle_{\mathcal Q^\alpha} \big](\mathbb X)
    \\
    &=
    \suml_{\tau \in \mtr{Br}(\R,\beta,c) }
     \frac{1 }{E(y^\tau_c,s_c)}\suml_{\substack{\B,\mathcal C  \\ \B\amalg \mtc C \bcequal \mathcal Q^\alpha }}
    \suml_{ \Omega\in \Autc(\mathcal Q^\alpha)} 
    (\Omega \cdot H\hp{c,\tau}_{\B,\mtc C})(\mathbb X)\,.\nonumber 
    \end{align}
    In summary, the $\mtf{d}_c$-term is given by
    \begin{align} \label{eq:d_term}
    \mtf{d}_c(\Xb;\sb;\mathcal D)
    &=\suml_{\substack{\rho\neq \beta \\ \rho, \mtr{any\, black\, vertex}} }  \frac{1}{  E(y^\rho_c,s_c)}  \dervpar{W[J,J]}{\varsigma_c(\mathcal{D} ;1,\rho) } (\Xb)
    \\
    &\qquad+
    \suml_{\tau \in \mtr{Br}(\R,\beta,c) }
     \frac{1 }{E(y^\tau_c,s_c)}\suml_{\substack{\B,\mathcal C  \\ \B\amalg \mtc C \bcequal \mathcal Q^\alpha }}
    \suml_{ \Omega\in \Autc(\mathcal Q^\alpha)} 
    (\Omega \cdot H_{\B,\mtc C})(\mathbb X)\,. \nonumber
    \end{align}
    
    As for the derivatives on $B_c(\sb)$, we 
    divide the derivation in two parts. One concerns the 
    double derivative, $ B_c\hp{ \mtr{dd} }$: 
    \begin{align}\label{eq:runningba}
    & \mathcal{O}(J)+\prod\limits_{\substack{\alpha \neq \beta \\ \nu=1,\ldots,d} }
    \fder{}{\bJ_{\yb^\alpha}} 
    \fder{}{J_{\xb^\nu}} B_c\hp{ \mtr{dd} }(\sb) \\ \nonumber 
    & =
     \prod\limits_{\substack{\alpha \neq \beta; \nu}} \fder{}{\bJ_{\yb^\alpha}}\fder{}{J_{\xb^\nu}}\bigg[
     \sum_{\mathbf b}
     \frac{1}{E(b_c,s_c)}
      J_{\mathbf b_{\hat c} s_c}
     \fder{^2  W[J,\bJ]}{\bJ_{\mathbf s_{\hat c}b_c} \delta J_\mathbf b} 
    \bigg]
    \\
    &=
    \nonumber
    \suml_{\theta=1}^d
    \prod\limits_{\substack{\alpha \neq \beta; \nu\neq \theta}} \fder{}{\bJ_{\yb^\alpha}}\fder{}{J_{\xb^\nu}}\bigg[ 
     \suml_{b_c}
     \frac{1}{E(b_c,s_c)}
     \delta^{s_c }_{x_c^\theta}
     \fder{^2 W[J,\bJ]}{\bJ_{\mathbf s_{\hat c}b_c} \delta J_{\xb^\beta_{\hat c}b_c }} 
      \bigg]
      \\ &=
    \prod\limits_{\substack{\alpha \neq \beta; \nu\neq \gamma}} \fder{}{\bJ_{\yb^\alpha}}\fder{}{J_{\xb^\nu}}\bigg[ 
     \suml_{b_c}
     \frac{1}{E(b_c,x_c^\gamma)}
     \fder{^2\,W[J,\bJ]}{\bJ_{\mathbf s_{\hat c}b_c} \delta J_{\xb^\gamma_{\hat c}b_c }} 
      \bigg] \,.  \nonumber 
    \end{align}
    since there is a single vertex
    $\xb^{\gamma}$, $\gamma=\gamma(c)$, with 
    $x_c^\gamma=s_c$, so 
    $\delta_{x_c^\theta}^{s_c}=\delta_{x_c^\theta}^{s_c}\delta_{\theta}^\gamma$.
    The term  
     $\delta^2 W[J,\bJ]/\delta \bJ_{\mathbf s_{\hat c}b_c} \delta J_{\xb^\gamma_{\hat c}b_c } $, 
     is selected by $\delta_{x_c^\beta}^{s_c}$ leads to 
     $\partial Z/\partial \mathcal{D}(\Xb)|_{x^\gamma_c\to b_c}$, after taking all the rest of derivatives, with 
     the single coordinate $x^\gamma_c$ being substituted by (the running) $b_c$.
     But when 
     \[b_c \in \{y_c^1,y_c^2,\ldots,\widehat{y_c^\beta},\dots,y^d_c\}
     \,,\]
     one does not have exactly a `graph derivative', since we are evaluating 
     it not in $\mathcal F_{D,d}$, but in one of its diagonals of colour $c$.
     A direct computation yields then a second contribution to $ \mathfrak  b\hp{\mtr{dd}}_c$ 
     (the third and fourth lines below):
     \begin{align}  
     & \mathfrak b\hp{\mtr{dd}}_c(\Xb;\sb;\mathcal{D}) \label{eq:new_bterm} \\
     & =\suml_{b_c}\frac{1}{E(b_c,x^\gamma_c)}  \nonumber \\
     & \hspace{.8cm} \times
     G_\mathcal{D}\hp{2k}(\xb^1,\ldots,\xb^{\gamma-1}, 
     (x_1^\gamma,\ldots,x_{a-1}^\gamma,b_c,x_{a+1}^\gamma,\ldots x^\gamma_D), 
     \xb^{\gamma+1}, \nonumber
     \ldots,\xb^d)  \\
     & + \suml_{\rho>1} \frac{1}{E(x_c^{\kappa(\rho)},x^\gamma_c)} \nonumber
     \\ & \hspace{.8cm}
     \times \frac{\partial W[J,\bar J]}{\partial \,\varsigma_c(\mathcal{D};1,\rho)}  
     (\xb^1,\ldots, (x_1^\gamma,\ldots,x_{a-1}^\gamma,x_c^{\kappa(\rho)},x_{a+1}^\gamma,\ldots x^\gamma_D), 
     \ldots,\xb^d)  \,. \nonumber
     \end{align} 
        \allowdisplaybreaks[1]
     where $\kappa(\rho)$ is defined in Figure \ref{fig:swap}  (i.e. $x_c^{\kappa(\rho)}=y_c^{\rho}$).
     \par
     The last computation is $\mathfrak b\hp{\mtr{prod}}_c$, 
     \begin{align*}
     & \mathfrak b\hp{\mtr{prod}}_c (\Xb; \sb; \mathcal D)
     \\ 
     & =   
    \dervfunc{^{\alpha}}{\mathcal Q^{\alpha}(\mathbb X)}
    \prod\limits_{\substack{\alpha \neq \beta \\ 0\leq \alpha\leq r \\ \nu=1,\ldots,r} }\fder{}{\bJ_{\yb^\alpha_0}} \fder{}{J_{\xb^\nu_0}}   
     \bigg[\sum_{\mathbf b}
     \frac{ J_{\mathbf b_{\hat c} s_c}}{E(b_c,s_c)}
    \fder{W[J,\bJ]}{  J_\mathbf b}  \fder{W[J,\bJ]}{\bJ_{\mathbf s_{\hat c}b_c}} \bigg]\bigg|_{J=\bJ=0} \,.
     \end{align*}
     This computation is quite similar to the one for the
     $\mathfrak d\hp{\mtr{prod}}_c$-term, presented above, with the
     only difference that the evaluation is not at $\mathbb X$, but at
     $\mathbb X|_{s_c\to y^\tau_c}$.  The less trivial part in that
     derivation is to figure out, which non-zero contributions come
     from
     \begin{equation} 
     \fder{W[J,\bJ]}{  J_{b_c \xb^\gamma_c}}  \fder{W[J,\bJ]}{\bJ_{\mathbf s_{\hat c}b_c}}\,. 
     \end{equation}
    
     Because of the repetition of $b_c$ in both factors, it seems that
     the term vanishes after deriving it. However, $b_c$ runs, and it
     does so also through the particular $c$-coloured entries of the
     black vertices of $\mathcal R$,
     \[
     b_c \in
     \{y_{0,c}^1,y_{0,c}^2,\ldots,\widehat{y_{0,c}^\beta},\dots,y^d_{0,c}\}\,.\]
     Only if we also require that
     $b_c \in \{ y^\tau_c | \tau \in \mtr{Br}(\mathcal R, \beta , c
     )\}$, we guarantee that each one of those factors forms a graph
     derivative.  However, notice that momentum in the white vertex
     $\xb^\gamma=(s_c\xb^\gamma_{\hat c})$ has changed to
     $(y^{\tau}_c\xb^\gamma_{\hat c})$. Thus, one changes $X_0$ into
     $X_0\big|_{s_c\to y^\tau_c}$.  Therefore
     \begin{align*}
     \mathfrak b\hp{\mtr{prod}}_c (\Xb; \sb; \mathcal D)
     & =   \sum\limits_{\tau\in \mtr{Br}(\mathcal R, \beta,c)}
     \frac1{E(y^\tau_{c},s_c)}
    \dervfunc{^{\alpha}}{\mathcal Q^{\alpha}(\mathbb X)}  
     \dervfunc{W}{\mathcal R_1(X'_0)} \dervfunc{W}{\mathcal R_2(X''_0)}
     \bigg|_{\substack{ J=\bJ=0 \\  {s_c\to y^\tau_c}}} \,.
     \end{align*}
     We apply again Lemma \ref{thm:Leibniz} and find that 
      \begin{align*}
     \mathfrak b\hp{\mtr{prod}}_c (\Xb; \sb; \mathcal D)
     =
     \suml_{\tau\in \mathrm{Br}\hp{2}(\R;\beta,c)} \,
    \suml_{\substack{
    (\mathcal C,\B)\in \mathsf G_D \amalg \mathsf G_D
    \\ (\mathcal C\amalg \B) \sim \mathcal Q}}  \suml_{\Omega\in \Autc( \mathcal Q)} 
    \frac{ (\Omega \cdot  H_{\mathcal B,\mathcal C}) (\mtb X)|_{s_c\to y^\tau_c}}{E(y^\tau_c,s_c)} \,.
     \end{align*} 
    Due to eqs. \eqref{eq:SintPostWard}, one has
    \begin{align*}
    \dervpar{W\jj}{\mtc D(\Xb)} & = \prod_i \dervfunc{^{\alpha_i}}{\mtc Q_i(X_{1}^i)\cdots \mtc Q_i(X_{\alpha_i}^i)}  
    \bigg\{\prod\limits_{\substack{\alpha \neq \beta  \\ 
    \nu=1,\ldots,k} } \fder{}{\bJ_{\yb^\alpha}} \fder{}{J_{\xb^\nu}} \Big(
     (-2\lambda  E_\sb\inv)
\\& \qquad\!\!\times \suml_{c=1}^D
     \ee^{-W}\big[A_c(\sb) +  C_c(\sb)  
    + D_c(\sb) + F_c(\sb) -  B_c(\sb) \big]\Big)\bigg\}\!\bigg|_{\substack{\bJ=0 \\ J=0}} \\
    &= \frac{(-2\lambda)}{E_{\sb}}
     \sum_c \Big(\mathfrak{a}_c(\Xb;\sb;\mathcal{D}) +  \mathfrak{c}_c(\Xb;\sb;\mathcal{D})   +
     \mathfrak{d}\hp{\mtr{dd}}_c(\Xb;\sb;\mathcal{D})
      \\
     & \qquad\qquad\qquad\,\,  +\mathfrak{d}\hp{\mtr{prod}}_c(\Xb;\sb;\mathcal{D})   + \mathfrak{f}_c(\Xb;\sb;\mathcal{D})\\
     & \qquad \qquad\qquad \,\,-  \mathfrak{b}_c\hp{\mtr{dd}}(\Xb;\sb;\mathcal{D})
      -\mathfrak{b}\hp{\mtr{prod}}_c(\Xb;\sb;\mathcal{D})\Big) \,.
    \end{align*}
    This is precisely $G_{\mathcal D}(\Xb)$ and the result follows. \qedhere
    \end{proof}
    
    
    \section{Four and six point SDE with disconnected boundary}\label{sec:foursixSDE}
    
    Concrete SDE's for the rank-3 $\phi^4_3$-theory are presented next.
    Recall, the interaction in this case is
    $\lambda(\vuno + \vdos + \vtres)$.  We display some of the equations
    in traditional notation with explicit graphs, which allows to see immediately the
  graph operations. In other equations we use the simplification 
  clarified in Table \ref{table:notation}.

    \begin{table}[h]
    \centering
    \includegraphics[width=8cm]{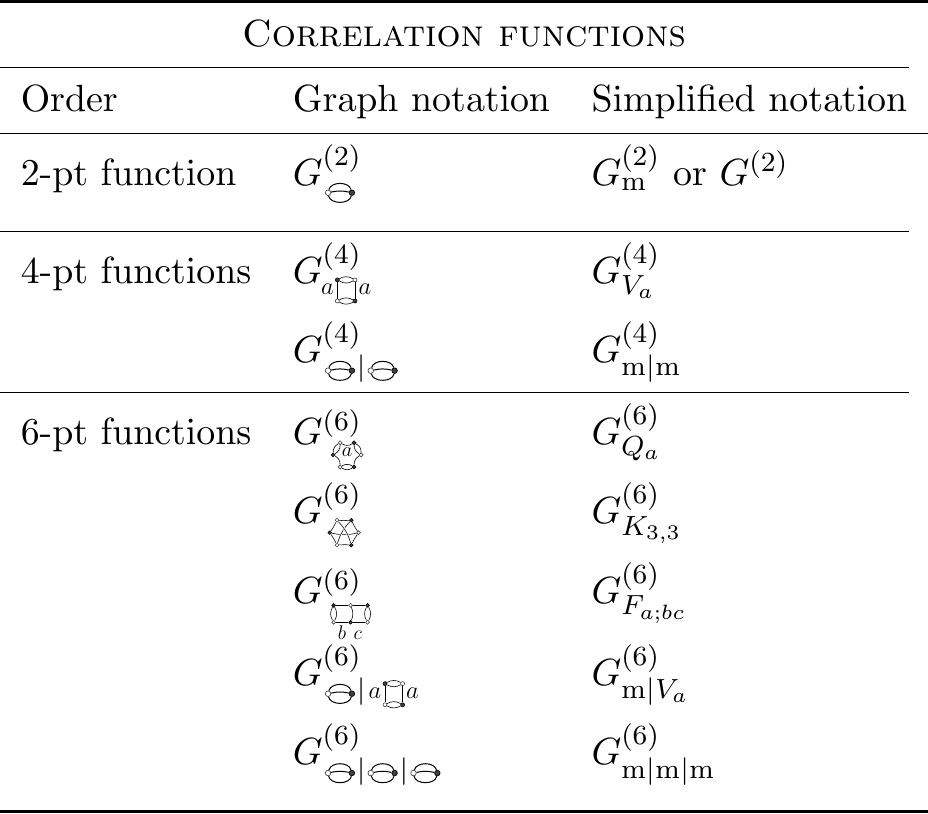}
    \vspace{.3cm}
    \caption{ Two notations for the correlation functions. Here
      $a,b,c$ are colours bound to satisfy
      $\{a,b,c\}=\{1,2,3\}$ \label{table:notation}. The subindex $\mathrm{m}$ 
      originates from `melon'.}
\end{table}
      \subsection{Schwinger-Dyson equations for \protect $G_{\protect \meloncik|\protect \vi }\hp6$ }
    
We single out the terms in the derivation of the SDE
for this case, which is the most complicated presented here.
The rest of the results are obtained in a similar and more
direct way.

  \thispagestyle{plain}
  
   There are two equations, depending on whether one chooses 
    $\sb$ (cf. Theorem \ref{thm:SDEDisconnected} above) as a component of the outgoing momentum in $\meloncik$ or in $\vi$.
      We choose this last vertex to be $V_1=\vuno$, for sake of clarity (since the model 
      is colour-invariant, SDE for the other colours are readily obtained from it).
      
      \begin{itemize}
        \itemB \textit{If $\sb=\xb$ is outgoing momentum of the graph $\meloncito$}. 
In the notation of the theorem, here $\mtc D=\mm |V_1$ being $\mtc R=\mm$,
since $\yb^1=\sb=X_0$ is the momentum of the black vertex of $\mm$.
Therefore $\mathcal Q=V_1$. The remaining momenta $\mathbb X$ equal $(\yb,\zb)$.
 The $I\hp c_{\mathcal{C},\B}$-terms are then computed as follows, from
 any of the factorisations $\mtc C,\B = (\varnothing,V_1)$ or
 $(V_1,\varnothing)$ and read
 \begin{align}
  I\hp c_{\mathcal C,\mathcal B}(\mathbb X)&= \frac{1}{|\Autc (\B)|}\,
  \langle \mathfrak f\hp c_{\mathcal C; s_c} \, \times
  \,G_{\mm| \B}(X_0; \,\balita) \rangle_{V_1} (\mathbb X)\, \\
  & =
  \begin{cases}
  \mathfrak f\hp c _{V_1;s_c}(\yb,\zb) G_{\mm}\hp 2(\xb) & (\mtc C,\B) = (V_1,\varnothing) \,,\\
  \frac12 \mathfrak f\hp c _{\varnothing;s_c} G_{\mm|V_1}\hp 6(\xb,\yb,\zb) & (\mtc C,\B) = (\varnothing, V_1)\,.
  \end{cases}
 \end{align}
On these functions $\hat\sigma\in \Autc(V_1)$ acts exchanging
$\yb $ with $\zb$; just as on the terms
coming from the derivative of the $Y$-term with respect to $(\mm | V_1)$:
 \begin{align}
  \sum_{\hat\sigma\in \Z_2} \sigma \cdot \mathfrak f\hp c _{\mm |V_1 ; x_c}(\xb,\yb,\zb)
  = \mathfrak f_{\mm |V_1 ; x_c}\hp c(\xb,\yb,\zb) +
  \mathfrak f_{\mm |V_1 ; x_c}\hp c(\yb,\xb,\zb)
  \,.
 \end{align}
Next, we obtain the second line in eq. \eqref{eq:SDEs} (the `swap-term').
We have chosen $\yb^1=\sb$ ($\beta=1$), so
\begin{subequations}
 \begin{align}
 \dervpar {W[J,\bJ]}{\varsigma_{c=1}(\mm | V_1; \beta=1,\rho=2,3)}(\xb,\yb,\zb)
 &= G\hp 6_{Q_1}(\xb,\yb,\zb)
 \\
  \dervpar {W[J,\bJ]}{\varsigma_{c=2}(\mm | V_1; \beta=1,\rho=2)}(\xb,\yb,\zb)&=
 G_{F_{3;21}}\hp 6(\xb,\zb,\yb)
  \\
  \dervpar {W[J,\bJ]}{\varsigma_{c=2}(\mm | V_1; \beta=1,\rho=3)}(\xb,\yb,\zb)&=
 G_{F_{3;21}}\hp 6(\xb,\yb,\zb)
  \\
  \dervpar {W[J,\bJ]}{\varsigma_{c=3}(\mm | V_1; \beta=1,\rho=2)}(\xb,\yb,\zb)&=
 G_{F_{2;31}}\hp 6(\xb,\zb,\yb) \\
  \dervpar {W[J,\bJ]}{\varsigma_{c=2}(\mm | V_1; \beta=1,\rho=3)}(\xb,\yb,\zb)&=
 G_{F_{2;31}}\hp 6(\xb,\yb,\zb)
 \end{align}
 \end{subequations}
 In this case the set $\mathrm {Br}(\meloncito,\rho,c)$ is empty, for any values of $\rho$ and $c$. Therefore,
 the sum over the $H$-terms vanishes. For each  $ (\xb,\yb,\zb)\in \mathcal F_{3,3}$, 
 \allowdisplaybreaks[1]
  \begin{align} \label{eq:SDEdifficult}
 & G\hp 6_{\mm| V_1}(\xb,\yb,\zb) \\
 & = \nonumber
 \bigg(\frac{-2\lambda}{E_\xb}\bigg) \times \Bigg\{\suml_{c=1}^3 \mathfrak f_{\mm |V_1 ; x_c}\hp c(\xb,\yb,\zb) +
  \mathfrak f_{\mm |V_1 ; x_c}\hp c(\yb,\xb,\zb) \\
 & + \frac{1}{E(y_1,x_1)} \big[ G\hp 6_{Q_1}(\xb,\yb,\zb) - G\hp 6_{Q_1}(y_1,x_2,x_3,\yb,\zb)\big] \nonumber\\
 & + \frac{1}{E(z_1,x_1)} \big[ G\hp 6_{Q_1}(\xb,\yb,\zb) - G\hp 6_{Q_1}(z_1,x_2,x_3,\yb,\zb)\big] \nonumber\\
  & +\frac1{E(z_2,x_2)}\big[G_{F_{3;21}}(\xb,\zb,\yb) -G_{F_{3;21}}(x_1,z_2,x_3,\zb,\yb) \big] \nonumber\\
  & +\frac1{E(y_2,x_2)}\big[G_{F_{3;21}}(\xb,\yb,\zb) -G_{F_{3;21}}(x_1,y_2,x_3 \yb,\zb) \big] \nonumber\\
  & +\frac1{E(z_3,x_3)}\big[G_{F_{2;31}}(\xb,\zb,\yb) -G_{F_{2;31}}(x_1,x_2,z_3,\zb,\yb) \big]\nonumber \\
  & +\frac1{E(y_3,x_3)}\big[G_{F_{2;31}}(\xb,\yb,\zb) -G_{F_{2;31}}(x_1,x_2,y_3 \yb,\zb) \big] \nonumber\\
  & -\suml_{c=1}^3 \bigg[
  \suml_{b_c} \frac{1}{E(x_c,b_c)}\big[ G\hp 6_{\mm|V_1} (\xb,\yb,\zb) - G\hp 6_{\mm | V_1} (\mathbf x_{\hat c}b_c,\yb,\zb) \big]
  \nonumber\\
  & + \big[( \mathfrak f\hp c _{V_1;s_c}(\zb,\yb)+ \mathfrak f\hp c _{V_1;s_c}(\yb,\zb)\big]\cdot G \hp 2(\xb)
  + \mathfrak f\hp c _{\varnothing;s_c} G_{\mm|V_1}\hp 6(\xb,\yb,\zb) \bigg] \nonumber \Bigg\}  \,.   
 \end{align}

        \itemB \textit{If $\sb=\xb$ is outgoing momentum of the boundary graph $\vuno$}. 
 For $\sb=(x_1,y_2,y_3)$ an outgoing momentum of $V_1$ we derive now the SDE for
 $G\hp 6 _{\mm | V_1}$. Then $\mathcal R=V_1$, $\mtc Q=\mm$,
 by definition. The $I\hp c_{\mtc C,\B}$-coefficients are given by
 \begin{align*}
 I\hp c_{\mm, \varnothing}(\xb,\yb,\zb) & = \mtf f_{\mm; s_c}\hp c (\zb) \cdot G_{V_1}(\xb,\yb)\, , \\
 I\hp c_{\varnothing, \mm}(\xb,\yb,\zb)& = \mtf f_{\varnothing; s_c}\hp c \cdot G_{V_1|\mm}(\xb,\yb,\zb)
 \,.
 \end{align*}
The $H\hp {c,\tau}_{\mtc C,\B}$-terms are computed from
the set
\[
\mtr{Br}(\vunoup,\beta=1,c) =\begin{cases}
 \{2\} & c=1, \qquad \\
\,\,\emptyset & \mbox{otherwise} \,,
\end{cases}
\]
since only the colour-$1$ swap at the vertex $\sb$ with the vertex
with outgoing momentum $\yb^2$ (also in $V_1$) separates $\vuno$.  The
only contributions are therefore
\begin{align*}
H_{\varnothing,\mm}\hp {c=1,\tau=2}(\Xb)&= G\hp 2 (\xb)\cdot G\hp 4 _{\mm|\mm}(\yb,\zb)\,, 
\\
H_{\mm,\varnothing}\hp {c=1,\tau=2}(\Xb)&= G\hp 4_{\mm|\mm} (\xb,\zb) \cdot G\hp 2 (\yb)\,.
\end{align*}
Thus $G\hp 6_{ V_1 |\mm}$ satisfies, for all $(\xb,\yb,\zb)\in \mathcal F_{3,3}$, 
 \begin{align} \label{eq:SDEdifficultdos}
 &\!\!\! G\hp 6_{ V_1 |\mm}(\xb,\yb,\zb)  \\
 &=\bigg(\frac{-2\lambda}{E_\sb}\bigg) \times \Bigg\{\suml_{c=1}^3 
 \mathfrak f_{\mm |V_1 ; s_c}\hp c(\zb,\xb,\yb) + \nonumber
  \mathfrak f_{\mm |V_1 ; s_c}\hp c(\zb,\yb,\xb) \\
 & + \frac{1}{E(y_1,x_1)} \big[ G\hp 6_{\mm |\mm|\mm }(\xb,\yb,\zb) -
 G\hp 6_{\mm |\mm|\mm }(y_1,x_2,x_3,\yb,\zb)\big] \nonumber\\
 & + \frac{1}{E(z_1,x_1)} \big[ G\hp 6_{Q_1}(\xb,\yb,\zb) - G\hp 6_{Q_1}(z_1,x_2,x_3;\yb,\zb)\big] \nonumber\\
& +\frac1{E(x_2,y_2)}\big[G\hp 6_{V_3|\mm}(\xb,\yb,\zb)- G\hp 6_{V_3|\mm}(x_1,y_2,x_3; \yb,\zb) \big] \nonumber\\
  & +\frac1{E(z_2,y_2)}\big[G\hp 6_{F_{3;12}}(\xb,\yb,\zb)- G\hp 6_{F_{3;12}}(\xb;y_1,x_2,y_3;\zb) \big] \nonumber \\
  & +\frac1{E(x_3,y_3)}\big[G\hp 6_{V_2|\mm}(\xb,\yb,\zb)- G\hp 6_{V_{2} | \mm}(\xb; y_1,y_2,x_3 ;\zb) \big]\nonumber \\
  &   +\frac1{E(z_3,y_3)}\big[G\hp 6_{F_{2;13}}(\xb,\yb,\zb)-
  G\hp 6_{F_{2;13}}(\xb;y_1,y_2,z_3;\zb) \big]   \nonumber\\
  & -\suml_{c=1}^3 \bigg[
  \suml_{b_c} \frac{1}{E(s_c,b_c)}\big[ G\hp 6_{V_1|\mm} (\xb,\yb,\zb) 
  - G\hp 6_{  V_1|\mm} ( [\xb,\yb,\zb]|_{s_c\to b_c}) \big] 
  \nonumber\\
  & + \big[( \mathfrak f\hp c _{V_1;s_c}(\zb,\yb)+ \mathfrak f\hp c _{V_1;s_c}(\yb,\zb)\big]\cdot G \hp 2(\xb)
  + \mathfrak f\hp c _{\varnothing;s_c} G_{\mm|V_1}\hp 6(\xb,\yb,\zb) \bigg] \nonumber \Bigg\}
  \,,
 \end{align}
      \end{itemize}
    
  \subsection{Schwinger-Dyson equation for \protect $G_{\protect \meloncik|\protect \meloncik }\hp4$ }
There is only one SDE for the `disconnected-$\partial$' 4-point function.
For every  $(\xb,\yb,\zb)\in \mathcal F_{3,3}$,
    \begin{align}  
    & \nonumber
    \Gcmm (\xb,\yb) \\& = \bigg(\frac{-2\lambda}{E_{\xb}}\bigg)\times \suml_{c=1}^3 
    \bigg\{ 
       \suml_{\mathbf q_{\hat c}} \Gmelon (x_c,\mathbf q_{\hat c}) \cdot \Gcmm(\xb,\yb)
       +
       \Gmelon(\xb) \mathfrak{f}_{\meloncik, x_c}\hp c (\yb)
       \nonumber \\
    & \quad + \suml_{b_c}\frac{1}{E(b_c,x_c)} \big[ 
    \Gcmm(\xb,\yb)
    -
    \Gcmm(b_c\xb_{\hat c},\yb)
    \big] \label{eq:SDEmm}
    \\ 
    & \quad +
    \frac{1}{E(y_c,x_c)} \big[ 
    \Gcc(\xb,\yb)
    -
    \Gcc(b_c\xb_{\hat c},\yb)
    \big]  \nonumber
    \\ & \quad +
    \mathfrak{f}\hp c_{x_c,\meloncik|\meloncik} (\xb,\yb)
    +
    \mathfrak{f}\hp c_{x_c,\meloncik|\meloncik} (\yb,\xb)
     \nonumber
    \bigg\}\,.  
    \end{align}
   Only this equation is not new, but was already (directly) derived 
   in \cite{cinco}, in notation of Table \ref{table:notation}.

      \subsection{Schwinger-Dyson equation for \protect 
      $G_{\protect \meloncik|\protect \meloncik |\protect\meloncik}\hp6$ }
      Similarly, since one can permute the arguments of $G\hp 6_{\mm|\mm|\mm}$, 
      it satisfies only one SDE:
      \begin{align}  
       \bigg(1+\frac{2\lambda}{E_{\xb}}\suml_{c=1}^3 \suml_{\mathbf q_{\hat c}}& \Gmelon (x_c,\mathbf q_{\hat c})\bigg)
      \times \Gsmmm (\xb,\yb,\zb)    \nonumber \\
    & = \bigg(\frac{-2\lambda}{E_\xb}\bigg)
    \suml_{c=1}^3\bigg\{ 
    \mathfrak{f}\hp c_{\meloncik; x_c}(\yb)\Gcmm(\xb,\zb)
    +
    \mathfrak{f}\hp c_{\meloncik; x_c}(\zb)\Gcmm(\xb,\yb)
    \nonumber 
    \\ \label{eq:SDEmmm}
    & \quad  + \Gmelon(\xb) \cdot \mathfrak f\hp c_{\meloncik|\meloncik}(\yb,\zb)
    \\
    \nonumber 
    &  \quad - \suml_{b_c}\frac{1}{E(x_c,b_c)} \big[
    \Gsmmm(\xb,\yb,\zb) -\Gsmmm(b_c\xb_{\hat c}, \yb,\zb)
    \big]
    \\
    \nonumber 
    & \quad  + \frac{1}{E(y_c,x_c)}
    \big[
    \Gsmc(\zb,\xb,\yb)
    -\Gsmc(\zb,y_c\xb_{\hat c},\yb)
    \big]
    \\
    &  \quad + \frac{1}{E(z_c,x_c)}
    \big[
    \Gsmc(\yb,\xb,\zb)
    -\Gsmc(\yb,y_c\xb_{\hat c},\zb)
    \big]
    \nonumber \\
    & \quad + \suml_{\sigma\in\Sym(3)}
    \sigma \cdot \mathfrak f\hp c_{\meloncik |\meloncik | \meloncik} (\xb,\yb,\zb)
     \bigg\}\,. \nonumber
      \end{align}
  We kept the graph notation in order to ease the reading of the graph movements. Equivalently,
           \begin{align}  
       \bigg(1+\frac{2\lambda}{E_{\xb}}\suml_{c=1}^3 \suml_{\mathbf q_{\hat c}}& G\hp2 (x_c,\mathbf q_{\hat c})\bigg)
      \times G\hp6 _{\mm | \mm | \mm }(\xb,\yb,\zb)    \nonumber \\
    & = \bigg(\frac{-2\lambda}{E_\xb}\bigg)
    \suml_{c=1}^3\bigg\{ 
    \mathfrak{f}\hp c_{\mm; x_c}(\yb)G\hp 4_{\mm|\mm}(\xb,\zb)
    +
    \mathfrak{f}\hp c_{\mm; x_c}(\zb)G\hp 4_{\mm|\mm}(\xb,\yb)
    \nonumber 
    \\ \label{eq:SDEmmmNotation}
    & \quad  + \Gmelon(\xb) \cdot \mathfrak f\hp c_{\mm|\mm}(\yb,\zb)
    \\
    \nonumber 
    &  \quad - \suml_{b_c}\frac{1}{E(x_c,b_c)} \big[
     G\hp6 _{\mm | \mm | \mm }(\xb,\yb,\zb) - G\hp6 _{\mm | \mm | \mm }(b_c\xb_{\hat c}, \yb,\zb)
    \big]
    \\
    \nonumber 
    & \quad  + \frac{1}{E(y_c,x_c)}
    \big[
    G\hp 6_{\mm | V_c}(\zb,\xb,\yb)
    -G\hp 6_{\mm | V_c}(\zb,y_c\xb_{\hat c},\yb)
    \big]
    \\
    &  \quad + \frac{1}{E(z_c,x_c)}
    \big[
    G\hp 6_{\mm | V_c}(\yb,\xb,\zb)
    -G\hp 6_{\mm | V_c}(\yb,y_c\xb_{\hat c},\zb)
    \big]
    \nonumber \\
    & \quad + \suml_{\sigma\in\Sym(3)}
    \sigma \cdot \mathfrak f\hp c_{\mm |\mm | \mm} (\xb,\yb,\zb)
     \bigg\} \,.\nonumber
      \end{align}

\section{Towards higher-dimensional Tutte equations}\label{sec:TutteSDE}

We compare the result with matrix models loop 
equations and, their equivalent, Tutte equations that count
discrete surfaces.

\subsection{Tutte equations and matrix models} \label{sec:TutteMatrix}
This material is based on the exposition 
by Eynard \cite[Ch. 1 \& 2]{CountingSurfaces}.
There, three facts are proven:

\begin{enumerate}
 \item The 
generating functions $\mathcal T_{l_1+1,l_2,\ldots,l_\kappa}$
 of connected \textit{maps}\footnote{A \textit{map} is a 
concept slightly more general than a gluing of a
collection of $n_\alpha$ $\alpha$-gons 
by their sides ($\alpha \geq 3$). Maps might have 
 certain number $\kappa$ of marked faces of perimeters $l_1,\ldots,l_\kappa  \geq 0$, . Their 
Euler characteristic is $\chi=$ $\#$vertices $-$ $\#$edges $+$ $\#$unmarked faces $=2-2g-\kappa$, being $g$ 
the \textit{genus} of the map.
The precise concept will not be needed here
and we refer to \cite[Sect. 1.1.2]{CountingSurfaces}
for the definition in terms of permutations.
} with $\kappa$ marked faces (boundaries)
of perimeters $l_1+1,\ldots,l_\kappa$ satisfy
Tutte equations:
\begin{subequations} \label{eq:Tutte}
\begin{align}  \label{eq:Tutte_0}
\mathcal T\hp{0}_{l+1} = {\sum_{j=3}^d \lambda_j \mathcal T\hp{0}_{l -j-1}} + 
\sum_{l_1+l_2 = l - 1  } \mathcal T\hp{0}_{l_1}  \mathcal T\hp{0}_{l_2}   
\end{align}
in the planar, single-boundary ($\kappa=1$) case, while for all $g\geq 1$, setting $K=\{l_2,\ldots,l_\kappa\}$,
\begin{align}
\mathcal T\hp{g}_{l_1+1,K}   \label{eq:Tutte_g}
&=\sum_{\alpha=3}^{d} \lambda_\alpha \mathcal T\hp g_{l_1+\alpha-1,K} +
 \sum_{m=2}^\kappa l_m \mathcal T_{ l_1+l_m-1,K\setminus\{l_m\} } \hp{g } \\ 
 & +
  \sum_{j=0}^{l_1-1}  \bigg[
  \mathcal T\hp{g-1}_{j,\,l_1-1-j,\,K} +
  \sum_{\substack{g_1+g_2=g \\ {J\subset K} }}
  \mathcal T\hp{g_1}_{j,J} \times \mathcal T\hp{g_2}_{l_1-1-j,K\setminus J}
   \bigg]\,. \nonumber 
\end{align}
\end{subequations}
In the last two equations the superindex $h$ in $\mathcal T\hp h_{\ldots}$
means the restriction to genus-$h$ maps. Also the formal variables $\lambda_\alpha$
count the number $n_\alpha$ of $\alpha$-agons in each map. That is, 
the $t^{v(\mathfrak m)}\lambda_3^{n_3(\mathfrak m)}\cdots \lambda_d^{n_d(\mathfrak m)}$-coefficient 
of $\mathcal T\hp h_{l_1,\ldots,l_\kappa}$ counts, modulo automorphisms,
how many genus-$h$ maps $\mathfrak m$ are there, having precisely
$n_\alpha(\mathfrak m)$ $\alpha$-agons  
with marked faces of lengths $l_m>0$  ($\alpha=3,\ldots,d$; $m=1,\ldots,\kappa$). To render this number finite, the variable $t$ counts
the number of vertices $v(\mathfrak m)$ of the map $\mathfrak{ m}$.
These $\mathcal T$-generating functions are not independent but related via
Tutte equations --- and in fact obey a rather universal 
relation known as Topological Recursion. 
\vspace{.2cm}
\item The matrix model 
\begin{equation}
 \label{eq:MM}
 Z=\int_{\mathrm{formal}} \dif M \ee^{-\frac{N}{t}\big[\frac{M^2}{2}-V(M)\big]}, \qquad V(x)=
\sum_{\alpha = 3} ^{d}\lambda_\alpha  \frac{x^\alpha}{\alpha} \,,
\end{equation}
satisfies Migdal's loop equations \cite{Migdal}
\begin{align} \nonumber 
&\sum_{j=0}^{l_1-1 } \langle \Tr (M^j) \Tr (M^{l_1-1-j})\rangle  
+\sum_{m=2}^{\kappa} l_m \langle \Tr(M^{l_1-1+l_j} \prod_{\substack{i=2 \\ i\neq j}} ^\kappa  \Tr(M^{l_i}) \rangle  
\\
=  & \frac{N}{t}\Big\langle \Tr (M^{l_1} \cdot [M-V'(M)] ) \times \Tr (M^{l_2})\cdots 
\Tr(M^{l_\kappa})\Big\rangle \,\,.
\label{eq:SDE_MM} 
\end{align}
These expressions are the SDE for the matrix-valued correlator 
\begin{equation} 
\label{eq:CorrelatorMM}
\langle M^{l_1} \Tr (M^{l_2}) \cdots \Tr (M^{l_\kappa}) \rangle \,, \end{equation}
defined for a function  $\Phi: M_N(\re)\to M_N(\re)$ by 
\[ \langle \Phi(M)\rangle = \frac{\int \dif M \Phi(M)\ee^{ -\frac{N}{t}\Tr(M^2/2 -V(M) )}}{\int \dif M \ee^{ -\frac{N}{t}\Tr(M^2/2 -V(M) )}}\,. \]

\item The crux of the matter is that Tutte Equations \eqref{eq:Tutte}
hold if and only if the SDE \eqref{eq:SDE_MM} for the 
matrix model \eqref{eq:MM} do. The bridge is the following.
For closed maps $\log Z$, the logarithm of the formal integral \eqref{eq:MM},
is well-known to yield the generating function 
of connected closed maps (cf. \cite{Brezin:1977sv}). 
The formal variables $\lambda_\alpha$ in both cases coincide:
if only maps consisting of, say, triangulations and quadrangulations are to be counted,
one sets a 
cubic and quartic interaction in the matrix model (in eq. \eqref{eq:MM} $\lambda_\alpha=\delta_3^\alpha \lambda_3+\delta^4_\alpha \lambda_4$).
For maps with $\kappa$ marked faces, if $x_1,\ldots,x_\kappa$ are formal variables and
one defines $W_\kappa$ by
\begin{equation} \label{eq:Neumann}
W_\kappa(x_1,\ldots,x_\kappa)= \bigg\langle \Tr \frac{1}{x_1-M}\cdots \Tr \frac{1}{x_\kappa-M} \bigg\rangle_{\mtr{connected}},
\end{equation}
in the sense of Neumann series,
then $\mtc T_{l_1\ldots l_\kappa}$ can be recovered by taking residues at $x_i\to\infty$ as follows:
\begin{equation} \quad \,\,\,\,
 \mathcal T_{l_1\ldots l_\kappa}= (-1)^\kappa\displaystyle 
     \mathop{\mathrm{Res}}\displaystyle_{x_1\to \infty } \ldots  \mathrm{Res}_{x_\kappa\to \infty } \big [x^{l_1}_1 \cdots x^{l_\kappa}_\kappa W_\kappa(x_1,\ldots,x_\kappa)\big]\,.
\label{eq:residues}
\end{equation}
Tutte equations \eqref{eq:Tutte} for all genera $g\in \Z_{\geq 0}$ emerge by taking
the small-$t$ expansion\footnote{To see the subtleties between small-$t$ expansion and an $1/N$-expansion 
we refer to \cite[Sect. 1.2.4]{CountingSurfaces}} 
$\mtc T_{l_1+1,\ldots l_\kappa}=\sum_{g} (N/t)^{2-2g-\kappa}\mtc T_{l_1+1,\ldots l_\kappa} \hp g$\,.
     \end{enumerate}

 \vspace{.4cm}

     \subsection{Parallel between Tutte equations and disconnected-$\partial$ SDE of TFT} \label{sec:Parallel}
      
     We contrast now elements appearing in the SDE of Theorem \ref{thm:SDEDisconnected}
     with Tutte equations, as well as the derivation of both sets.
     
     The first parallel, depicted in 
     Table \ref{tab:Boundaries}, concerns the role of the boundaries in each framework. 
       \begin{table}[h] 
    \includegraphics[width=1.0\textwidth]{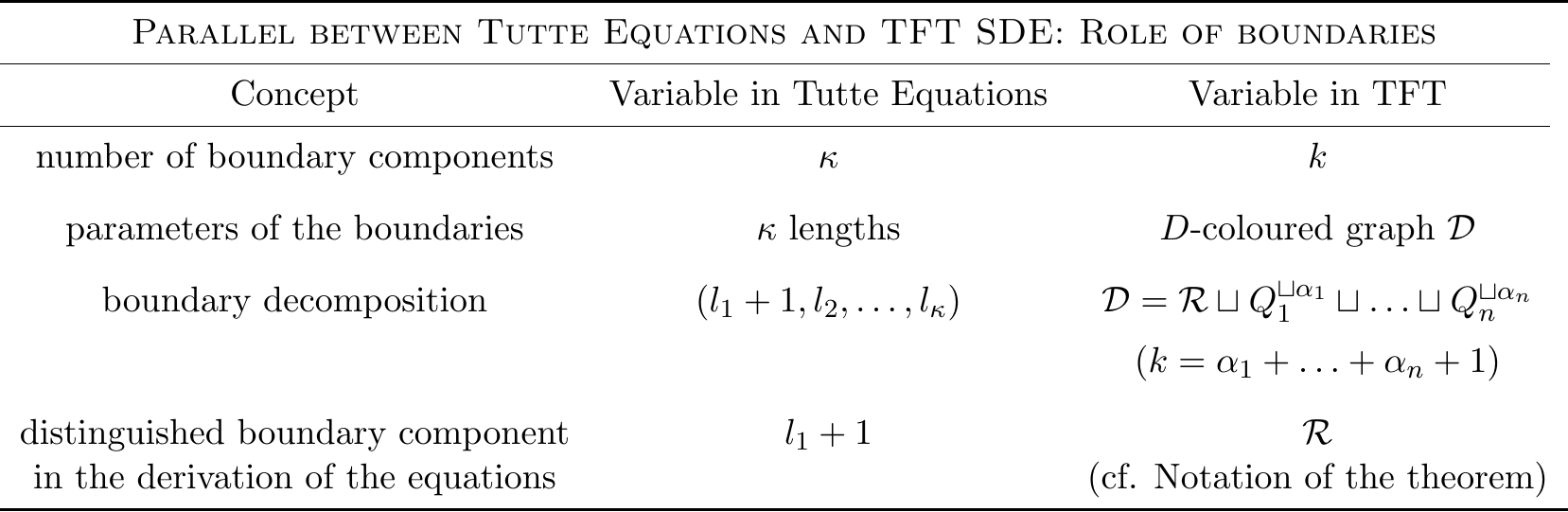} 
    \vspace{.05cm}
    \caption{Boundaries in both frameworks } \label{tab:Boundaries}
     \end{table}
     
     Moreover, we can regard each Tutte equation 
 as a set of operations in the input, namely the perimeters $l_1+1,\ldots,l_\kappa$
 of the marked faces, and distinguish the two cases $\kappa=1$ and $\kappa >1$. 
 The output for the \textit{connected}-$\partial$ Tutte equations \eqref{eq:Tutte_0}
 is illustrated in Table \ref{tab:SingleMM}; 
 similarly, for the disconnected-$\partial$ Tutte equations
 Table \ref{tab:MultiMM} depicts the types of operations 
 on the list of perimeters. \\
 
 A second similitude is the role played by a 
 distinguished boundary in each case. Tutte equations
 can be derived \cite[Ch. 1]{CountingSurfaces} by distinguishing a single boundary length,
 say 
 $l_1+1$, among the list of perimeters
 $l_1+1,\ldots,l_\kappa$, and seeing the possible effects\footnote{Incidentally, this amounts to four possible scenarios 
 that account for each one of the terms in eq.
     \eqref{eq:Tutte_g}:     
     \begin{itemize}
      \item[I.] if the marked edge separates an ordinary face from 
      a marked face: this accounts for the term $ \lambda_\alpha \mathcal T\hp g_{l_1+\alpha-1,K}$
      \item[II.] if the marked edge separates two marked marked faces  $ \to l_m \mathcal T_{ l_1+l_m-1,K\setminus\{l_m\} } \hp{g }$
      \item[III.] if the marked edge bounds twice the same face:
  \begin{itemize}
   \item[III.a] and the marked edge is not a bridge $  \to\mathcal T\hp{g_1}_{j,J} \times \mathcal T\hp{g_2}_{l_1-1-j,K\setminus J}$
   \item[III.b] and the marked edge is a bridge $  \to\mathcal T\hp{g-1}_{j,\,l_1-1-j,\,K}$. 
  \end{itemize}
     \end{itemize}} that the removal of a marked edge of this $(l_1+1)$-agon 
     has at the level of the $\mathcal T$-generating functions.
In our TFT SDE, the role of that marked face is 
taken by the boundary component denoted $\mathcal R$.
This undergoes the transformations explained in Table \ref{tab:MultiTM}.
  
This new operations are an improvement of description given in \cite{SDE}.
Put into the Tutte Equations perspective, the analogue of Table \ref{tab:SingleMM} are precisely those operations on a \textit{connected} $\partial$-graph of \cite{SDE}. 
The contribution of Theorem \ref{thm:SDEDisconnected} is
to give the full set of operations 
described and interpreted in Table \ref{tab:MultiTM}. \par 
One can compare term by term Tutte vs. Schwinger-Dyson equations.
The terminology refers to Tables \ref{tab:MultiMM} and \ref{tab:MultiTM}.
They do not match 1:1, due to the fact that in TFT the number
of operations on boundaries increases.\par
Finally, boundaries ---and therefore SDE--- of TFT are more 
complex than their matrix counterparts due to their
lack of symmetry. As pointed out in Section 
\ref{sec:main}, the choice of vertex
$\sb$ does matter in the sense that different choices
unrelated by non-trivial graph automorphisms 
of the distinguished boundary $\mathcal R$ 
yield different SDE.

   
     \begin{landscape}

     \thispagestyle{empty}
\begin{table}
   \includegraphics[width=1.2\textwidth]{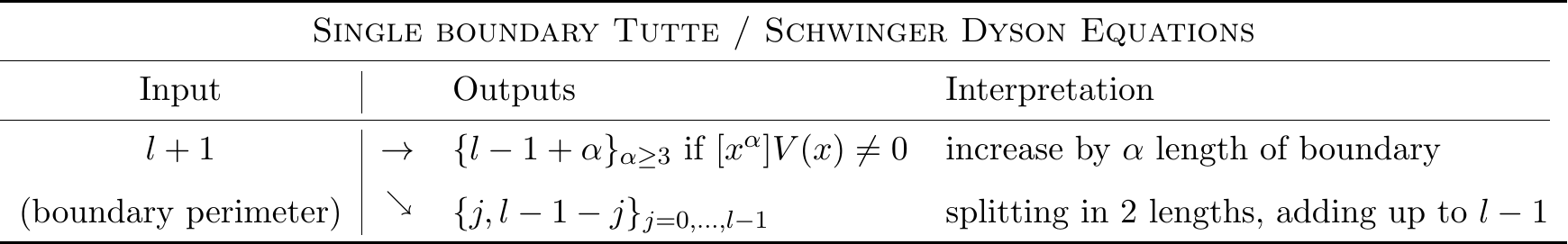}\vspace{.2cm}
    \captionof{table}{Operations on the  boundaries (marked face) for the connected boundary Tutte for random maps / Schwinger-Dyson equations for the matrix model 
    $Z=\int \dif M \ee^{-\frac{N}t(\frac{M^2}2-V(M))}$ with $V(x)=\sum_{\alpha = 3} ^{d}\lambda_\alpha  {x^\alpha}/{\alpha} $}
    \label{tab:SingleMM}
\end{table}


\begin{table}[h]
    \includegraphics[width=1.5\textwidth]{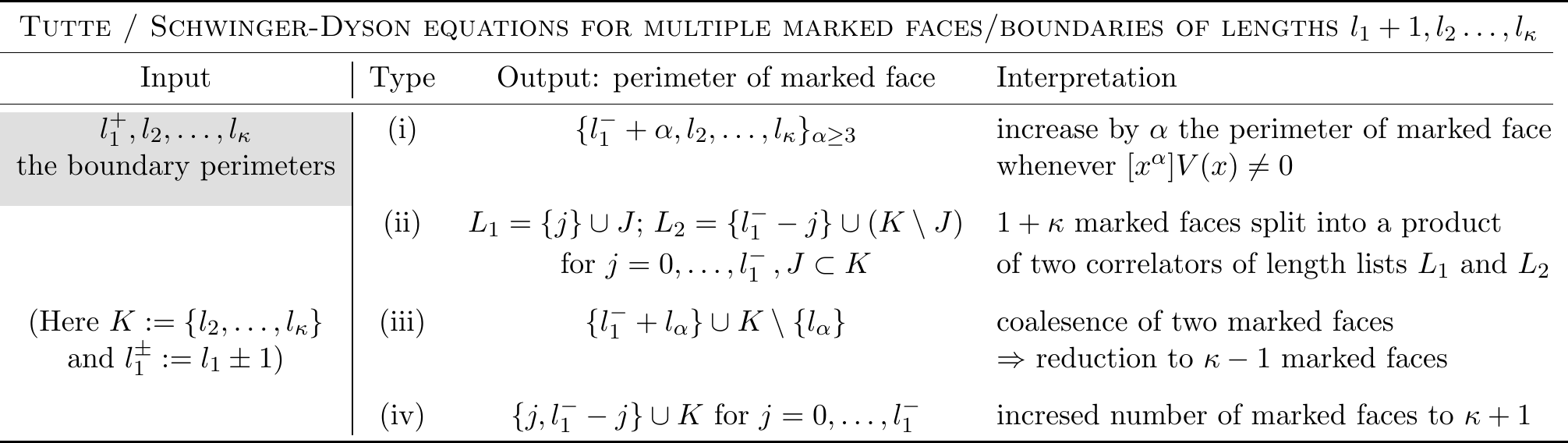}\vspace{.2cm}
    \captionof{table}{Operations on the lengths on the boundaries  of  disconnected boundary Tutte / Schwinger-Dyson equations }
    \label{tab:MultiMM}
    \end{table}
   \end{landscape}
    
\[\phantom \int\]
\vspace{1cm}
         \begin{table}[h]
    \includegraphics[width=.95\textwidth]{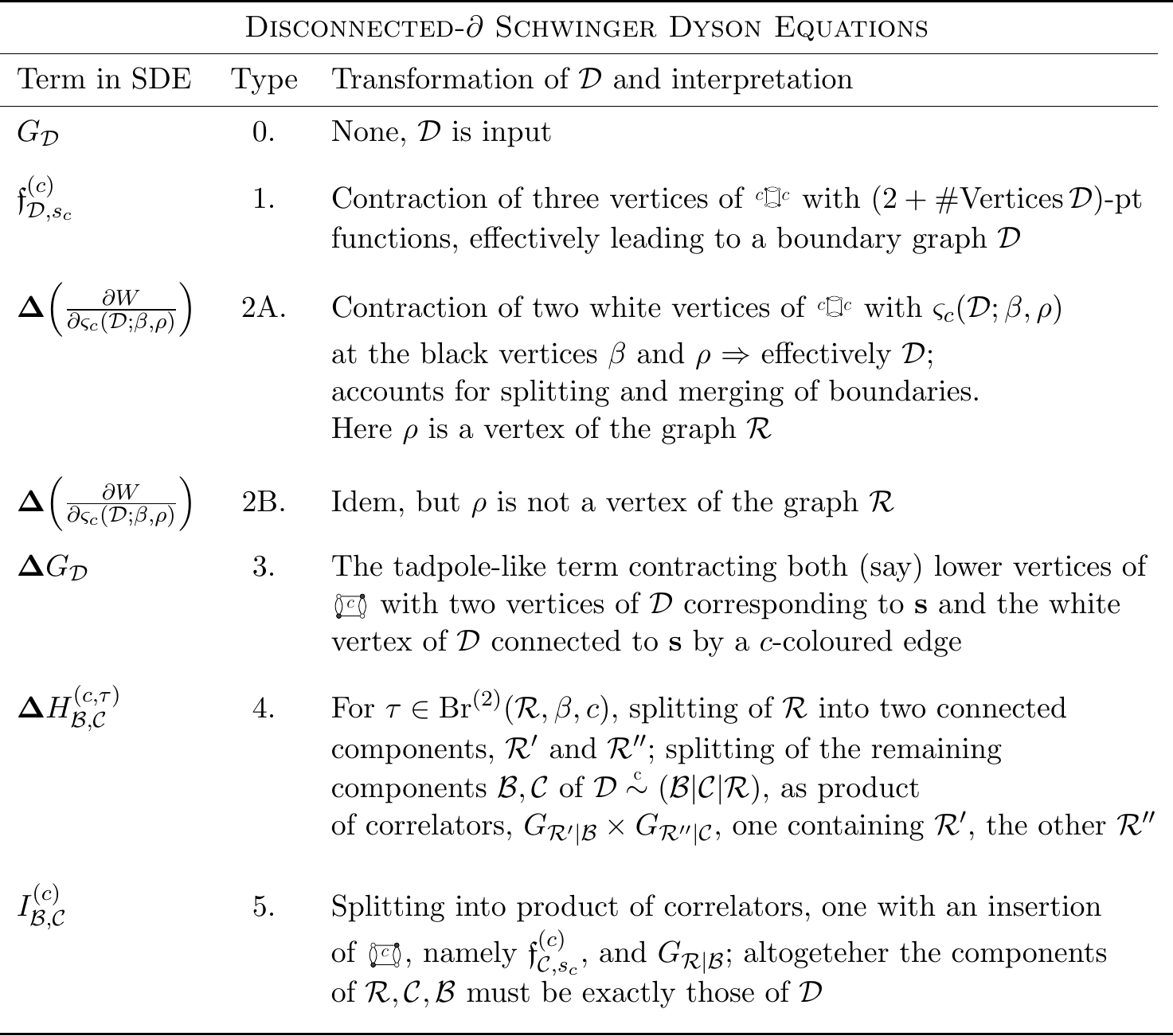} 
    \vspace{.2cm}
    \caption{Tensor model analogue of Table 
   \ref{tab:MultiMM}. The notation in the column `Term in SDE'
   refers each row in the Eq. \eqref{eq:SDEs}. In types 2 through 4,
   $\boldsymbol  \Delta L$ 
   refers to the line of Eq. \eqref{eq:SDEs} containing a term of the \textit{form}  $[L(\mathbf X)-L(\mathbf X|_{s_c\to \xi })]$ 
      that can be easily read off (together with its factors and sums over
      automorphisms and, as appropriate, over momenta)} 
   \label{tab:MultiTM}   
   \end{table}
    
\begin{landscape}
\centering
\thispagestyle{empty}
\begin{table}
\includegraphics[width=1.54\textwidth]{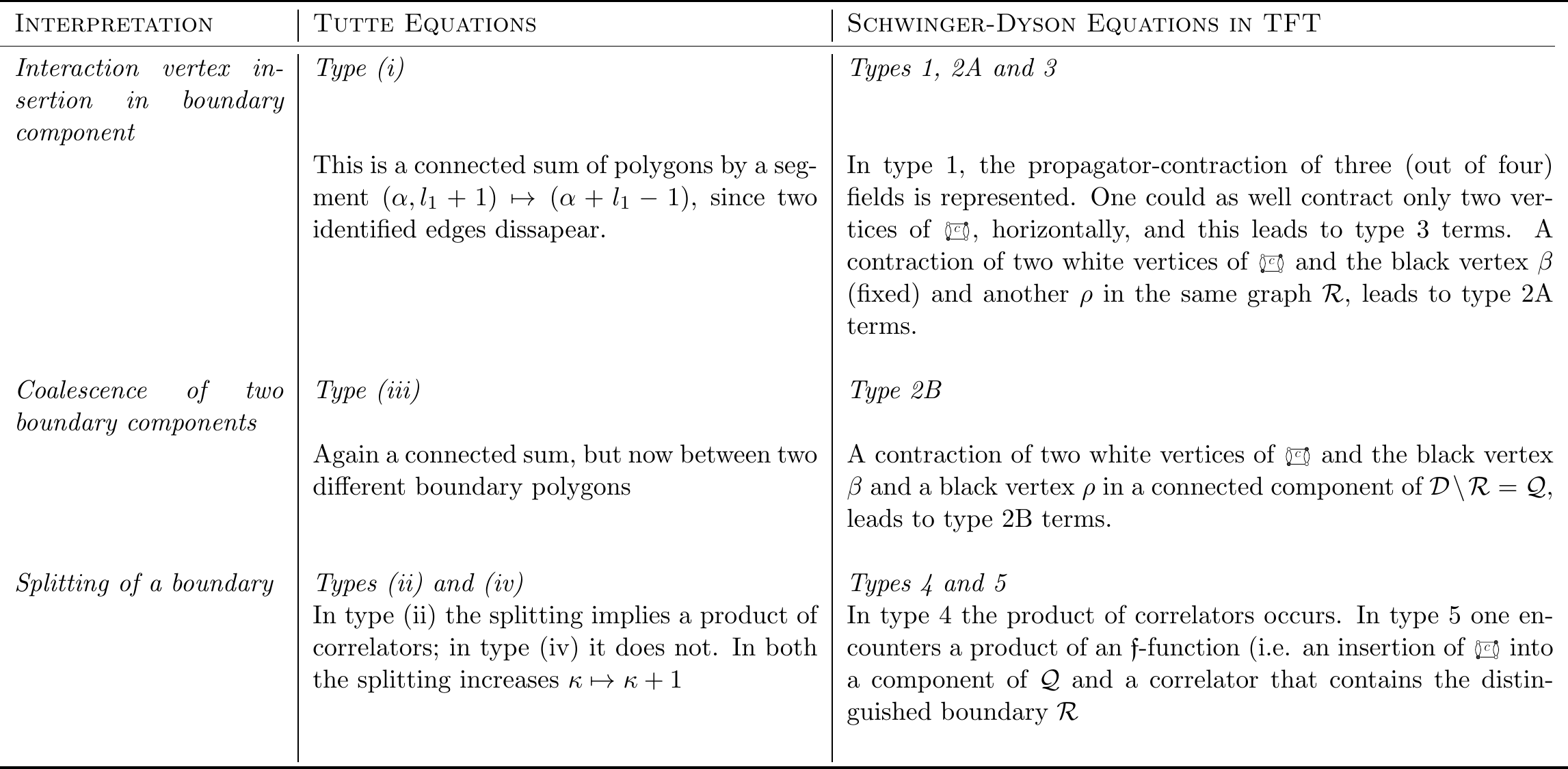}
\caption{Term-by-term comparison}
\end{table}

\end{landscape}
     
     \section{Conclusions and outlook} \label{sec:Outlook}  \thispagestyle{plain}
  
      We introduced the multi-variable graph calculus---a tool to prove 
      a general formula for the disconnected-$\partial$
     Schwinger-Dyson for the most general quartic-melonic TFT in 
     arbitrary rank (Thm \ref{thm:SDEDisconnected}). 
     In their description, a new set of 
     graph operations on an input graph has been exposed 
     in Table \ref{tab:MultiTM}. 
     This list is the tensorial equivalent of the matrix model operations on the boundary graphs of matrix models (cf. Table \ref{tab:MultiMM}).  \par 
     The well-known dictionary between the theory 
     of enumeration of random maps and matrix models 
     allows to pose two uses our SDE could 
     be useful for. The theories of 
     graph-encoded manifolds 
     \cite{pezzana,linsmulazzani,PaolaRita}
     bring tensor models into prominence as a theory of `random higher-dimensional maps'.
     For instance, gluings of octahedra \cite{Octahedra} are studied from the tensor model
     with (non-melonic) interaction vertex 
     \[\Cubo\]
     which is the dual graph-representation to an octahedron.
     The boundary-completeness of the quartic-melonic models \cite[Thm 1]{fullward} further studied 
     here shows a path to the theory of 
     higher-dimensional maps made of gluings of particular 
     triangulations of $D$-balls, in the rank-$D$ case. The natural candidate for higher-dimensional 
     Tutte equations is the set of relations 
    derived from the $1/N$-sectors  of the present SDE  
    that have the same boundary as well as value of Gur\u au degree. This 
    is based on the fact (recalled in Sect. \ref{sec:TutteSDE})
    that Tutte equations hide in the $N^{2-2g-\kappa}$-sectors
     of Migdal's  SDE for matrix models.
   
    The second guess is the existence of 
     a recursion allowing 
     to compute higher-point TFT-correlators 
     from some small number of lower-point ones, 
     analogous to the Topological Recursion (TR) \cite{RNARecTop,BorotNotes, EynardTopologicalRecursion,EynardOrantin,sulkowski} that
     satisfy matrix model correlators.
     What is not speculative is that, 
       assuming the dictionary of last paragraph, such tensorial TR 
       would require the disconnected-$\partial$ SDE, since 
       it is also a recursion in the number of boundaries 
       (presumably also in their Gur\u au's degree).
     The blobbed TR for quartic 
     \textit{tensor models} has already been obtained \cite{RecTopTM}---yet it would be 
     interesting to develop the purely tensorial\footnote{The Bonzom-Dartois TR bases
     on an initial Hubbard-Stratonovich transformation. Vertically cutting the pillow vertices,
     as the rank-$3$ $\vcv$, they ``map'' quartic melonic tensor 
     models to a suitable  intermediate field multi-matrix model, for which the authors develop a TR
     analogous to the introduced by Borot \cite{BorotBlobbed}.} cousin of the TR for \textit{tensor field theory}\footnote{
     The difference between tensor models and tensor field theory is here
     substantial. The former usually focuses on numerical observables $\langle \Tr_\B(\phi,\bar\phi) \rangle\in \C$ and the latter on functions (or distributions) $G_\B: \mathscr V(\B) \to \C$. In contrast, 
     the loop equations, Ward Identities \cite{rainbow} and SDE \cite{GurauSDE}
     are algebraic in the for tensor models, whereas for tensor field theory 
     `loop equations' \cite{fullward,SDE} are integro-differential, as shown also in \cite{cinco} explicitly.  
     }.

     \par 
      The large-$N$ limit of the disconnected-$\partial$ SDE should be analysed in order to access also 
      their physical significance. At leading order, their \textit{melonic approximation} \cite{us} is expected to yield closed equations.
      A significant progress in this direction has been undertaken in \cite{cinco}, whose techniques
      could \cite{Pascalie} be extended to the present disconnected-$\partial$ SDE in a next project.   
      \par
      Heading towards a quantum gravity perspective, 
      objects appearing in the Functional Renormalization Group  
    \cite{BenGeloun:2018ekd} ---or Ward-constrained flows \cite{Lahoche:2019vzy}--- are expected to be described in terms of graph-generated 
    functionals studied here. Together with the boundary $\!\!$-completeness 
     of the quartic-melonic models \cite[Thm 1]{fullward}, 
     this motivates us to study the geometric nature of the flow 
     from a simple quartic model.
     
      \par
      On the purely mathematical side, systems of graph-group actions can be
      extended to Lie groups actions and to calculi in  
      infinitely many graph-variables by using rigorous analytic tools.
      It would be also interesting to
      consider the coefficient functions $u_g$ directly in certain
      algebra of functions. One could dispense with the functions
      $\{\mathscr V(g) \to \C\}$ by using instead directly (non-commutative) algebras.  
   \par
   
   Finally, the (symmetric) monoidal structure on the 
   set of boundary graphs emerges in a natural way. This guides us towards the 
   language of    Topological Quantum Field Theories (TQFT) \cite{Atiyah}.
   Since these boundary graphs triangulate boundary states, an interesting
   program would 
   be to obtain discrete TQFT
   from matrix models and `TQFT with observables' \cite{oeckl} 
   from tensor models, 
   or enframe these in Oeckl's positive boundary formalism (\textit{op. cit.}),
   which also facilitates the gluing-boundary procedure that TQFT provides.
   In the tensor and matrix models case, the gluing of boundaries
   should be implemented as an operation $\wedge_\A$
   on two correlation functions sharing a boundary state $\A$,
   $G_{\A|\B|\cdots|\mtc C'}\wedge_\A G_{\A|\B'|\cdots|\mtc C'}$,
   which should be related to $G_{\B|\B'|\cdots |\mtc C|\mtc C'}$
   due to their geometrical interpretation. \\

     \subsection*{Acknowledgements}
       The author thanks Raimar Wulkenhaar and Adrian Tanas\u a  for hospitality;
     and Romain Pascalie for helpful hints and carefully reading the draft 
     (any error is the author's responsibility). Thanks to the Faculty of Physics, Astronomy and Applied Computer Science,
      Jagiellonian University (Cracow, Poland), where part of this article was written, for hospitality. 
     The author acknowledges the Short-Term Scientific Mission program of the COST Action MP 1405 
     for this mobility opportunity.  \par      
      This research was funded by the Deutsche Forschungsgemeinschaft, 
     SFB 878 (Mathematical Institute of the University of M\"unster, Germany).
     Subsequently it was carried out at the Institute of Theoretical Physics, University of Warsaw and has been supported by the TEAM programme of the Foundation
for Polish Science co-financed by the European Union under the
European Regional Development Fund (POIR.04.04.00-00-5C55/17-00).

    \appendix
    \section{The first coefficients of the $Y$-term} \label{sec:App}
    For completeness, we give the first coefficients of the $Y$-term,
    keeping in mind the notation simplification (Table
    \ref{table:notation}).  The computation of these functions is
    presented in detail in \cite{fullward}. As before, 
    the set equality $\{a,b,c\}=\{1,2,3\}$ holds.

           \allowdisplaybreaks[1]
     
\begin{subequations}
\begin{align}
\mathfrak f_{\mathrm{m}; s_a}\hp{a}(\mathbf{x}) & =
G\hp{4}_{V_a} (\mathbf{x},s_a,x_b,x_c)  \\
&+ 
\sum\limits _{c \neq a }\sum\limits_{q_b\in\mathbb Z}G\hp{4}_{V_c}
(\mathbf{x};s_a,q_b,x_c) +
\sum\limits_{q_b,q_c} G_{\mathrm{m}|\mathrm{m}} \hp 4(\mathbf{x}; s_a,q_b,q_c),  \nonumber  \\ 
\mathfrak f_{ V_a; s_a}\hp{a}(\mathbf{x},\mathbf{y}) &  \nonumber
=\frac13 \Big( G_{Q_a}\hp 6(s_a,x_b,x_c,\mathbf{x},\mathbf{y})+
{\vphantom{\sum\limits_q}} \mbox{cyclic perm. in $(s_a,x_b,x_c), \xb $ and $ \yb$} \Big) 
\\&  + \frac13 \Big(
G_{K_{3,3}}\hp 6 (s_a,x_b,y_c;\mathbf{x},\mathbf{y} ) +
\mbox{cyclic perm.} \Big)\nonumber {\phantom{\sum\limits_q}}   \nonumber \\
&+ \sum\limits_{q_b}G_{F_{b;ac}}\hp 6 (\mathbf{x};\mathbf{y};s_a,q_b,y_c) + \displaystyle
\sum\limits_{q_c}G_{F_{c;ab}} \hp 6 (\mathbf{x};\mathbf{y};s_a,q_c,y_b)\\&
+\frac12 \sum\limits_{q_b,q_c}G_{\mathrm{m}| V_a}\hp 6 
(s_a,q_b,q_c;\mathbf{x};\mathbf{y}),  \nonumber   \\
\mathfrak f_{ V_b; s_a}\hp{a}(\mathbf{x},\mathbf{y}) & = 
\frac13\Big(\sum\limits_{q_b} G_{Q_b}\hp 6(s_a,q_b,y_c;\mathbf{x},\mathbf{y}) + \mbox{cyclic perm.} \Big)+ G_{F_{c;ab}} \hp 6 (s_a,y_b,x_c ;\mathbf{x};\mathbf{y}) \nonumber\\
&+ G_{F_{c;ab}} \hp 6 (\mathbf{x} ;s_a,x_b,x_c ;\mathbf{y})
+\sum\limits_{q_b}G_{F_{a;bc}} \hp 6 (\mathbf{x};\mathbf{y};s_a,q_b,y_c) 
\\ & 
-\displaystyle\frac12 \sum\limits_{q_b,q_c}G_{\mathrm{m}| V_b}\hp6 (s_a,q_b,q_c;  \mathbf{x};\mathbf{y}),  \nonumber \\
\mathfrak{f}_{\mathrm{m}|\mathrm{m}; s_a}^{(a)}
(\mathbf{x},\mathbf{y})  &
=  \Big( \sum\limits_{q_b,q_c}
G^{(6)}_{\mathrm{m}|\mathrm{m}|\mathrm{m}}\left(s_a,q_b,q_c,\mathbf{x},\mathbf{y}\right) + \text{cyclic perm.} \Big) \nonumber \\
&+ G^{(6)}_{\mathrm{m}|V_a}\left(\mathbf{x},s_a,y_b,y_c,\mathbf{y}\right) + \sum \limits_{q_c} G^{(6)}_{\mathrm{m}|V_b}\left(\mathbf{x},s_a,y_b,q_c,\mathbf{y}\right) 
\\ & 
+ \sum \limits_{q_b} G^{(6)}_{\mathrm{m}|V_c}\left(\mathbf{x},s_a,q_b,y_c,\mathbf{y}\right)
+ \sum \limits_{q_c} G^{(6)}_{\mathrm{m}|V_b}\left(\mathbf{x},\mathbf{y},s_a,y_b,q_c\right)
\nonumber \\
& + \sum \limits_{q_b} G^{(6)}_{\mathrm{m}|V_c}\left(\mathbf{x},\mathbf{y},s_a,q_b,y_c\right) + G^{(6)}_{\mathrm{m}|V_a}\left(\mathbf{x},\mathbf{y},s_a,y_b,y_c\right) \nonumber \\ 
& + G^{(6)}_{F_{a;bc}}\left(\mathbf{x},s_a,x_b,y_c,\mathbf{y}\right) .   \nonumber
\end{align}
     \end{subequations}
       
    

\end{document}